\newtheorem{thm}{Theorem}[section]
\newtheorem{assumption}{Assumption}[section]
\newtheorem{lemma}{Lemma}[section]
\renewcommand{\thesection}{\arabic{section}}
\newcommand{\rar}{\rightarrow}
\newcommand{\E}{\mathbb{E}}
\renewcommand{\P}{\mathbb{P}}
\newcommand{\mcalK}{\mathcal{K}}
\newcommand{\reals}{\mathbb{R}}
\newcommand{\mcalXbar}{\overline{\mathcal{X}}_1}
\newcommand{\eps}{\varepsilon}
\newcommand{\indep}{\perp\!\!\!\perp}
\begin{document}
\title{Estimation of Conditional Average Treatment Effects with
High-Dimensional Data\thanks{We are grateful to Songnian Chen, Edward Kennedy, Whitney Newey, Ryo Okui, Yoon-Jae Whang and participants of California Econometrics Conference (Irvine), Cemmap-WISE joint workshop, AMES 2019, for valuable comments.}}

\author{Qingliang Fan\thanks{Department of Economics, The Chinese University of Hong Kong. E-mail: michaelqfan@gmail.com. Fan acknowledges financial support from National Natural Science Foundation of China Grant No. 71671149.} \and
Yu-Chin Hsu\thanks{Institute of Economics, Academia Sinica, Taiwan. Department of
Finance, National Central University and Department of Economics, University Chengchi University. E-mail: ychsu@econ.sinica.edu.tw. Yu-Chin Hsu gratefully acknowledges research support from the Ministry of Science and Technology of Taiwan(107-2410-H-001-034-MY3) and the Career Development Award of Academia Sinica, Taiwan.}
\and Robert P. Lieli\thanks{Department of Economics, Central European University,
Budapest.  E-mail: lielir@ceu.edu.} \and
Yichong Zhang\thanks{School of Economics, Singapore Management University. E-mail: yczhang@smu.edu.sg. Zhang acknowledges financial support from the Singapore Ministry of Education Tier 2 grant under grant MOE2018-T2-2-169 and the Lee Kong Chian Fellowship.}}

\maketitle

\begin{abstract}
Given the unconfoundedness assumption, we propose new nonparametric estimators for the reduced dimensional conditional average treatment effect (CATE) function. In the first stage, the nuisance functions necessary for identifying CATE are estimated by machine learning methods, allowing the number of covariates to be comparable to or larger than the sample size.  
The second stage consists of a low-dimensional local linear regression, reducing CATE to a function of the covariate(s) of interest. We consider two variants of the estimator depending on whether the nuisance functions are estimated over the full sample or over a hold-out sample. Building on Belloni at al.\ (2017) and Chernozhukov et al.\ (2018), we derive functional limit theory for the estimators and provide an easy-to-implement procedure for uniform inference based on the multiplier bootstrap. The empirical application revisits the effect of maternal smoking on a baby's birth weight as a function of the mother's age.

\noindent \textbf{Keywords:} Heterogeneous treatment effects; High-dimensional data; Uniform confidence band.

\noindent \textbf{JEL codes:} C14, C21, C55
\end{abstract}

\section{Introduction}

In settings with individual-level treatment effect heterogeneity, the unconfoundedness assumption theoretically permits identification and consistent estimation of the conditional average treatment effect (CATE) for all possible values of the set of covariates $X$ used in adjusting for selection bias. One way to think about these covariates is that they are ex-ante predictors of an individual's potential outcomes with and without treatment, and hence are highly correlated with the treatment participation decision as well. Unconfoundedness states that the econometrician observes all relevant predictors  so that conditional on $X$, the treatment takeup decision is no longer statistically related to the potential outcomes.\footnote{This condition was formalized by \cite{RR83}; since then, unconfoundedness (or ``selection on observables'' or ``conditional independence'') has become one of the standard paradigms for modeling selection effects. See, e.g., \cite{IW09} for further discussion.} Nevertheless, in many situations the individual deciding on treatment participation is likely to have access to private signals about their potential outcomes. Relying on the unconfoundedness assumption amounts to hoping that a set of publicly observed characteristics can still proxy for the information content of these signals. Therefore, the unconfoundedness assumption is more plausible in applications in which $X$ is a rich, detailed set of covariates, i.e., the dimension of $X$ is high.

While CATE as a function of $X$ provides a detailed characterization of treatment effect heterogeneity across observable subpopulations, this information is very hard to analyze and convey if $X$ is high-dimensional. Of course, one could examine slices of this function along some component(s) $X_1$ of $X$ while holding the other components $X_{-1}$ of $X$ constant. Nevertheless, how CATE varies as a function of $X_1$ will generally depend on the level at which $X_{-1}$ is held constant, requiring the examination of (infinitely) many different slices. For this reason, instead of holding the variables in $X_{-1}$ constant, \cite{AHL15} (henceforth AHL) suggest integrating them out with respect to the conditional distribution of $X_{-1}$ given $X_1$ or, in practice, a smoothed estimate of this distribution. This gives rise to a reduced dimensional CATE function that is easier to present and interpret.\footnote{If all covariates $X$ are integrated out, one obtains an estimator of ATE as in \cite{H98} or \cite{HIR03}.}

In this paper we propose two-step estimators of the reduced dimensional CATE function where in the first step the required high-dimensional nuisance regressions are conducted by machine learning methods designed specifically to handle such problems, while the second integration step is implemented by a traditional local linear nonparametric regression.\footnote{This step assumes that $X_1$ is a continuous variable, which is the technically challenging and interesting case.} We derive the statistical properties of two variants of the estimator. In the first case, the first step (nuisance function estimation) and the second step (local linear regression) are both implemented over the full sample of available observations. In the second case, the available sample is split into parts, and the first step is implemented in one subsample while the second step is done in the complement sample. The roles of the subsamples are then rotated and the results are averaged. This is the ``cross-fitting'' approach to machine-learning-aided causal inference advocated by \cite{CC18}. The first approach is used by \cite{BCFH13} in estimating unconditional treatment effects.

In proposing and studying these estimators, we contribute to two recent strands of the econometrics literature. First, we advance the currently available flexible methods for the estimation of reduced dimensional CATE functions due to AHL and \cite{LOW16} (henceforth LOW). Second, we make technical contributions to the recent literature that employs machine learning methods in tackling the prediction component of causal inference problems; see, e.g., \cite{BCH14b,BCH14a}, \cite{BCFH13}, \cite{CC17}, \cite{CC18}. Taking a broader perspective, our paper is also related to a large statistics literature on regular estimation and the use of orthogonal (doubly robust) moment conditions.

Regarding the first set of papers, AHL use an inverse probability weighted conditional moment of the data to identify CATE. They consider both kernel-based and parametric estimation of the propensity score in the first step, and derive the asymptotic distribution of the estimated CATE function evaluated at a fixed point $x_1$ in the support of $X_1$. LOW advance these results in two respects: their estimator is based on a Neyman-orthogonal moment condition and they also provide a method for uniform inference about the CATE function as a whole (rather than point by point). While LOW only use parametric models to estimate the nuisance functions involved in the moment condition, orthogonality lends their estimator a ``double robustness'' property: either the model for the propensity score or the models for the conditional means of the potential outcomes are allowed to be misspecified (but not both).

The CATE estimator proposed here is based on the same orthogonal moment condition as in LOW, but the required nuisance functions are estimated by machine learning methods, which allow for data-driven flexible functional forms as well as a (very) high-dimensional set of covariates. Neyman orthogonality is crucial in ensuring that the proposed CATE estimators are robust to the regularization bias inherent in the first stage, making post-selection inference possible. As the asymptotic theory is derived from high-level assumptions, there are a number of applicable first-stage estimation methods in practice, such as a random forest or $\ell_1$-penalized lasso or post-lasso. In this paper we use lasso estimation as the leading example.

In light of the discussion of the unconfoundedness assumption above, replacing the parametric estimators in LOW with machine learning methods greatly enhances the applicability and empirical relevance of flexible CATE estimation. At the same time, the asymptotic theory remains tractable: we provide methods for pointwise as well as uniform inference about the CATE function under both the full sample and sample-splitting implementation schemes. The uniform methods utilize the multiplier bootstrap, while pointwise inference can be based either on the bootstrap or the analytic results.

Turning to the literature on machine learning in treatment effect estimation, we build primarily on  \cite{BCFH13} for the full-sample method and \cite{CC17,CC18} for the split-sample method, while providing the necessary extension of the theory to account for the use of local linear regression in second step. In these papers the parameter of interest is identified by the restriction that the unconditional expectation of a ``score function'' evaluated at the true parameter value (and the true nuisance functions) is zero. By contrast, the identifying restriction in our case is that the \emph{conditional} expectation of the same score function is zero. Hence, our estimation procedure does not simply consist of substituting in the estimated nuisance functions and setting the sample average score to zero; instead, the score function will enter a local linear regression with kernel weight $\mathcal{K}((X_{1i}-x_1)/h)/h$ on each observation, 
where $h$ denotes a smoothing parameter (bandwidth).

The key high-level assumptions we employ in deriving our asymptotic results involve bounding the $L_\infty$ norm of the difference between the true and estimated nuisance functions, and the $L_2$ norm of the same difference multiplied by the kernel. The rates at which these error bounds are required to converge to zero are closely linked to the rate at which the bandwidth sequence converges to zero. From a purely technical standpoint, incorporating the bandwidth conditions into the high-level norm bounds in the full-sample as well as the cross-fitting case is a central contribution of the paper. Similarly to AHL and LOW, the resulting convergence rate of the CATE estimators is $\sqrt{Nh^d}$, where $N$ is the sample size and $d=dim(X_1)$. 

In addition to the error bounds, the full-sample estimator also requires controlling the complexity (entropy) of the function space in which the nuisance functions take values. In the case of lasso estimation, this can be accomplished by restricting how fast the number of covariates and the sparsity indices associated with the nuisance functions are allowed to  increase with the sample size. These conditions are more stringent than in the case of estimating ATE.

There are several papers in the broader statistics literature that have considered estimation problems related to ours \citep{R04, L13, Luedtke2016a, Luedtke2016b, NW17, L19}.\footnote{We thank Edward Kennedy and an anonymous referee for these references.} \cite{NW17}, in particular, estimate the full-dimensional CATE function in a data-rich environment using penalized regression, and establish the quasi-oracle error bounds for their estimator. While we also use a high-dimensional set of covariates and machine learning methods to deal with selection into treatment, the ultimate parameter of interest, being a function of a low-dimensional subset of the covariates, is then targeted by a traditional nonparametric estimator. We also complement \cite{NW17} by establishing both pointwise and uniform inference procedures. In a related paper, \cite{ZL19} consider the local constant estimation of CATE in the high-dimensional setting but only provide pointwise asymptotic results.

Another closely related paper, \cite{CS19}, proposes an approach to CATE estimation that also includes a dimension-reduction step. There are, however, substantial technical differences between their paper and ours. First, the traditional nonparametric estimator used by  \cite{CS19} in the second stage is series regression rather than local linear regression. Second, they only consider the cross-fitting approach and do not address the problem of estimating both the nuisance functions and the target function on the full sample. Third, we also provide a reasonably detailed discussion of the primitive conditions under which lasso estimation fulfills the high-level conditions posited in the paper, while \cite{CS19} restrict attention to high-level analysis.

Finally, our estimation method based on doubly robust moments is tied to the classic literature on regular estimation and semiparametric efficiency \citep{B83, P90, B93, N94, V00}. As mentioned above, in a parametric setting, estimation of CATE based on doubly robust moments is consistent as long as either the treatment assignment process or the outcome processes is correctly specified. If both processes are nonparametrically estimated, the method can achieve a faster convergence rate than the nuisance estimators employed. The use of doubly robust methods for causal inference has also been considered by \cite{RR95}, \cite{H98}, \cite{VR03}, \cite{HIR03}, \cite{V06}, \cite{f07}, \cite{t07}, \cite{VR11}, \cite{BCFH13}, \cite{F15}, \cite{K17}, \cite{R17}, \cite{WA18}, and \cite{suz19}, among others.

In addition to providing theoretical results, we study and illustrate our methods through Monte Carlo simulations. The proposed estimators perform well in terms of bias, MSE, and coverage rates. In general, we find that the cross-fitting estimator has somewhat better finite sample properties than the full sample estimator, and thus we suggest using the cross-fitting estimator in empirical studies with reasonably large sample sizes.

Our application uses vital statistics data from North Carolina to estimate the effect of a (first-time) mother's smoking during pregnancy on the baby's birth weight as a function of the mother's age. Despite a number of previous analyses, the application is well worth revisiting with the help of machine learning methods, as there are a large number of covariates describing the mother's characteristics and events during pregnancy, and the specification of the propensity score is known to have a substantial impact on the results (see AHL, Section 4.2). Our results provide some corroborating evidence that the negative effect of smoking on birth weight becomes more detrimental with age. This pattern is less prevalent than some of the results reported in AHL but stronger than that found by LOW.

The rest of the paper proceeds as follows. In Section 2 we describe the formal setup and the estimators. Section 3 states and discusses the assumptions underlying the first-order asymptotic theory and provides the main results. Section 4 describes how to conduct uniform inference using the multiplier bootstrap. The application is presented in Section 5, while Section 6 concludes. An online supplement contains additional empirical studies, the Monte Carlo exercise as well as detailed proofs of the theoretical results.

\section{The formal framework, identification and the estimators}
\label{sec: estimators}

Population units are characterized by a random vector $(D, Y(1),Y(0), X)$, where $D\in\{0,1\}$ indicates the receipt of a binary treatment, $Y(1)$ and $Y(0)$ are the potential outcomes with and without the treatment, respectively, and $X$ is a vector of pre-treatment covariates. The observed variables are given by the vector $W=(D, Y, X)$, where $Y=DY(1)+(1-D)Y(0)$. The distribution of $(D, Y(1),Y(0), X)$, and hence $W$, is induced by an underlying probability measure $\P$; parameter values computed under $\P$ will be denoted by the subscript ``0'' and represent the true values of these parameters. The expectation operator corresponding to $\P$ is denoted by $\E$, but we also use the linear functional notation $\P f:=\int f(w)d\P=\E[f(W)]$.

In order to accommodate high-dimensional data, we follow the conceptual considerations in \cite{F15} and treat the DGP (the measure $\P$) as dependent on the sample size $N$, allowing, in particular, the dimension of $X$ to grow with $N$.\footnote{This implies that the nuisance functions $\mu_0(j,X)$, $j=0,1$ and $\pi_0(X)$, to be defined below, may generally depend on $N$ as well.} This has two practical interpretations. First, the number of raw controls can already be comparable to the sample size or, second, $X$ may be composed of a large dictionary of sieve bases derived from a fixed dimensional vector $X^*$ through suitable transformations (e.g., powers and interactions). Thus, the high dimensionality of $X$ can also stem from the desire to provide a flexible approximation to the required nuisance functions. We explicitly allow for the use of lasso-type methods in the first stage that select a smaller subset of terms from the dictionary to approximate these functions.

To ease the already heavy notational burden in the paper, the dependence of the DGP on the sample size is left implicit throughout, but is of course accounted for in the theoretical analysis. Most arguments in the paper are based on concentration inequalities, which are non-asymptotic in nature. In our Assumption \ref{assn: 1st stage full} below, we also take into account the fact that, as the dimension of $X$ grows, the complexity of the first-stage estimator will generally diverge, which can affect the rate of convergence of our second-stage estimator. Furthermore, we establish the uniform inference results using the multiplier bootstrap based on the strong approximation theory developed by \cite{CCK14}, which does not require the existence of an asymptotic distribution.

Given a $d$-dimensional subvector $X_1\subset X$ composed of continuous variables, the reduced dimensional CATE function is defined as
\[
\tau_0(x_1)=CATE(x_1)=\E[Y(1)-Y(0)|X_1=x_1].\footnote{The most relevant case in practice is $d=1$ or perhaps $d=2$, otherwise the motivating properties of the reduced dimensional CATE function (interpretabilty and presentability) are lost. As we will see below, under a fourth-moment condition on $Y$, the general theory requires $d\le 3$. There are no restrictions on $d$ for bounded outcomes.}
\]

The identification of $\tau_0(x_1)$ from the joint distribution of $W$ is facilitated by the unconfoundedness assumption along with some technical conditions:
\begin{assumption}\label{assn: uc}
The distribution $\P$ satisfies:
\begin{itemize}
    \item [(i)] (Unconfoundedness) $(Y(1), Y(0))\perp D \big | X$.
    \item [(ii)] (Moments) $\E\big[|Y(j)|^q\big]<\infty$, $j=0,1$ and $q\ge 4$. 
    \item [(iii)] (Propensity score) Let $\pi_0(x)=\P(D=1|X=x)$. There exists some constant $\underline{C}>0$ such that $\P(\underline{C}\le \pi_0(X)\le 1-\underline{C})=1$.

\end{itemize}
\end{assumption}

Assumption \ref{assn: uc}(i) is the standard unconfoundedness condition. Although we are interested in CATE for a low-dimensional subset $X_1$ of the covariates, we still use the full vector of $X$ to address selection into treatment. Allowing for $X$ to be high-dimensional makes it more plausible to have conditional independence between the potential outcomes and the treatment indicator.  Assumption \ref{assn: uc}(ii) is a usual sufficient condition for the estimation of standard errors. Assumption \ref{assn: uc}(iii) is the overlapping support condition commonly assumed in the literature. We also need it to establish that our CATE$(x_1)$ estimator converges at the usual nonparametric rate. 

Let $\mu_0(j,x)=\E[Y|X=x, D=j]$, $j=0,1$. It follows immediately from Assumption~\ref{thm: id} that $E[Y(j)|X_1=x_1]=E[\mu_0(j,X)|X_1=x_1]$, and hence $\tau_0(x_1)$ is identified as
$
\tau_0(x_1)=\E[\mu_0(1,X)-\mu_0(0,X)|X_1=x_1].
$

We now state a less obvious but more robust result based on a Neyman-orthogonal moment condition. Given any probability measure satisfying Assumption \ref{assn: uc}, let $\tau(\cdot)$, $\mu(1,\cdot)$, $\mu(0,\cdot)$, $\pi(\cdot)$ denote the functions corresponding to $\tau_0(\cdot)$, $\mu_0(1,\cdot)$, $\mu_0(0,\cdot)$, $\pi_0(\cdot)$, respectively. Let $\eta=(\pi(\cdot),\mu(1,\cdot),\mu(0,\cdot))$ represent the infinite dimensional nuisance parameters needed to identify CATE, and define
\[
\psi(W;\eta)=\frac{D(Y-\mu(1,X))}{\pi(X)}+\mu(1,X)-\frac{(1-D)(Y-\mu(0,X))}{1-\pi(X)}-\mu(0,X).
\]
The following theorem gives a moment condition that is (at least approximately) satisfied at $(\tau_0, \eta)$ even when $\eta$ deviates from $\eta_0$.

\begin{thm}\label{thm: id} (i) Under Assumption \ref{assn: uc},
\begin{eqnarray*}
&&\E\left[\frac{D(Y-\mu_0(1,X))}{\pi_0(X)}+\mu_0(1,X)\Big|X_1=x_1\right] = \E[Y(1)\mid X_1=x_1]\\
&&\E\left[\frac{(1-D)(Y-\mu_0(0,X))}{1-\pi_0(X)}+\mu_0(0,X)\Big|X_1=x_1\right] = \E[Y(0)\mid X_1=x_1]
\end{eqnarray*}
for all $x_1$ in the support of $X_1$.

(ii)  $\E[\psi(W;\eta_0)-\tau_0(X_1)|X_1=x_1]=0$ by part (i), and this moment equation satisfies the Neyman-orthogonality condition
\begin{equation}\label{eqn: Neyman}
\partial_r \E\big[\psi\big(W;\eta_0+r(\eta-\eta_0)\big)-\tau_0(X_1)\big|X_1=x_1\big]\big|_{r=0}=0.
\end{equation}
\end{thm}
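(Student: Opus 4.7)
My plan is to establish part (i) directly via the law of iterated expectations (conditioning on the full vector $X$, which is coarser than conditioning on $X_1$ only in one direction since $\sigma(X_1)\subset \sigma(X)$), deduce the first assertion of part (ii) by subtraction, and verify the Neyman-orthogonality condition in (\ref{eqn: Neyman}) by a direct calculation that exploits the tower property once more.

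For part (i), I would first note that under Assumption \ref{assn: uc}(i), $\mu_0(1,X)=\E[Y\mid X,D=1]=\E[Y(1)\mid X,D=1]=\E[Y(1)\mid X]$, and similarly $\mu_0(0,X)=\E[Y(0)\mid X]$. Next I would compute, using Assumption \ref{assn: uc}(iii) to justify dividing by $\pi_0(X)$,
\[
\E\!\left[\frac{D(Y-\mu_0(1,X))}{\pi_0(X)}\,\Big|\,X\right]=\frac{1}{\pi_0(X)}\bigl(\E[DY\mid X]-\mu_0(1,X)\pi_0(X)\bigr)=\E[Y(1)\mid X]-\mu_0(1,X)=0,
\]
since $\E[DY\mid X]=\pi_0(X)\,\E[Y(1)\mid X,D=1]=\pi_0(X)\E[Y(1)\mid X]$ by unconfoundedness. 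Adding back $\mu_0(1,X)$ and then conditioning on $X_1$ via the tower property gives $\E[\mu_0(1,X)\mid X_1=x_1]=\E[Y(1)\mid X_1=x_1]$. The second equation in part (i) follows by the symmetric argument with $D$ replaced by $1-D$ and $\pi_0$ replaced by $1-\pi_0$. Subtracting the two equations then establishes the displayed identity in part (ii).

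For the Neyman orthogonality in (\ref{eqn: Neyman}), write $\eta_r=\eta_0+r(\eta-\eta_0)$ and set $\Delta\pi(X)=\pi(X)-\pi_0(X)$, $\Delta\mu(j,X)=\mu(j,X)-\mu_0(j,X)$. Differentiating $\psi(W;\eta_r)$ term by term (quotient rule on the two ratios, trivial on the additive pieces) and evaluating at $r=0$ yields
\[
\partial_r\psi(W;\eta_r)\big|_{r=0}=\Bigl(1-\frac{D}{\pi_0(X)}\Bigr)\Delta\mu(1,X)-\frac{D\,(Y-\mu_0(1,X))}{\pi_0(X)^2}\Delta\pi(X)
\]
\[
\qquad\qquad-\Bigl(1-\frac{1-D}{1-\pi_0(X)}\Bigr)\Delta\mu(0,X)-\frac{(1-D)(Y-\mu_0(0,X))}{(1-\pi_0(X))^2}\Delta\pi(X).
\]
I would then take $\E[\,\cdot\,\mid X_1=x_1]$ by first conditioning on $X$ via the tower property. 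Because $\E[D\mid X]=\pi_0(X)$, the factor $1-D/\pi_0(X)$ has conditional mean zero given $X$, killing the $\Delta\mu(1,X)$ piece (and analogously the $\Delta\mu(0,X)$ piece). For the $\Delta\pi(X)$ pieces, unconfoundedness gives $\E[D(Y-\mu_0(1,X))\mid X]=\pi_0(X)\bigl(\E[Y(1)\mid X]-\mu_0(1,X)\bigr)=0$, and similarly $\E[(1-D)(Y-\mu_0(0,X))\mid X]=0$. Hence $\E[\partial_r\psi(W;\eta_r)\mid X]\big|_{r=0}=0$, and another application of the tower property delivers (\ref{eqn: Neyman}) (and legitimizes pulling the derivative inside the expectation, which is why $\tau_0(X_1)$ being $r$-free is convenient).

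The computations are routine; the only genuine obstacle is purely organizational, namely tracking the six partial-derivative terms and their signs, and in recognizing that the full-strength unconfoundedness \textbf{conditional on $X$} (rather than on $X_1$) is exactly what makes every term vanish after one application of the tower property. To be rigorous I would also briefly invoke Assumption \ref{assn: uc}(ii)--(iii) to justify interchanging differentiation and expectation via a dominated-convergence argument, using that $\pi_r$ stays uniformly bounded away from $\{0,1\}$ for $r$ in a neighborhood of $0$ under the mild auxiliary requirement that $\pi$ lie in an overlapping-support class.
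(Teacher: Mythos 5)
Your proof is correct and follows the standard argument that the paper relies on (the supplement does not spell out a separate proof of this identification theorem): iterated expectations through the full covariate vector $X$ combined with unconfoundedness for part (i), and a direct Gateaux-derivative computation whose terms all vanish after conditioning on $X$ for part (ii); your six derivative terms and the justification for interchanging $\partial_r$ and $\E$ are all accurate. One trivial wording slip: conditioning on $X$ is \emph{finer} (not coarser) than conditioning on $X_1$, though the containment $\sigma(X_1)\subset\sigma(X)$ you write and the tower-property application are correct.
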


\paragraph{Remarks:}

\begin{itemize}

    \item [1.] Assumption \ref{assn: uc}(iii) is not necessary for Theorem~\ref{thm: id}; a weaker moment condition such as $\E[1/\pi^2_0(X)]<\infty$ would suffice. Nevertheless, the overlap condition stated under Assumption \ref{assn: uc}(iii) is indispensable for subsequent results concerned with the asymptotic distribution of our CATE estimators. Similarly, for identification only, the fourth moment condition in  Assumption~\ref{assn: uc}(ii) could be replaced by a second moment condition.

    \item [2.] If $\eta=(\pi_0, \mu(0,\cdot), \mu(1,\cdot))$ or $\eta=(\pi, \mu_0(0,\cdot), \mu_0(1,\cdot))$, i.e., $\eta$ deviates from $\eta_0$ along one set of coordinates at a time, then $\E\big[\psi\big(W;\eta_0+r(\eta-\eta_0)\big)-\tau_0(X_1)\big|X_1=x_1\big]=0$ for any value of $r$, which of course implies (\ref{eqn: Neyman}). This is the ``double robustness property'' emphasized by LOW; it implies that if $\pi(\cdot)$ and $(\mu(0,\cdot), \mu(1,\cdot))$ are parametric models for $\pi_0(\cdot)$ and $(\mu_0(0,\cdot), \mu_0(1,\cdot))$, respectively, and one of these models is misspecified, then one can still consistently estimate $\tau_0(x_1)$ based on the moment condition $\E\big[\psi(W;\eta)-\tau_0(X_1)\big|X_1=x_1\big]=0$.

\end{itemize}

The following assumption describes the properties and use of the sample data:

\begin{assumption}\label{assn: sample}
	\begin{itemize}
		\item [(i)] The observed data consist of $N$ independent and identically distributed (i.i.d.) random vectors $\{W_i\}_{i=1}^N=\{(D_i,Y_i,X_i)\}_{i=1}^N$with the same distribution as the population distribution of $W$.
		
		\item [(ii)] Let $K$ be a (small) positive integer, and (for simplicity) suppose that $n=N/K$ is also an integer. Let $I_1,\ldots,I_K$ be a random partition of the index set $I=\{1,\ldots, N\}$ so that $\# I_k=n$ for $k=1,\ldots K$.
		
	\end{itemize}
	
\end{assumption}

We now propose two versions of the CATE estimator, depending on whether the first-stage approximation to $\eta_0$ and the second-stage local linear regression targeting $\tau_0$ take place over the same sample or not.
\begin{itemize}
\item {\bf The full-sample estimator:}\\
Let $\hat\eta(I)=(\hat\mu(0,\cdot; I), \hat\mu(1,\cdot; I), \hat \pi(\cdot; I))$, where $\hat\mu(0,\cdot; I)$, $\hat\mu(1,\cdot; I)$ and  $\hat \pi(\cdot; I)$ are
estimators of $\mu_0(0,\cdot)$, $\mu_0(1,\cdot)$ and  $\pi_0(\cdot)$ respectively, over the full sample $I$. Furthermore, let
$\mcalK$ be a $d$-dimensional product kernel, $h$ be a smoothing parameter (bandwidth), and $\mcalK_h\left(u\right)  = \mcalK\left(\frac{u}{h}\right)$.\footnote{To be specific, let $k_j(\cdot)$ be a one-dimensional kernel, then a $d$-dimensional product kernel $\mathcal{K}$ where bandwidth $h$ is defined as $\mathcal{K}_h(u)=\Pi_{j=1}^d k_j(u_j/h)$.  More generally, we can allow $h$ to be different for each $j$ such that $\mathcal{K}_h(u)=\Pi_{j=1}^d k_j(u_j/h_j)$ given that $h_j$'s are of the same order. For notational simplicity, we focus on the first case in the theory.} The second stage of the full-sample estimator $\hat{\tau}(x_1)$ is obtained as the intercept of the local linear regression
\begin{align}
\label{full sample 2nd stage}
(\hat{\tau}(x_1),\hat{\beta}(x_1)) = \arg\min_{a,b} \sum_{i \in I} \left[\psi(W_i,\hat\eta(I)) - a - (X_{1i} - x_1)'b \right]^2\mcalK_h\left(X_{1i}-x_1\right). 
\end{align}

\item {\bf The $K$-fold cross-fitting estimator:}\\
For each $k=1,\ldots, K$, let
$\hat\eta(I_k^c)=(\hat\mu(0,\cdot;I_k^c), \hat\mu(1,\cdot; I_k^c), \hat \pi(\cdot; I_k^c))$, where $\hat\mu(0,\cdot; I_k^c)$, $\hat\mu(1,\cdot; I_k^c)$ and  $\hat \pi(\cdot; I_k^c)$ are
estimators for $\mu_0(0,\cdot)$, $\mu_0(1,\cdot)$ and  $\pi_0(\cdot)$ respectively, constructed over the subsample $I_k^c=I\setminus I_k$.
The second stage of the $K$-fold cross-fitting estimator consists of $K$ nonparametric regressions over the samples $I_1,\ldots, I_K$:
\begin{align}
\label{split sample 2nd stage}
(\hat{\tau}_k(x_1),\hat{\beta}_k(x_1)) = \arg\min_{a,b} \sum_{i \in I_k} \left[\psi(W_i,\hat\eta(I_k^c)) - a - (X_{1i} - x_1)'b \right]^2\mcalK_h\left(X_{1i}-x_1\right). 
\end{align}
Finally, in the third stage we take the average of the $K$ preliminary estimates to obtain an efficient estimator:
$\check\tau(x_1)=\frac{1}{K}\sum_{k=1}^K \hat\tau_k(x_1).$

\end{itemize}

We use the local linear smoother studied by \cite{f92,fg92,f93} to estimate $\tau_0(x_1)$ in the second stage. While it is possible to extend out results to local polynomial estimators with an extra degree of smoothness, we focus on the linear case for simplicity.\footnote{An early version of the paper considered kernel-based (local constant) nonparametric regression in the second stage. The results are available upon request.}

Our second-stage estimator is related to the partial mean estimator studied by \cite{n94b} and \cite{lee18}. However, \cite{lee18} and our paper are distinct in two important ways. First, the parameters of interest, and thus the estimators, are different. We are interested in CATE$(x_1)$ when the treatment variable is binary, while \cite{lee18} considers a model with a continuous treatment. Second, as explained above, our analysis is compatible with the use of high-dimensional data. For the full-sample first-stage estimation, we allow the complexity of our first-stage estimator to increase with the dimensionality of the data and investigate its impact on the rate of convergence. For the split-sample first-stage estimation, we show that the impact of the increasing complexity is eliminated due to the independence between the observations used in the first- and second-stage estimations.  

When $X_1$ is discrete and takes the values $x_{1,1},\cdots,x_{1,M}$, the function CATE($x_{1,m}$), $m = 1,\ldots,M$ can be interpreted as the average treatment effect for the subpopulation $X_1 = x_{1,m}$. In this case one can restrict the sample to observations with $X_1 = x_{1,m}$, and directly apply the full-sample or cross-fitting estimation methods developed in \cite{BCFH13} and \cite{CC17}. 

\section{Asymptotic properties of CATE estimators}
\label{sec:asy}
\subsection{CATE estimators based on general first-step ML estimators }\label{subsec: ML}
In this section we first provide the fundamental asymptotic results for our CATE estimators which form the basis of the uniform inference procedures to be given in Section~\ref{sec: unif infer}. To this end, we state and discuss several assumptions. Let $\mathcal{X}_1\subset \reals^d$ denote the support of $X_1$ and let $\overline{\mathcal{X}}_1$ be the subset of $\mathcal{X}_1$ over which $\tau_0(x_1)$ is to be estimated. In addition, let $f(x_1)$ denote the p.d.f.\ of $X_1$.

\begin{assumption}\label{assn: regularity}
Assume that
\begin{enumerate}

\item [(i)]  The set $\overline{\mathcal{X}}_1$ is contained in the interior of $\mathcal{X}_1$ and is the Cartesian product of closed intervals, i.e., $\overline{\mathcal{X}}_1=\Pi_{j=1}^d[x^{(j)}_{1\ell},x^{(j)}_{1u}]$ with $x^{(j)}_{1\ell}<x^{(j)}_{1u}$. Furthermore, there exist positive constants $\underline{C}$ and $\overline{C}$ such that:
\begin{equation*}
\begin{aligned}
        & \underline{C} \leq \inf_{x_1 \in \overline{\mathcal{X}}_1}f(x_1) \leq \sup_{x_1 \in \overline{\mathcal{X}}_1}f(x_1) \leq \overline{C} \quad \text{and} \quad \sup_{x_1 \in \overline{\mathcal{X}}_1}(|\mathbb{E}[Y(1)|X_1=x_1]|+|\mathbb{E}[Y(0)|X_1=x_1]|) \leq \overline{C}.
        \end{aligned}
        \end{equation*}

\item [(ii)] The functions
$f(x_1)$, $\mathbb{E}[Y(0)|X_1=x_1]$, and $\mathbb{E}[Y(1)|X_1=x_1]$ are twice differentiable with bounded derivatives over $\overline{\mathcal{X}}_1$; more formally,
  \begin{equation*}
        \begin{aligned}
        \sup_{x_1 \in \overline{\mathcal{X}}_1, 1 \leq j,s \leq d}& \biggl(|\partial_{j}f(x_1)| + |\partial_{j,s}f(x_1)| + |\partial_{j}\tau_0(x_1)|+ |\partial_{j,s}\tau_0(x_1)|\biggr) \leq \overline{C},
        \end{aligned}
        \end{equation*}
where $\partial_{j,s}f(x_1)$ is the derivative of $f(x_1)$ w.r.t. $x_{1j}$ and $x_{1s}$.
\item [(iii)] For $u\in\reals^d$, $\mcalK(u)=\kappa(u_1)\times\ldots\times \kappa(u_d)$, where $\kappa$  is a bounded, symmetric p.d.f.\ with $\int t \kappa(t)dt = 0$ and $\int
t^2 \kappa(t)dt = \nu<\infty$. Furthermore, there exists a positive constant
$\overline{C}$ such that $|t|\kappa\left( t\right) \leq
\overline{C}$ for all $t\in\reals$.

\item [(iv)]
The bandwidth $h=h_N$ satisfies $h = C N^{-H}$ for some $H>1/(4+d)$ and $H<(1-2/q)/d$, where $C>0$ and $q$ satisfies Assumption~\ref{assn: uc}(ii).

\item[(v)] Let $\beta_0(x_1) = \partial_{x_1}\tau_0(x_1)$ and $\tau_0^{(2)}(x_1) = \partial_{x_1x_1^T}\tau_0(x_1)$. Then $\sup_{ x_1 \in \overline{\mathcal{X}}_1 }\lambda_{\max}(\tau_0^{(2)}(x_1)) < \overline{C}$, where $\lambda_{\max}(G)$ denotes the maximum singular value of matrix $G$. In addition, we have 
\begin{align*}
\sup_{ x_1,x_1' \in \overline{\mathcal{X}}_1 }\frac{\left|\tau_0(x_1')-\tau_0(x_1) - (x_1'-x_1)^T\beta_0(x_1) - \frac{1}{2}(x_1'-x_1)^T\tau_0^{(2)}(x_1)(x_1'-x_1)\right|}{||x_1'-x_1||_2^3} \leq \overline{C},
\end{align*} 
where $||\cdot||_2$ denotes the Euclidean norm of a vector. 
\end{enumerate}

 \label{ass:regularity}
\end{assumption}

For the most part, Assumption~\ref{assn: regularity} is a collection of standard regularity conditions used in the nonparametric treatment effect estimation literature. The functions $f(x_1)$, $\mu_0(0,x_1)$, and $\mu_0(1,x_1)$ are required to be sufficiently smooth over $\mcalXbar$, the density of $X_1$ must be bounded away from zero over the same set, and the kernel function $\mcalK$ must obey some mild restrictions, satisfied by usual choices of $\kappa$ such as the Gaussian or the Epanechnikov kernel (in the simulations and the empirical study we use the former). Of course, Assumption~\ref{assn: regularity}(ii) also implies that we restrict attention to the technically more interesting case in which the distribution of $X_1$ is continuous, which means that one cannot simply use sample splitting to estimate CATE at various points in the support of $X_1$.

The conditions imposed on the bandwidth in Assumption~\ref{assn: regularity}(iv) are motivated as follows. The restriction $H>1/(4+d)$ means that $h$ converges to zero faster than the MSE-optimal bandwidth choice; this undersmoothing condition is needed to ensure that the bias from the second-stage kernel regression is asymptotically negligible. In addition, we require $H<(1-2/q)/d$ to be able to use a Gaussian approximation as in \citet[Proposition  3.2]{CCK14}. If the outcome variable is bounded, one can set $q=\infty$ in Assumption \ref{assn: uc}(ii) so that $H<1/d$ as in \citet[Proposition  3.1]{CCK14}. If one only assumes $q=4$, then the convergence rate must satisfy $H \in (1/(4+d),1/2d)$. For this interval to be nonempty, $d$ can be at most 3, which is consistent with Assumption 1 in LOW. In principal, it would also be possible to use the optimal bandwidth, i.e., $H = 1/(4+d)$, and conduct bias correction as in \cite{c18}, while accounting for the impact of the estimated bias correction term on the standard error. This approach is, however, beyond the scope of the present paper. A key conceptual difference between our setup and \cite{c18} is that in our case the dependent variable is not directly observed, but is rather constructed based on first-stage nuisance estimators. Finally, Assumption~\ref{assn: regularity}(v) is a standard bound for the Taylor remainder, commonly assumed in local linear regression theory. See, for example, \citet[Chapter 2]{lr07}. 

We now state high-level conditions that specify the convergence rates required of the first-stage nuisance function estimators. The stated rates are linked to the bandwidth sequence $h$ used in the second-stage regressions.  More specifically, we make the following assumption about the full-sample first-stage estimator $\hat\eta(I)$.

\begin{assumption}[Full sample, first stage]\label{assn: 1st stage full}
Let $\delta_{1N}$, $\delta_{2N}$, $\delta_{4N}$ and $A_N$ be sequences of positive numbers,
and $\mathcal{G}_{N}^{(j)}$, $j\in\{0,1,\pi\}$ be classes of real-valued functions defined on the support of $X$ with corresponding envelope functions $G_N^{(j)}$, $j\in\{0,1,\pi\}$. For $\epsilon>0$, let $\mathcal{N}(\mathcal{G}_{N}^{(j)}, \|\cdot\|, \epsilon)$ be the covering number associated with $\mathcal{G}_{N}^{(j)}$ under some norm $\|\cdot\|$ defined on $\mathcal{G}_{N}^{(j)}$.\footnote{The covering number is the minimal number of balls with radius $\epsilon$ needed to cover $\mathcal{G}_{N}^{(j)}$. A ball with radius $\epsilon$ centered on $g$ is the collection of functions $g'\in\mathcal{G}_{N}^{(j)}$ with $\|g'-g\|<\epsilon$.} The following conditions are satisfied.

(i) The estimator $\hat\eta(I)$ obeys the error bounds
\begin{align}
\label{assn: crate L2 full}
 \sup_{ x_1 \in \overline{\mathcal{X}}_1, j=0,1 }&\left\Vert(\hat{\mu}(j,X;I) - \mu_0(j,X))\mathcal{K}_h^{1/2}\left(X_1-x_1\right)\right\Vert_{\P,2} \notag \\
& \times  \left\Vert(\hat\pi(X;I)-\pi_0(X))\mathcal{K}_h^{1/2}\left(X_1-x_1\right)\right\Vert_{\P,2}
= O_p(\delta_{1N}^2)\\
&\sum_{j=0,1}\left\Vert\hat{\mu}(j,\cdot;I) - \mu_0(j,\cdot)\right\Vert_{\P,\infty} + \left\Vert\hat\pi(X; I)-\pi_0(X)\right\Vert_{\P,\infty}= O(\delta_{2N}).
\label{assn: crate Linf full}\\
 \sup_{ x_1 \in \overline{\mathcal{X}}_1,j=0,1}&\left\Vert(\hat{\mu}(j,X;I) - \mu_0(j,X))||X_1-x_1||_2^{1/2}\mathcal{K}_h^{1/2}\left(X_1-x_1\right)\right\Vert_{\P,2} \notag \\
& \times \left\Vert(\hat\pi(X;I)-\pi_0(X))||X_1-x_1||_2^{1/2}\mathcal{K}_h^{1/2}\left(X_1-x_1\right)\right\Vert_{\P,2}
= O_p(\delta_{3N}^2).
\label{assn: crate Linf full 2}
\end{align}

(ii) With probability approaching one,
\begin{align*}
\hat{\mu}(j,\cdot;I) \in \mathcal{G}_{N}^{(j)}, \quad j=0,1, \quad \text{and} \quad \hat{\pi}(\cdot;I) \in \mathcal{G}_{N}^{(\pi)}
\end{align*}
where the classes of functions $\mathcal{G}_{N}^{(j)}$, $j\in\{0,1,\pi\}$ are such that
\begin{align}
\label{eq:G01}
\sup_Q \log \mathcal{N}\big(\mathcal{G}_N^{(j)},||\cdot||_{Q,2},\eps||G_N^{(j)}||_{Q,2}\big) \leq \delta_{4N}(\log(A_N) + \log(1/\eps)\vee 0), \quad j=0,1,\pi
\end{align}
with the supremum taken over all finitely supported discrete probability measures $Q$.

(iii) The sequences $\delta_{1N}$, $\delta_{2N}$, $\delta_{3N}$, $\delta_{4N}$ and $A_N$ satisfy:
\begin{align}
& \min(\delta_{1N}/h^{d/2},\delta_{2N})=o\big((\log(N)Nh^d)^{-1/4}\big), \quad \delta_{2N} = o(1), \label{assn: rate cond 1 full}\\
& \min(\delta_{3N}/h^{d/2+1},\delta_{2N})=o\big((\log(N)Nh^{d+2})^{-1/4}\big) \label{eq:fullx}, \\
& \delta_{4N}\log(A_N \vee N)\log(N)\delta_{2N}^2 = o(1), \label{assn: rate cond 2 full}\\
& \text{and} \quad \delta_{2N}\delta_{4N}\log^{1/2}(N)N^{1/q}\log(A_N \vee N) = o((Nh^d)^{1/2}).\label{assn: rate cond 3 full}
\end{align}
\end{assumption}

\paragraph{Remarks:}
\begin{itemize}
    \item [1.] Part (i) of Assumption \ref{assn: 1st stage full} controls the difference between $\eta_0$ and $\hat\eta(I)$ (i.e., the estimation error) in various norms.

    \item [2.] Part (ii) controls the complexity of the nuisance functions and the estimators through restrictions on the entropy of the classes $\mathcal{G}_{N}^{(j)}$.

    \item [3.] It is of course part (iii) that fills parts (i) and (ii) with content through specifying the behavior of the sequences $\delta_{1N}$, $\delta_{2N}$, $\delta_{4N}$ and $A_N$. In particular, conditions (\ref{assn: rate cond 1 full}) and (\ref{eq:fullx}) extend the fairly standard requirement in semiparametric settings that the first-stage nuisance function estimators converge faster than $N^{-1/4}$; see \cite{AC03} and \cite{BCFH13}. However, in estimating $\tau_0(x_1)$ and $\beta_0(x_1)$, the second-stage kernel regression relies only on observations local to $x_1$, and hence the relevant effective sample size is $Nh^d$ and $Nh^{d+2}$ rather than $N$. The extra $\log(N)$ factor that appears in the required convergence rate is the price to pay for uniform results in $x_1$.

    \item [4.] If the first-stage estimators are based on (correctly specified) parametric models, then, under standard regularity conditions, $\hat\eta(I)$ converges to $\eta_0$ at the rate of $N^{1/2}$ both in $L_2$ and $L_\infty$ norm. Thus, in this case (\ref{assn: crate L2 full}) and (\ref{assn: crate Linf full}) both hold with $\delta_{1N}=\delta_{2N}=N^{-1/2}$ (recall that $\mcalK_h$ is bounded). In addition, conditions \eqref{assn: rate cond 1 full}, \eqref{assn: rate cond 2 full} and \eqref{assn: rate cond 3 full} are also easily satisfied with $\delta_{4N} = O(1)$, and $A_N = O(1)$. This is essentially the setting in LOW (with allowance for partial misspecification).

    \item[5.] Assumption \ref{assn: 1st stage full} imposes rate restrictions on the complexity of the first-stage estimators. The lasso-type regularization method achieves variable selection along with estimation, which greatly reduces the complexity of the estimator. Thus, it is especially suitable for first-stage estimation when using the full sample.
    
   \item[6.] It is possible to establish sufficient conditions on the convergence rate of $\hat{\pi}$, $\hat{\mu}$ 
   and the kernel individually. For example, following \cite{K17}, we can assume 
   \begin{align*}
   \sup_{x_1 \in \overline{\mathcal{X}}_1 }\sqrt{\mathbb{E}\left[\left(\hat{\mu}(j,X;I) - \mu_0(j,X)\right)^2|X_1=x_1\right]\mathbb{E}\left[\left(\hat{\pi}(X;I) - \pi_0(X)\right)^2|X_1=x_1\right]} = O_p(\delta_{5N}^2). 
   \end{align*}
   Note here we require the bound to hold uniformly over $\overline{\mathcal{X}}_1$ rather than a neighborhood of $x_1$, because we aim to conduct uniform inference over $\overline{\mathcal{X}}_1$. Then, it is easy to see that our $\delta_{1N} = \delta_{5N}h^{d/2}$ and $\delta_{3N} = \delta_{5N}h^{(d+1)/2}$. Consequently, \eqref{assn: rate cond 1 full} and \eqref{eq:fullx} reduce to 
   $\delta_{5N} = o\big((\log(N)Nh^d)^{-1/4}\big)$ as $\delta_{2N} \leq \delta_{5N}$. However, this condition is sufficient but necessary. It is possible to bound directly the estimation error of the nuisance parameters weighted by the kernel, as shown in \cite{suz19}. 
    
  \item[7.] Alternatively, because the kernel function is bounded, we can write
   \begin{align*}
   \sup_{ x_1 \in \overline{\mathcal{X}}_1, j=0,1 }&\left\Vert(\hat{\mu}(j,X;I) - \mu_0(j,X))\mathcal{K}_h^{1/2}\left(X_1-x_1\right)\right\Vert_{\P,2} \left\Vert(\hat\pi(X;I)-\pi_0(X))\mathcal{K}_h^{1/2}\left(X_1-x_1\right)\right\Vert_{\P,2} \\
   \leq & M\max_{j=0,1 }\left\Vert(\hat{\mu}(j,X;I) - \mu_0(j,X))\right\Vert_{\P,2}\left\Vert(\hat\pi(X;I)-\pi_0(X))\right\Vert_{\P,2}.
   \end{align*}
   for some constant $M>0$. Thus, one could also state sufficient conditions for \eqref{assn: crate L2 full} solely in terms of the $L_2$-norm of the error bounds associated with $\hat{\mu}(j,X;I)$ and $\hat{\pi}(X;I)$.
    
    \item[8.] 
    Note that we only require bounds on the product of the $L_2$-norms of two estimation errors as in (\ref{assn: crate L2 full}) and (\ref{assn: crate Linf full 2}). The product structure of these conditions  allows for tradeoffs between how fast $\hat\pi(\cdot;I)$ versus $\hat\mu(j,\cdot;I)$ converges.  
\end{itemize}

The corresponding assumption about the cross-fitting (split-sample) estimator is as follows.
\begin{assumption}[Split sample, first stage]\label{assn: 1st stage split}
The split-sample first-stage estimators $\hat\eta(I^c_k)$, $k=1,\ldots, K$ are assumed to satisfy:
 \begin{eqnarray}
 && \sup_{x_1\in\mcalXbar}\Bigg\{\left\|(\hat\pi(X; I_k^c)-\pi_0(X))\mcalK^{1/2}_h\left(X_1-x_1\right)\right\|_{\P_{I_k},2} \notag \\
 &&\qquad\;\,\times \left\|(\hat\mu(j,X; I_k^c)-\mu_0(j,X))\mcalK^{1/2}_h\left(X_1-x_1\right)\right\|_{\P_{I_k},2}\Bigg\}
 = O_p\big(\delta_{1n}^2\big),\label{assn: crate L2 split}\\
 &&||\pi_0(X) - \hat\pi_0(X; I^c_k)||_{\P,\infty}  + \sum_{j=0,1}||\mu_0(j,X) - \hat{\mu}(0,X;I^c_k)||_{\P,\infty} = O(\delta_{2n}),\label{assn: crate Linf split} \\
 &&  \left\Vert (\hat\mu(j,X; I_k^c)-\mu_0(j,X))||X_1 - x_1||_2^{1/2}\mathcal{K}^{1/2}_h\left(X_1 - x_1\right)\right\Vert_{\P_{I_k},2} \nonumber \\
 &&\qquad\;\,\times \left\Vert (\hat\pi(X; I_k^c)-\pi_0(X))||X_1 - x_1||_2^{1/2}\mcalK_h^{1/2}\left(X_1-x_1\right)\right\Vert_{\P_{I_k},2} = O_p(\delta^2_{3n}) \label{assn: crate L2 splitx}
\end{eqnarray}
where  $\P_{I_k}f = \E(f(W_1,\cdots,W_N)|W_i,i\in I_k^c)$ for a generic function $f$, $h^{-d}\delta_{1n}^2 = o((\log(n)nh^d)^{-1/2})$, $\delta_{2n} = o((\log(n))^{-1})$, and $h^{-d-2}\delta_{3n}^2 = o((\log(n)nh^{d+2})^{-1/2})$.
\end{assumption}

\paragraph{Remarks:}
\begin{itemize}
	\item [1.]
	Because $K$ is fixed and $n = N/K$, $\log(N)Nh^d$ and $\log(n)nh^d$ have the same order of magnitude.
	
	\item [2.]
	For the split-sample estimation, there is no requirement on the entropy of the space where the estimated nuisance functions take values.
	This weakening of the theoretical conditions is due to the fact that, because of the cross-fitting technique, we can treat the estimators of the nuisance parameters as fixed by conditioning on the subsample of the data used for the estimation.
	\item[3.] 
	Assumption \ref{assn: 1st stage split} does not impose restrictions on the complexity of the first-stage estimator and thus accommodates various machine learning methods. One can verify Assumption \ref{assn: 1st stage split} given the error bounds of machine learning first-stage estimators in both $L_\infty$ and $L_2$ norms by the same argument as described in Section \ref{subsec: LASSO} below. Deriving these error bounds for various machine learning methods is beyond the scope of our paper. Partial results are available in the literature. For example, the $L_2$ bounds for the random forest method and deep neural networks have already been established in \cite{WA18} and \cite{f18}, respectively. 
	\item[4.] Remarks 6--8 after Assumption \ref{assn: 1st stage full} apply here as well. In essence, Assumption \ref{assn: 1st stage split} is the local analog of Assumption 5.1(f) used by \cite{CC18} to estimate the \emph{unconditional} average treatment effect via the cross-fitting (split-sample) approach.
\item[5.] Similarly to Remark 7 after Assumption \ref{assn: 1st stage full}, one sufficient condition for the requirement on $\delta_{1N}$ is that 
	\begin{align*}
	\sqrt{n/h^d}\max_{j=0,1 }\left\Vert(\hat{\mu}(j,X;I_k^c) - \mu_0(j,X))\right\Vert_{\P,2}\left\Vert(\hat\pi(X;I_k^c)-\pi_0(X))\right\Vert_{\P,2} = o((\log(n)^{-1/2})). 
	\end{align*}
	\cite{CS19} consider sieve estimation of CATE with high-dimensional control variables and require
	\begin{align*}
	\sqrt{nr}\max_{j=0,1 }\left\Vert(\hat{\mu}(j,X;I_k^c) - \mu_0(j,X))\right\Vert_{\P,2}\left\Vert(\hat\pi(X;I_k^c)-\pi_0(X))\right\Vert_{\P,2} = o(1),
	\end{align*}
	where $r$ is the dimension of the sieve bases. In nonparametric estimation, we know the variances of sieve- and kernel-based estimators are of order $r/n$ and $1/(nh^{d})$, respectively. This implies our rate requirement is equivalent to \citet[Assumption 4.4]{CS19} up to some logarithmic factor. The requirement on $\delta_{3N}$ is not essential and can be avoided if one uses the local constant regression instead.
	
\end{itemize}

\begin{thm}
\label{thm:cate}
(a) If Assumptions \ref{assn: uc}, \ref{assn: sample}, \ref{assn: regularity} and \ref{assn: 1st stage full} are satisfied, then
\begin{align*}
\hat{\tau}(x_1) - \tau_0(x_1) = (\P_N - \P)\left[\frac{1}{h^df(x_1)}(\psi(W,\eta_0) - \tau_0(x_1))\mcalK_h\left(X_1-x_1\right)\right] + R_{\tau}(x_1)
\end{align*}
where $\P_Nf = \frac{1}{N}\sum_{i=1}^N f(W_i)$ for a generic function $f(\cdot)$ and $\sup_{ x_1 \in \overline{\mathcal{X}}_1 }|R_{\tau}(x_1)| = o_p( (\log(N)Nh^d)^{-1/2})$.

(b) If Assumptions \ref{assn: uc}, \ref{assn: sample}, \ref{assn: regularity} and \ref{assn: 1st stage split} are satisfied, then the representation established in part (a) also holds for the $K$-fold cross-fitting estimator $\check\tau(x_1)$, i.e.,
\begin{align*}
\check{\tau}(x_1) - \tau_0(x_1) = (\P_N - \P)\left[\frac{1}{h^df(x_1)}(\psi(W,\eta_0) - \tau_0(x_1))\mcalK_h\left(X_1-x_1\right)\right] + \check{R}_\tau(x_1)
\end{align*}
where $\sup_{x_1 \in \overline{\mathcal{X}}_1 }|\check{R}_{\tau}(x_1)| = o_p( (\log(N)Nh^d)^{-1/2})$.
\end{thm}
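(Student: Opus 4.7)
The plan is to cast the local linear estimator in matrix form and decompose it into an ``oracle'' piece based on $\eta_0$ plus a perturbation due to first-stage error. Write $z_i(x_1)=(1,(X_{1i}-x_1)')'$, $\hat M(x_1)=\tfrac{1}{Nh^d}\sum_{i\in I} z_iz_i'\mcalK_h(X_{1i}-x_1)$, and $\hat S(x_1;\eta)=\tfrac{1}{Nh^d}\sum_{i\in I} z_i\psi(W_i,\eta)\mcalK_h(X_{1i}-x_1)$, so that $\hat\tau(x_1)=e_1'\hat M(x_1)^{-1}\hat S(x_1;\hat\eta(I))$, where $e_1$ is the first canonical basis vector. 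Standard kernel-class concentration plus Assumption~\ref{assn: regularity} shows that $\hat M(x_1)$ concentrates uniformly on $\mcalXbar$ around its population analog, so $\hat M(x_1)^{-1}$ is linearizable: its dominant effect is to put weight $1/(h^df(x_1))$ on $\psi(W_i,\cdot)$, with the off-diagonal blocks contributing only through products with $(X_{1i}-x_1)$. For the oracle piece $e_1'\hat M^{-1}\hat S(\cdot;\eta_0)$, Theorem~\ref{thm: id}(i) gives $\E[\psi(W,\eta_0)\mid X_1=x_1]=\tau_0(x_1)$; Taylor-expanding $\tau_0(X_{1i})$ around $x_1$ with cubic remainder bounded via Assumption~\ref{assn: regularity}(v) produces a bias of order $h^2$, which is $o_p((\log(N)Nh^d)^{-1/2})$ by the undersmoothing condition $H>1/(4+d)$ in Assumption~\ref{assn: regularity}(iv). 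Replacing $\tau_0(X_{1i})$ by $\tau_0(x_1)$ inside the resulting centered empirical process costs a further $O_p(h/\sqrt{Nh^d})$ controlled by a VC maximal inequality for translates of $\mcalK$, yielding the target influence-function representation for the oracle piece.

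For the perturbation piece, set $\Delta_i=\psi(W_i,\hat\eta)-\psi(W_i,\eta_0)$ and split $\tfrac{1}{Nh^d}\sum_i z_i\Delta_i\mcalK_h$ into a ``bias part'' equal to its conditional expectation given $\hat\eta$ and a centered ``empirical-process part.'' Neyman orthogonality (Theorem~\ref{thm: id}(ii)) eliminates the first-order term in the Taylor expansion of $\E[\psi(W,\eta)\mid X_1]$ around $\eta_0$, so only the cross product $(\hat\pi-\pi_0)(\hat\mu(j,\cdot)-\mu_0(j,\cdot))$ survives to leading order, with coefficients uniformly bounded thanks to Assumption~\ref{assn: uc}(iii). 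Cauchy-Schwarz against the kernel weights combined with \eqref{assn: crate L2 full} (respectively \eqref{assn: crate L2 split}) gives
\[
\sup_{x_1\in\mcalXbar}\bigl|h^{-d}\E[\Delta_1\mcalK_h(X_{1,1}-x_1)\mid\hat\eta]\bigr|=O_p(\delta_{1N}^2/h^d)=o_p\bigl((\log(N)Nh^d)^{-1/2}\bigr)
\]
by \eqref{assn: rate cond 1 full}; the analogous $(X_1-x_1)$-weighted bias that enters through the off-diagonal of $\hat M(x_1)^{-1}$ is $O_p(\delta_{3N}^2/h^d)$ by \eqref{eq:fullx} (respectively \eqref{assn: crate L2 splitx}), and the $\|X_1-x_1\|^{1/2}$ weighting in those norms is precisely what makes this $o_p((\log(N)Nh^d)^{-1/2})$.

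The main obstacle is bounding the centered empirical-process term $\sup_{x_1\in\mcalXbar}|(\P_N-\P)[\Delta\,\mcalK_h(X_1-x_1)/h^d]|$ uniformly on $\mcalXbar$. In the full-sample case (a), $\hat\eta(I)$ is coupled with the data in the sum, so the supremum is taken over the class
\[
\mathcal{F}_N=\bigl\{(\psi(W;\eta)-\psi(W;\eta_0))\mcalK_h(X_1-x_1)/h^d:\eta\in\mathcal{G}_N^{(0)}\times\mathcal{G}_N^{(1)}\times\mathcal{G}_N^{(\pi)},\ x_1\in\mcalXbar\bigr\},
\]
whose envelope is of order $\delta_{2N}/h^d$ by \eqref{assn: crate Linf full} and whose uniform entropy is obtained by combining Assumption~\ref{assn: 1st stage full}(ii) with the VC-index of translates of $\mcalK$; the truncation-plus-Bernstein maximal inequality of \cite{CCK14} then delivers the required rate precisely when the nuisance entropy $\delta_{4N}\log A_N$, the envelope radius $\delta_{2N}$, the moment index $q$ from Assumption~\ref{assn: uc}(ii), and the localization factor $h^{-d}$ are balanced as in \eqref{assn: rate cond 2 full}--\eqref{assn: rate cond 3 full}. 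In the cross-fitting case (b), conditioning on $I_k^c$ turns $\hat\eta(I_k^c)$ into a deterministic function, so the class collapses to the kernel-translate family indexed by $x_1$ alone; the same maximal inequality then yields the bound under Assumption~\ref{assn: 1st stage split}, and averaging over $k=1,\dots,K$ preserves the rate. Calibrating envelope, entropy, and moment restrictions simultaneously in the full-sample case is the delicate step; the remainder of the argument fits standard local linear regression templates.
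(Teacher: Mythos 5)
Your proposal follows essentially the same route as the paper's own proof: the matrix formulation with a uniformly linearized inverse design matrix, the oracle-plus-bias-plus-perturbation decomposition (with the $h^2$ Taylor bias killed by undersmoothing), the split of the first-stage perturbation into a conditional-expectation part controlled via the doubly robust cross-product structure and Cauchy--Schwarz against the kernel-weighted $L_2$ bounds, and a centered empirical-process part controlled by the entropy/envelope/moment maximal inequality in the full-sample case and by conditioning on $I_k^c$ in the cross-fitting case. The argument is correct and matches the paper's Lemmas on the remainder terms in both structure and in which rate condition each piece consumes.
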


Theorem \ref{thm:cate} provides the linear (Bahadur) representations of the nonparametric estimators $\hat{\tau}(x_1)$ and $\check \tau(x_1)$ with uniform control of the remainder terms. It serves as a building block for both pointwise and uniform inference about $\tau_0(x_1)$.\footnote{In the  online supplement, we provide the linear (Bahadur) representations of the nonparametric estimators $\hat{\beta}(x_1)$ and $\check \beta(x_1)$ with uniform control of the remainder terms, which can be of independent interest.}
Starting with the former, we define
\[
\sigma^2_N(x_1)=h^d Var\left(\frac{1}{h^df(x_1)}(\psi(W,\eta_0) - \tau_0(x_1))\mcalK_h\left(X_1-x_1\right)\right),
\]
and suppose that $\sigma^2_N(x_1)$ satisfies:
\begin{assumption}\label{assn: cate_var}
There exists some $\underline{C}>0$ such that $\min_{x_1\in\overline{\mathcal{X}}_1}\sigma^2_N(x_1)\geq
\underline{C}$ for all $N$. \label{assn: std}
\end{assumption}


Then Theorem \ref{thm:cate}, together with Lyapunov's CLT, implies
\begin{align}
\frac{\sqrt{Nh^d} \big(\hat{\tau}(x_1) - \tau_0(x_1)\big)}{\sigma_N(x_1)}\stackrel{d}{\rightarrow} \mathcal{N}(0,1)\label{eqn: pwise asy norm}
\end{align}
for any fixed $x_1\in\mcalXbar$. One can estimate the variance $\sigma^2_N(x_1)$ as
\[
\hat\sigma^2_N(x_1)=\frac{1}{Nh^d\hat f^2(x_1; I)}\sum_{i=1}^n\big(\psi(W_i,\hat\eta(I)) - \hat\tau(x_1)\big)^2\mcalK^2_h\left(X_{1i}-x_1\right),
\]
and we will show that
\begin{align*}
\sup_{x_1\in\overline{\mathcal{X}}_1}|\widehat{\sigma}_N(x_1)-{\sigma}_N(x_1)|=o_p(1)~\text{and }
\sup_{x_1\in\overline{\mathcal{X}}_1}|\widehat{\sigma}^{-1}_N(x_1)-{\sigma}^{-1}_N(x_1)|=o_p(1).
\end{align*}
Of course, this means that inference in practice can proceed based on (\ref{eqn: pwise asy norm}) with  $\hat\sigma_N(x_1)$ replacing $\sigma_N(x_1)$. Furthermore, result (\ref{eqn: pwise asy norm}) remains valid if one uses the estimator $\check \tau(x_1)$ in place of $\hat\tau(x_1)$; in this case $\sigma^2_N(x_1)$ can be estimated as
\begin{align*}
&\check\sigma^2_N(x_1) = \frac{1}{K}\sum_{k=1}^K\check\sigma^2_k(x_1),~\text{where}\\
&\check\sigma^2_k(x_1)=\frac{1}{nh^d}\sum_{i \in I_k}\frac{1}{\hat f^2(x_1; I_{k})}\big(\psi(W_i,\hat\eta(I^c_{k})) - \check\tau_{k}(x_1)\big)^2\mcalK^2_h\left(X_{1i}-x_1\right).
\end{align*}

\begin{thm}\label{thm: var consistency}
If Assumptions in Theorems \ref{thm:cate} and Assumption \ref{assn: cate_var} hold, then
\begin{align*}
&\sup_{x_1\in\overline{\mathcal{X}}_1}|\widehat{\sigma}_N(x_1)-{\sigma}_N(x_1)|=o_p(1),~\
\sup_{x_1\in\overline{\mathcal{X}}_1}|\widehat{\sigma}^{-1}_N(x_1)-{\sigma}^{-1}_N(x_1)|=o_p(1),\\
&\sup_{x_1\in\overline{\mathcal{X}}_1}|\check{\sigma}_N(x_1)-{\sigma}_N(x_1)|=o_p(1),~\text{and }
\sup_{x_1\in\overline{\mathcal{X}}_1}|\check{\sigma}^{-1}_N(x_1)-{\sigma}^{-1}_N(x_1)|=o_p(1).
\end{align*}
\end{thm}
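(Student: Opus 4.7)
The plan is to prove the full-sample statement $\sup_{x_1 \in \overline{\mathcal{X}}_1}|\hat\sigma_N(x_1) - \sigma_N(x_1)| = o_p(1)$; the cross-fitting analog follows by running the same argument conditionally on each hold-out sample $I_k^c$ (treating $\hat\eta(I_k^c)$ as deterministic) and averaging across the $K$ folds. Once this is done, the uniform statements for the inverses are immediate from Assumption \ref{assn: cate_var}, which keeps $\sigma_N$ bounded below, together with the identities $|\hat\sigma_N - \sigma_N| = |\hat\sigma_N^2 - \sigma_N^2|/(\hat\sigma_N + \sigma_N)$ and $|\hat\sigma_N^{-1} - \sigma_N^{-1}| = |\hat\sigma_N - \sigma_N|/(\hat\sigma_N \sigma_N)$. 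So I focus on controlling the squared difference uniformly.

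I introduce the oracle
\[
\tilde\sigma_N^2(x_1) = \frac{1}{Nh^d f^2(x_1)}\sum_{i=1}^N \big(\psi(W_i,\eta_0) - \tau_0(x_1)\big)^2 \mathcal{K}^2_h(X_{1i}-x_1)
\]
and decompose $\hat\sigma_N^2 - \sigma_N^2 = (\hat\sigma_N^2 - \tilde\sigma_N^2) + (\tilde\sigma_N^2 - \E\tilde\sigma_N^2) + (\E\tilde\sigma_N^2 - \sigma_N^2)$. The deterministic bracket is handled by the standard kernel-bias computation: using the orthogonality-induced identity $\E[\psi(W,\eta_0)\mid X_1]=\tau_0(X_1)$ from Theorem \ref{thm: id} and the smoothness in Assumption \ref{assn: regularity}(ii), a change of variables gives $\sup_{x_1}|\E\tilde\sigma_N^2 - \sigma_N^2| = O(h^{d+4}) = o(1)$. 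The centered oracle $\tilde\sigma_N^2 - \E\tilde\sigma_N^2$ is an empirical process over a VC-type class indexed by $x_1 \in \overline{\mathcal{X}}_1$ with envelope $h^{-d}(|Y|+1)^2\|\mathcal{K}\|_\infty^2$; the maximal inequality of \cite{CCK14} combined with Assumption \ref{assn: uc}(ii) and the bandwidth restriction in Assumption \ref{assn: regularity}(iv) yields $\sup_{x_1}|\tilde\sigma_N^2 - \E\tilde\sigma_N^2| = O_p(\sqrt{\log(N)/(Nh^d)}) = o_p(1)$.

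The plug-in error $\hat\sigma_N^2 - \tilde\sigma_N^2$ I further split as $\hat f^{-2}[B - \tilde B] + \tilde B\,(\hat f^{-2} - f^{-2})$, where $B$ and $\tilde B$ are the corresponding numerators. The second summand is $o_p(1)$ uniformly by standard uniform consistency of the kernel density estimator $\hat f(\cdot;I)$ together with $f$ being bounded away from zero. For the first summand I use the difference-of-squares identity
\[
\big(\psi(W_i,\hat\eta) - \hat\tau(x_1)\big)^2 - \big(\psi(W_i,\eta_0) - \tau_0(x_1)\big)^2 = (\Delta\psi_i - \Delta\tau(x_1))\bigl((\psi(W_i,\hat\eta) - \hat\tau(x_1)) + (\psi(W_i,\eta_0) - \tau_0(x_1))\bigr),
\]
with $\Delta\psi_i = \psi(W_i,\hat\eta) - \psi(W_i,\eta_0)$ and $\Delta\tau(x_1) = \hat\tau(x_1) - \tau_0(x_1)$. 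Because $\pi_0$ is bounded in $(0,1)$ (Assumption \ref{assn: uc}(iii)) and $\hat\pi$ inherits this with probability approaching one, a Lipschitz calculation gives $|\Delta\psi_i| \leq C(|Y_i|+1)\|\hat\eta - \eta_0\|_\infty = O_p(\delta_{2N}(|Y_i|+1))$, so $(Nh^d)^{-1}\sum_i \Delta\psi_i^2\mathcal{K}^2_{hi}$ has expectation of order $\delta_{2N}^2 = o(1)$ under Assumption \ref{assn: 1st stage full}. Combined with $\sup_{x_1}|\Delta\tau(x_1)| = o_p(1)$ (which follows from Theorem \ref{thm:cate}), a Cauchy-Schwarz split whose second factor is uniformly bounded by the already-controlled $\tilde\sigma_N^2 + \hat\sigma_N^2$ closes the argument.

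The principal obstacle is that in the full-sample case $\hat\eta(I)$ is built from the same data as the summation, so the expectation bound above must be promoted to a uniform-in-$\eta$ statement. This is exactly the role of the entropy restriction in Assumption \ref{assn: 1st stage full}(ii): with probability approaching one $\hat\eta$ lies in a VC-type class $\mathcal{G}_N$ whose entropy integral is controlled by $\delta_{4N}\log(A_N\vee N)$, and applying the maximal inequality of \cite{CCK14} to the class indexed by $(\eta,x_1) \in \mathcal{G}_N \times \overline{\mathcal{X}}_1$ (restricted to a $\delta_{2N}$-neighborhood of $\eta_0$) absorbs the in-sample dependence at the cost only of the rate conditions \eqref{assn: rate cond 2 full}--\eqref{assn: rate cond 3 full}. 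For the cross-fitting variant, conditioning on $I_k^c$ renders $\hat\eta(I_k^c)$ nonrandom from the standpoint of the observations in $I_k$, so no complexity control is required and Assumption \ref{assn: 1st stage split} alone suffices; the finite-sum averaging over $K$ folds preserves the $o_p(1)$ rate.
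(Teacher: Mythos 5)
Your proof is correct and follows essentially the same route as the paper's: both introduce the oracle variance built from $(\psi(W,\eta_0),\tau_0)$, dispose of the $O(h^{d+4})$ smoothing bias, apply a maximal inequality over a VC-type class indexed by $x_1$ to the centered oracle, and control the plug-in error through Cauchy--Schwarz using the $L_\infty$ nuisance bounds, the uniform consistency of $\hat f$, and $\sup_{x_1}|\hat\tau(x_1)-\tau_0(x_1)|=o_p(1)$. The only substantive divergence is that the paper takes the cross-fitting case as primary (calling the full-sample case ``similar but simpler''), and indeed your closing appeal to the entropy condition of Assumption \ref{assn: 1st stage full}(ii) is not actually needed: the pathwise $L_\infty$ bound $\delta_{2N}=o(1)$ already controls $\Delta\psi_i$ uniformly over the sample, so no uniform-in-$\eta$ empirical-process argument is required for mere $o_p(1)$ consistency of the variance estimator.
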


As can be seen from the proof, the $o_p(1)$ term actually vanishes polynomially in $N$.

\subsection{CATE estimators based on first-stage lasso estimators}\label{subsec: LASSO}
While the high-level assumptions stated in Section \ref{subsec: ML} can accommodate multiple machine learning procedures for estimating $\eta_0$, here we describe the first stage using lasso estimation as a leading example.
We now discuss some primitive conditions under which lasso estimation of $\eta_0$ will satisfy Assumptions \ref{assn: 1st stage full} and \ref{assn: 1st stage split}.
Specifically, let $b(X)=(b_1(X),\ldots,b_p(X))$ be a dictionary of control terms based on $X$, where $p$ is potentially larger than the sample size $N$ and can grow with $N$.\footnote{To be fully consistent with the general notation, it would be more precise to denote the dictionary as $X=b(X^*) = (b_1(X^*), \ldots, b_p(X^*)) $; see the discussion in the second paragraph of Section 2. We opt for simplicity at a small cost in notational consistency.} Typically,  $b(X)$ consists of $X$, and powers and interactions of the components of $X$. The lasso approximates the nuisance functions $\eta_0$ with linear combinations of the components $b_i(X)$; in particular, for $p$-vectors $\beta$, $\alpha$ and $\theta$, set
\begin{eqnarray}
r_\alpha(x):=\mu_0(0,x)-b(x)'\alpha,\quad r_\beta(x):=\mu_0(1,x)-b(x)'\beta,\quad r_\theta(x):=\pi_0(x)-\Lambda(b(x)'\theta)\label{first stage approx}
\end{eqnarray}
where $\Lambda(\cdot)$ is the logistic c.d.f. A primitive condition that justifies using the lasso is approximate sparsity. Intuitively, this means that it is possible to make the approximation errors $r_\alpha$, $r_\beta$, $r_\theta$ small with just a small number of approximating terms, i.e., with $\alpha$, $\beta$ and $\theta$ having only a handful of non-zero components.\footnote{The linear index structure and approximate sparsity are specific to the lasso; other machine learning methods provide different types of approximations which do not necessarily rely on sparsity.}
The coefficients $\alpha$, $\beta$ and $\theta$ are estimated by penalized least squares or maximum likelihood, where a penalty is imposed for any non-zero component. 

In the lasso computations, we set the tuning parameter to be $2c\sqrt N\Phi^{-1}(1-0.1/(\log(N)2p))$ and $c\sqrt N\Phi^{-1}(1-0.1/(\log(N)4p))$ for the conditional mean and propensity score functions estimation, respectively, following \cite{BCH14a} and \cite{BCFH13}.\footnote{The constant values are usually chosen as $c=1.1$. Also in practice, one could use cross-validations to choose the tuning parameter here.}

To formalize the idea that the dimension $p$ of $b(X)$ is comparable with or larger than the sample size, we let $p=p_N$ be a function of $N$ and allow $p_N$ to grow to infinity as $N$ increases, possibly (much) faster than $N$. For example, one could set $p_N=O(N^\lambda)$ for any $\lambda>0$, but even $\log(p_N)=O(N^\lambda)$ is allowed if $\lambda$ is not too large. The linear approximation errors to the components of $\eta_0$, defined in display (\ref{first stage approx}), will be controlled by the sparsity index $s=s_N$, a nondecreasing sequence of positive numbers potentially converging to infinity with $N$. Also needing control is the upper bound on the components of $b(X)$; to this end, let $\zeta=\zeta_N=\max_{1 \le j\le p_N}\|b_j(X)\|_{\P,\infty}$, and note that $\zeta$ (weakly) increases as $p_N\rar\infty$. The following assumption formalizes the notion of approximate sparsity. 

\begin{assumption}\label{assn: sparsity}
Let $s_\pi$ and $s_\mu$ denote the individual sparsity index sequences associated with $\pi_0(\cdot)$ and $\mu_0(j,\cdot)$, respectively. There exist sequences of coefficients $\alpha=\alpha_N$, $\beta=\beta_N$ and $\theta=\theta_N$ such that the linear approximations defined in (\ref{first stage approx}) satisfy the following conditions.
\begin{itemize}

    \item [(i)] The number of nonzero coefficients is bounded by $s_\mu$ and $s_\pi$, i.e.,
    $\max\{ \| \alpha \|_0, \|\beta\|_0 \}\leq s_\mu$ and $\|\theta\|_0 \leq s_\pi$.

    \item [(ii)] The approximation errors are asymptotically small in the sense that
    \begin{align*}
    &||r_\alpha(X)||_{\P,2} +||r_\beta(X)||_{\P,2} = O\left(\sqrt{s_\mu\log(p)/N}\right),\\
    &||r_\alpha(X)||_{\P,\infty} + ||r_\beta(X)||_{\P,\infty} = O\left(\sqrt{s_\mu^2\zeta^2\log(p)/N}\right),\\
&||r_\theta(X)||_{\P,2} = O\left(\sqrt{s_\pi\log(p)/N}\right),~~
||r_\theta(X)||_{\P,\infty} = O\left(\sqrt{s_\pi^2\zeta^2\log(p)/N}\right),
\end{align*}
where $s_\mu^2\zeta^2\log(p)/N\rar 0$ and $s_\pi^2\zeta^2\log(p)/N\rar 0$ (and therefore $s_\mu\log(p)/N\rar 0$ and $s_\pi\log(p)/N\rar 0$).

%


\end{itemize}
\end{assumption}

Part (i) of Assumption~\ref{assn: sparsity} states that the number of nonzero coefficients in the $b(X)$-based linear approximations to $\eta_0$ is at most $s$. Part (ii) requires that the approximation errors associated with these linear combinations asymptotically vanish both in $L_2$ and $L_\infty$ norm. This generally requires $s\rar\infty$, but $s$ needs to stay small relative to $N$ in the sense that $s^2\zeta^2\log(p)/N\rar 0$. 

Given Assumption~\ref{assn: sparsity} and additional regularity conditions, results by \cite{BCFH13} imply that conditions (\ref{assn: crate L2 full}) and (\ref{assn: crate Linf full}) hold with
\begin{align}\label{eqn: lasso conv rates}
\delta_{1N}^2 \leq & \sup_{ x_1 \in \overline{\mathcal{X}}_1, j=0,1 }\left\Vert(\hat{\mu}(j,X;I) - \mu_0(j,X))\right\Vert_{\P,2} \left\Vert(\hat\pi(X;I)-\pi_0(X))\right\Vert_{\P,2}\leq  (s_\mu s_\pi)^{1/2} \log(p\vee N)/N, \notag \\
\delta^2_{2N}=&(s_\mu^2+s_\pi^2)\zeta^2\log(p\vee N)/N,\text{ and } 
\delta_{3N}^2 \leq  (s_\mu s_\pi)^{1/2} \log(p\vee N)h/N,
\end{align}
where the last inequality holds because $||X_1 - x_1||_2\mathcal{K}_h(X_1-x_1) \lesssim h$.  
Furthermore, \cite{BCFH13} also establish \eqref{eq:G01} with $\delta_{4N} = s$ and $A_N = p$ for the following function classes:
\begin{align*}
	&\mathcal{G}_N^{(j)} = \{b(X)'\beta: ||\beta||_0 \leq \ell_N s_\mu, \quad \sup_{x \in \overline{X}}|b(x)'\beta - \mu_0(j,x)| \leq M\delta_{2N}\}, \quad j=0,1,\\
	&\mathcal{G}_N^{(\pi)} = \{\Lambda(b(X)'\theta): ||\beta||_0 \leq \ell_N s_\pi, \quad \sup_{x \in \overline{X}}|\Lambda(b(x)'\theta) - \pi_0(x)| \leq M\delta_{2N}\},
	\end{align*}	
where $\ell_N$ is some slowly diverging sequence, e.g., $\ell_N=\log(\log(N))$ and $M>0$. (As $\pi_0(\cdot)$, $\mu_0(1,\cdot)$, and $\mu_0(0,\cdot)$ are uniformly bounded, $\mathcal{G}_{N}^{(0)}$, $\mathcal{G}_{N}^{(1)}$, $\mathcal{G}_{N}^{\pi}$ have bounded envelope functions.)
	
Given these results, Assumption \ref{assn: 1st stage full} with first-stage lasso estimation boils down to the following conditions:	
\begin{align}
	& \min\left(\frac{(s_\mu s_\pi)^{1/2}\log(p \vee N) \log^{1/2}(N)}{(Nh^d)^{1/2}}, \frac{\zeta^2(s_\mu^2+s_\pi^2) \log(p \vee N)\log^{1/2}(N)h^{d/2}}{N^{1/2}}\right) = o(1), \notag \\
& \frac{\zeta^2 (s_\mu+s_\pi)^3\log^2(p \vee N) \log(N)}{N} = o(1),  \quad \text{and} \quad \frac{\zeta^2 (s_\mu+s_\pi)^4 \log^3(p \vee N) \log(N)}{N^{2-2/q}h^d} = o(1).
	\end{align}	
	These conditions all hold if $\frac{s_\mu s_\pi\log^2(p \vee N) \log(N)}{Nh^d} = o(1)$ and $\zeta^2 (s_\mu +s_\pi)^2\log(p \vee N)\log(N) = o(N^{1-2/q})$. For example, if $q=4$, $p=O(N^\lambda)$, $\lambda>0$, and $\zeta=O(N^{1/4})$, then $\max(s_\mu,s_\pi)=o(\sqrt{Nh^d})$ is essentially sufficient for Assumption~\ref{assn: 1st stage full}, ignoring logarithmic factors of $N$.
	

By contrast, Assumption \ref{assn: 1st stage split} holds under substantially weaker sparsity conditions. Given the rates in (\ref{eqn: lasso conv rates}), the l.h.s.\ of (\ref{assn: crate L2 split}) is at most of order $O\big(\sqrt{s_\pi}\sqrt{s_\mu}\log(p)/N\big)$, as $\mcalK_h$ is a bounded function. Hence, Assumption \ref{assn: 1st stage split} essentially reduces to $\sqrt{s_\pi}\sqrt{s_\mu}\log(p)/(Nh^d)=o\big((\log(N)Nh^d)^{-1/2}\big)$. Again, setting $p=O(N^\lambda)$, $\lambda>0$, and ignoring the logged factors of $N$ gives $s_\pi s_\mu=o(Nh^d)$. This condition is of course satisfied if $s_\pi=s_\mu=o(\sqrt{Nh^d})$, but there can be tradeoffs between the two sparsity indexes. For example, if $s_\pi=O(1)$, i.e., the propensity score essentially obeys a finite dimensional model linear in parameters, then $s_\mu=o(Nh^d)$ is possible, i.e., $\mu_0(j,\cdot)$ can be a function that is substantially harder to approximate. 
Given Remark 5 after Assumption \ref{assn: 1st stage split}, we can see that our sparsity conditions for Assumption \ref{assn: 1st stage split} are essentially equivalent to those in \cite{CS19}.  On the other hand, \cite{LOW16} is based on parametric first-stage estimators with the dimension of the regressors fixed. Therefore, they do not need sparsity conditions (though one could regard the parametric assumption as an extreme form of sparsity).

Other types of lasso methods such as the group lasso by \cite{F15} and the penalized local least squares and maximum likelihood methods by \cite{suz19} can also be used. One can verify the rate restrictions in a manner similar to the above.

\section{Uniform inference based on the multiplier bootstrap}\label{sec: unif inference}
Turning to uniform inference, one option is to construct uniform confidence bands analytically similarly to LOW. We provide an alternative method based on the multiplier bootstrap. Our multiplier bootstrap procedure is computationally efficient and takes the nuisance function estimators from the first stage as given and only recomputes the nonparametric regression estimator(s) from the second stage. This step simply involves a random rescaling of the terms in the sums (\ref{full sample 2nd stage}) and (\ref{split sample 2nd stage}). As lasso estimation is usually time consuming, our procedure is less costly to implement than, say, a standard nonparametric bootstrap requiring new samples from the original data and recomputing the whole estimator.

To describe the procedure formally, we make the following assumption.

\begin{assumption}
The random variable $\xi$ is independent of $W$ with $\E(\xi)=var(\xi)=1$, and its distribution has sub-exponential tails.\footnote{A random variable $\xi$ has sub-exponential tails
if $\P(|\xi|>x) \leq K\exp(-Cx)$ for every $x$ and some constants
$K$ and $C$.} \label{assn: boot_xi}
\end{assumption}
Assumption \ref{assn: boot_xi} is standard for multiplier bootstrap inference. For example, a normal random variable with unit mean and standard deviation satisfies this assumption. The bootstrap is implemented as follows:
\begin{enumerate}
\item
Compute the first-stage nuisance function estimates $\hat{\mu}(0,x;I)$, $\hat{\mu}(1,x;I)$, $\hat{\pi}(x;I)$ OR $\hat{\mu}(0,x;I_k^c)$, $\hat{\mu}(1,x;I_k^c)$, $\hat{\pi}(x;I_k^c)$, $k=1,\ldots, K$.

\item
Draw an i.i.d.\ sequence $\{\xi_i\}_{i=1}^N$ from the distribution of $\xi$.

\item
Choose the number of bootstrap replications $B$, e.g., $B=1000$. Compute $\hat\tau^b(x_1)$ by the local linear regression, for $b=1, \dots, B$,
\begin{align*}
& (\hat{\tau}^b(x_1),\hat{\beta}^b(x_1)) = \arg\min_{a,b} \sum_{i \in I} \xi_i\left[\psi(W_i,\hat\eta(I)) - a - (X_{1i} - x_1)'b \right]^2\mcalK_h\left(X_{1i}-x_1\right),
\end{align*}
or $\check\tau^b(x_1)=\frac{1}{K}\sum_{k=1}^K\hat\tau_k^b(x_1)$, where for $k = 1,\cdots,K$, 
\begin{align*}
& (\hat{\tau}_k^b(x_1),\hat{\beta}_k^b(x_1)) = \arg\min_{a,b} \sum_{i \in I_k}\xi_i \left[\psi(W_i,\hat\eta(I_k^c)) - a - (X_{1i} - x_1)'b \right]^2\mcalK_h\left(X_{1i}-x_1\right).
\end{align*}
\end{enumerate}

The following theorem is the bootstrap version of Theorem \ref{thm:cate}, and it forms the basis of our inference procedure.

\begin{thm}\label{thm:cate bootstrap}

    (a) If Assumptions \ref{assn: uc}, \ref{assn: sample}, \ref{assn: regularity}, \ref{assn: 1st stage full} and \ref{assn: boot_xi} are satisfied, then
    \begin{align*}
    \hat{\tau}^b(x_1) - \hat\tau(x_1) = (\P_N - \P)\left[\frac{\xi-1}{h^df(x_1)}(\psi(W,\eta_0) - \tau_0(x_1))\mcalK_h\left(X_1-x_1\right)\right] + R^b_\tau(x_1)
    \end{align*}
    where $\sup_{x_1 \in \overline{\mathcal{X}}_1}|R^b_\tau(x_1)| = o_p((\log(N)Nh^d)^{-1/2})$.

    (b) If Assumptions \ref{assn: uc}, \ref{assn: sample}, \ref{assn: regularity}, \ref{assn: 1st stage split} and \ref{assn: boot_xi} are satisfied, the representation established in part (a) also holds for $\check\tau^b(x_1)-\check\tau(x_1)$, i.e.,
       \begin{align*}
    \check{\tau}^b(x_1) - \check\tau(x_1) = (\P_N - \P)\left[\frac{\xi-1}{h^df(x_1)}(\psi(W,\eta_0) - \tau_0(x_1))\mcalK_h\left(X_1-x_1\right)\right] + \check{R}^b_\tau(x_1)
    \end{align*}
    where $\sup_{x_1 \in \overline{\mathcal{X}}_1}|\check{R}^b_\tau(x_1)| = o_p((\log(N)Nh^d)^{-1/2})$.

\end{thm}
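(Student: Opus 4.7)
The plan is to mirror the Bahadur-expansion argument used for Theorem \ref{thm:cate}, attaching the multiplier weights $\xi_i$ throughout. First I would write the bootstrapped local linear estimator in closed form as the first coordinate of $(M^b_N(x_1))^{-1} V^b_N(x_1)$, where, with $Z_i(x_1)=(1,(X_{1i}-x_1)^\top/h)^\top$,
\begin{equation*}
M^b_N(x_1) = \frac{1}{N}\sum_{i\in I}\xi_i\,\mcalK_h(X_{1i}-x_1)\,Z_i(x_1)Z_i(x_1)^\top,
\end{equation*}
\begin{equation*}
V^b_N(x_1) = \frac{1}{N}\sum_{i\in I}\xi_i\,\mcalK_h(X_{1i}-x_1)\,Z_i(x_1)\bigl[\psi(W_i,\hat\eta(I))-\tau_0(x_1)-(X_{1i}-x_1)^\top\beta_0(x_1)\bigr],
\end{equation*}
and similarly for the unweighted matrices $M_N(x_1), V_N(x_1)$ underlying $\hat\tau(x_1)$.

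Next I would establish that the bootstrapped Gram matrix concentrates at the same target as the original: since $\xi_i$ is independent of $W_i$ with $\E\xi=1$ and sub-exponential tails, a conditional Bernstein/Talagrand argument yields $\sup_{x_1\in\mcalXbar}\|M^b_N(x_1)-f(x_1)\mathrm{diag}(1,\nu I_d)\|=o_p(1)$ at the same polynomial rate as in the proof of Theorem \ref{thm:cate}, because the multipliers inflate variance only by a constant factor. With the Gram matrix thus invertible uniformly, the estimator reduces to $e_1^\top (M^b_N(x_1))^{-1}V^b_N(x_1)$, and a Taylor expansion in $(M^b_N)^{-1}-(\E M^b_N)^{-1}$ shows that the leading order is $(f(x_1))^{-1}$ times the first coordinate of $V^b_N(x_1)$.

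Then comes the main step: replacing $\psi(W_i,\hat\eta(I))$ by $\psi(W_i,\eta_0)$ up to a negligible remainder. Writing $\Delta_i=\psi(W_i,\hat\eta(I))-\psi(W_i,\eta_0)$, I need to show that
\begin{equation*}
\sup_{x_1\in\mcalXbar}\left|\frac{1}{Nh^d}\sum_{i\in I}\xi_i\mcalK_h(X_{1i}-x_1)\Delta_i\right|=o_p((\log(N)Nh^d)^{-1/2}).
\end{equation*}
For this I would reuse the decomposition from the proof of Theorem \ref{thm:cate}: the pathwise-derivative term vanishes by Neyman orthogonality (Theorem \ref{thm: id}(ii)); the second-order remainder is controlled by the product bounds \eqref{assn: crate L2 full}--\eqref{assn: crate Linf full 2} under Assumption \ref{assn: 1st stage full} (or \eqref{assn: crate L2 split}--\eqref{assn: crate L2 splitx} under Assumption \ref{assn: 1st stage split}); and the empirical-process piece is handled by a multiplier maximal inequality exploiting the sub-exponential tail of $\xi$ together with the entropy bound \eqref{eq:G01}. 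The taylor-remainder from the bias $(X_{1i}-x_1)^\top \beta_0(x_1)+\tfrac{1}{2}(X_{1i}-x_1)^\top \tau_0^{(2)}(\cdot)(X_{1i}-x_1)$ is absorbed exactly as in Theorem \ref{thm:cate} by Assumption \ref{assn: regularity}(iv)--(v). The same plan, applied fold-by-fold and averaged, delivers part (b); the conditional-independence structure of cross-fitting removes the entropy requirement as in the proof of Theorem \ref{thm:cate}(b).

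Combining the two Bahadur expansions and subtracting gives the stated representation, since the multipliers collapse to $(\xi_i-1)$ while the remainder terms inherit the $o_p((\log(N)Nh^d)^{-1/2})$ bound. The hard part will be the multiplier maximal inequality in the bootstrap version of the empirical-process step: one must carry the $\xi_i$'s through the chaining/symmetrization used to verify \eqref{assn: rate cond 2 full}--\eqref{assn: rate cond 3 full}, showing that conditional on the data the extra randomness from $\xi$ does not degrade the uniform rate. This requires a careful application of Lemma 2.2.10 of \cite{V00} (or its refinement for sub-exponential multipliers), together with the envelope and entropy bounds for $\mathcal{G}_N^{(j)}$ already invoked in Section \ref{subsec: ML}.
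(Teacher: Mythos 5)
Your proposal is correct and follows essentially the same route as the paper: the paper's supplement derives a single Bahadur expansion for the $\xi^*$-weighted local linear estimator (with $\xi^*$ equal to either $1$ or $\xi$), controlling the Gram matrix, the Taylor bias, and the nuisance-replacement remainder exactly as you describe — splitting the latter into a centered empirical-process piece bounded by a multiplier maximal inequality (Lemma C.1 of Belloni et al.\ combined with the entropy bound and the sub-exponential tail of $\xi$, which yields the $N^{1/q}$ envelope moment) and a mean piece bounded by the product $L_2$ conditions — and then obtains the bootstrap representation by subtracting the $\xi^*=1$ expansion from the $\xi^*=\xi$ expansion so the weights collapse to $\xi_i-1$. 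The cross-fitting case is likewise handled fold-by-fold conditional on $I_k^c$, dispensing with the entropy requirement, just as you propose.
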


Theorem \ref{thm:cate bootstrap} justifies the validity of the multiplier bootstrap in implying that $\sqrt{Nh^d}(\hat{\tau}^b(x_1)-\hat{\tau}_0(x_1))$ converges
in distribution to the limiting distribution of $\sqrt{Nh^d}(\hat{\tau}(x_1)-{\tau}_0(x_1))$ conditional on the sample path (data) with probability 1. Therefore, if Assumption \ref{assn: std} also holds, then, conditional on data,
\begin{align}\label{eqn: pwise boot}
\frac{\sqrt{Nh^d} \big(\hat{\tau}^b(x_1) - \hat{\tau}(x_1)\big)}{\widehat{\sigma}_N(x_1)}\stackrel{d}{\rightarrow} \mathcal{N}(0,1),
\end{align}
The same statements of course hold true if $\check\tau^b(x_1)$ and $\check\tau(x_1)$ replaces $\hat\tau^b(x_1)$ and $\hat\tau(x_1)$, respectively.\footnote{In the online supplement, we also shows similar results regarding $\hat\beta^b(x_1)$ and $\check\beta^b(x_1)$ that might be of separate interest.} In addition to pointwise inference, the uniform control of the error term $R_N^b(\cdot)$ in Theorem \ref{thm:cate bootstrap} makes it possible to employ the multiplier bootstrap for uniform inference. For the rest of the paper, we focus on the inference of $\tau_0(x_1)$. The uniform inference of $\beta_0(x_1)$ can be implemented in the same manner. We propose the following algorithm.

\paragraph{Uniform Confidence Band Implementation Procedure}\label{sec: unif infer}

\begin{enumerate}
\item
Compute $\hat{\tau}(x_1)$ and $\hat{\sigma}_N(x_1)$ for a suitably fine grid over $\mcalXbar$.

\item
Compute $\hat{\tau}^b(x_1)$ over the same grid for $b=1,\ldots, B$ while generating a new set of i.i.d.\ $\mathcal{N}(1,1)$ random variables $\{\xi^b_i\}_{i=1}^N$ in each step $b$.

\item
For $b=1,\ldots, B$, compute
\begin{align*}
&M_b^{\text{1-sided}}=\sup_{x_1\in\overline{\mathcal{X}}_1}\frac{\sqrt{Nh^d}(\hat{\tau}^b(x_1)-\hat{\tau}(x_1))}{\widehat{\sigma}_N(x_1)},~~~
&M_b^{\text{2-sided}}=\sup_{x_1\in\overline{\mathcal{X}}_1}\frac{\sqrt{Nh^d}\big|\hat{\tau}^b(x_1)-\hat{\tau}(x_1)\big|}{\widehat{\sigma}_N(x_1)},
\end{align*}
where the supremum is approximated by the maximum over the chosen grid points.

\item
Given a confidence level $1-\alpha$, find the empirical $(1-\alpha)$
quantile of the sets of numbers $\{M_b^{\text{1-sided}}:b=1,\ldots,B\}$ and
$\{M_b^{\text{2-sided}}:b=1,\ldots,B\}$. Denote these quantiles as $\widehat{C}^{\text{1-sided}}_\alpha$ and $\widehat{C}^{\text{2-sided}}_\alpha$, respectively.

\item
The uniform confidence bands are constructed as
\begin{align*}
I_L&=\Big\{\Big(\hat{\tau}(x_1)-\widehat{C}^{\text{1-sided}}_\alpha \frac{\widehat{\sigma}_N(x_1)}{\sqrt{Nh^d}},\infty \Big):x_1\in\overline{\mathcal{X}}_1\Big\},\\
I_R&=\Big\{\Big(-\infty, \hat{\tau}(x_1)+\widehat{C}^{\text{1-sided}}_\alpha \frac{\widehat{\sigma}_N(x_1)}{\sqrt{Nh^d}}\Big):x_1\in\overline{\mathcal{X}}_1\Big\},\\
I_2&=\Big\{\Big(\hat{\tau}(x_1)-\widehat{C}^{\text{2-sided}}_\alpha \frac{\widehat{\sigma}_N(x_1)}{\sqrt{Nh^d}},\; \hat{\tau}(x_1)+\widehat{C}^{\text{2-sided}}_\alpha \frac{\widehat{\sigma}_N(x_1)}{\sqrt{Nh^d}}\Big):x_1\in\overline{\mathcal{X}}_1\Big\}.
\end{align*}
\end{enumerate}

The following theorem formally states the asymptotic validity of the confidence regions proposed above.

\begin{thm}\label{thm:boot-CS}
If Assumptions \ref{assn: uc},  \ref{assn: sample}, \ref{assn: regularity}, \ref{assn: 1st stage full}, \ref{assn: std} and \ref{assn: boot_xi} are satisfied, then
\[
\lim_{N\rightarrow\infty}\P\big(\tau_0\in I_L \big)=\lim_{N\rightarrow\infty}\P\big(\tau_0\in I_R \big)=\lim_{N\rightarrow\infty}\P\big(\tau_0\in I_2 \big)=1-\alpha.
\]

\end{thm}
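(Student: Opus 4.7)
The plan is to combine the Bahadur representations in Theorems \ref{thm:cate} and \ref{thm:cate bootstrap} with the Gaussian coupling and anti-concentration machinery of \citet{CCK14} to show that the bootstrap quantile $\widehat C^{\text{2-sided}}_\alpha$ (and its one-sided analogues) consistently estimates the $1-\alpha$ quantile of $\sup_{x_1}|T_N(x_1)|$ where $T_N(x_1)=\sqrt{Nh^d}(\hat\tau(x_1)-\tau_0(x_1))/\sigma_N(x_1)$. Throughout, I would work with the ``infeasible'' studentization by $\sigma_N(x_1)$ and use Theorem \ref{thm: var consistency} at the very end to swap in $\widehat\sigma_N(x_1)$.

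First, by Theorem \ref{thm:cate}(a) divided through by $\sigma_N(x_1)$, we can write
\begin{equation*}
T_N(x_1)\;=\;\mathbb{G}_N f_{N,x_1}+r_N(x_1),\qquad f_{N,x_1}(W)=\frac{(\psi(W;\eta_0)-\tau_0(x_1))\mcalK_h(X_1-x_1)}{h^{d/2}\,f(x_1)\,\sigma_N(x_1)},
\end{equation*}
with $\sup_{x_1\in\mcalXbar}|r_N(x_1)|=o_p((\log N)^{-1/2})$, where $\mathbb{G}_N=\sqrt{N}(\P_N-\P)$. The class $\mathcal F_N=\{f_{N,x_1}:x_1\in\mcalXbar\}$ is a uniformly bounded VC-type class (its envelope is of order $h^{-d/2}$ by Assumption \ref{assn: uc}(iii) and Assumption \ref{assn: regularity}(i)), and Assumption \ref{assn: regularity}(iv) with $H<(1-2/q)/d$ ensures the envelope moment condition $N^{1/q}\|F_N\|_{\P,q}/\sqrt N=o(1/\log N)$ needed for \citet[Proposition 3.2]{CCK14}. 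Apply that proposition to couple $\sup_{x_1}|\mathbb{G}_N f_{N,x_1}|$ with $\sup_{x_1}|Z_N(x_1)|$, where $Z_N$ is a mean-zero Gaussian process on $\mcalXbar$ with the same covariance as $\mathbb{G}_N f_{N,\cdot}$, up to an error $o_p((\log N)^{-1/2})$. Combined with the $o_p((\log N)^{-1/2})$ bound on $r_N$, this yields $\sup_{x_1}|T_N(x_1)|=\sup_{x_1}|Z_N(x_1)|+o_p(1/\log N)^{1/2}$.

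Second, run the same argument conditionally on the data for the bootstrap process. By Theorem \ref{thm:cate bootstrap}(a),
\begin{equation*}
T^b_N(x_1):=\frac{\sqrt{Nh^d}(\hat\tau^b(x_1)-\hat\tau(x_1))}{\sigma_N(x_1)}\;=\;\mathbb{G}_N^\xi f_{N,x_1}+r_N^b(x_1),
\end{equation*}
where $\mathbb{G}_N^\xi=\frac{1}{\sqrt N}\sum_{i=1}^N(\xi_i-1)\delta_{W_i}$ is the multiplier empirical process and $\sup_{x_1}|r_N^b(x_1)|=o_p((\log N)^{-1/2})$. Under Assumption \ref{assn: boot_xi}, the multiplier inequality together with \citet[Theorem 3.1 and Corollary 3.1]{CCK14} (applied conditionally on $\{W_i\}$) provides a Gaussian coupling for $\sup_{x_1}|\mathbb{G}_N^\xi f_{N,x_1}|$ by $\sup_{x_1}|Z_N(x_1)|$ with the \emph{same} covariance kernel, up to conditional error $o_p((\log N)^{-1/2})$. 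Consequently, conditionally on the data, $\sup_{x_1}|T^b_N(x_1)|$ and $\sup_{x_1}|T_N(x_1)|$ are both approximated by $\sup_{x_1}|Z_N(x_1)|$ in the Kolmogorov distance on $\reals_+$. Using the anti-concentration inequality for Gaussian suprema \citep[Lemma A.1]{CCK14}, the $(1-\alpha)$ quantile $\widehat C^{\text{2-sided}}_\alpha$ of $\sup_{x_1}|T_N^b(x_1)|$ (given data) converges in probability to the $(1-\alpha)$ quantile of $\sup_{x_1}|Z_N(x_1)|$.

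Third, Theorem \ref{thm: var consistency} gives $\sup_{x_1}|\widehat\sigma_N(x_1)/\sigma_N(x_1)-1|=o_p(1)$, so replacing $\sigma_N$ by $\widehat\sigma_N$ in both the statistic and in the bootstrap critical value changes each by a multiplicative $1+o_p(1)$, which by anti-concentration produces only an $o_p(1)$ shift in the coverage probability. Combining these pieces gives $\P(\tau_0\in I_2)\to 1-\alpha$; the proofs for $I_L$ and $I_R$ are identical with the absolute value removed and the one-sided quantile, noting that the one-sided sup of $Z_N$ is also continuous (indeed anti-concentration applies to $\sup_{x_1}Z_N(x_1)$ as well). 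I expect the main obstacle to be verifying the envelope and entropy side-conditions of \citet{CCK14} uniformly in $N$ so that the Gaussian coupling error is genuinely $o_p((\log N)^{-1/2})$; in particular, tracking the interaction between the bandwidth rate in Assumption \ref{assn: regularity}(iv), the moment exponent $q$ in Assumption \ref{assn: uc}(ii), and the VC-type covering numbers for the kernel-shifted class $\mathcal F_N$ is where the care is needed, and the choice of $H<(1-2/q)/d$ is precisely what makes this work.
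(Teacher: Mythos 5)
Your proposal is correct and follows essentially the same route as the paper: the paper's proof simply invokes \citet[Corollary 3.1]{CCK14-anti}, which packages exactly the chain you spell out (Bahadur representations from Theorems \ref{thm:cate} and \ref{thm:cate bootstrap} with $o_p((\log N)^{-1/2})$ remainders, Gaussian coupling of the studentized supremum and of its multiplier-bootstrap analogue via \citet[Proposition 3.2]{CCK14}, anti-concentration for Gaussian suprema, and the studentization swap justified by Theorem \ref{thm: var consistency}). The only detail worth noting is that the final swap of $\sigma_N$ for $\widehat\sigma_N$ needs the variance estimator's error to be $o_p((\log N)^{-1/2})$ rather than merely $o_p(1)$ because the Gaussian supremum grows like $\sqrt{\log N}$, which the paper covers by remarking that the error in Theorem \ref{thm: var consistency} vanishes polynomially in $N$.
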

\noindent\textbf{Remarks:}
\begin{itemize}

    \item [1.] Theorem~\ref{thm:boot-CS} states that the random confidence bands $I_R$, $I_L$ and $I_2$ contain the \emph{entire} function $\tau_0$ with the prescribed probability $1-\alpha$ in large samples. 

    \item [2.] If the grid in step 1 is chosen to be a single point $x_1$, then the algorithm provides pointwise confidence intervals $I_L(x_1)$, $I_R(x_1)$ and $I_2(x_2)$.

    \item [3.] One can construct uniform confidence bands for $\tau_0$ based on the cross-fitting estimator $\check\tau$ following the exact same steps as above; of course, one needs to replace $\hat\tau$, $\hat\tau^b$ and $\hat\sigma_N$ with $\check\tau$, $\check\tau^b$ and $\check\sigma_N$, respectively.

    \item [4.] 	It is also possible to construct the uniform confidence band by approximating the supremum of the empirical process via a Gumbel distribution. However, the Gumbel approximation is accurate only up to the logarithmic rate, as pointed out by \cite{LOW16}. The bootstrap approximation proposed in this paper has the advantage that the approximation error has a geometric rate of decline and the quality of the approximation is better than that of Gumbel.\footnote{We thank an anonymous referee for this excellent comment.} We also note that both the bootstrap and the Gumbel approximations rely on the linear expansions established in Theorem \ref{thm:cate}. 

\end{itemize}

We discuss bandwidth choice in practice. To obtaining our theoretical results, we require undersmoothing to eliminate bias asymptotically.  When $d=1$ as in the simulations, we suggest setting $h_N=\hat h \times N ^{1/5} \times N^{-2/7}$, where $\hat h=1.06\cdot \hat{\sigma}_{x_1} N^{-1/5}$ and $\hat{\sigma}_{x_1}$ is the estimated standard deviation of $X_1$. The formula for $\hat h$ corresponds to the rule-of-thumb bandwidth with a Gaussian kernel suggested by \cite{Silverman1986}.\footnote{When $d=2$ or $3$, we suggest setting for $j=1,\ldots,d$, $h_{jN}=\hat h_j \times N ^{1/(4+d)} \times N^{-2/(4+3d)}$ and $\hat h_j=1.06\cdot \hat{\sigma}_{x_{1j}} N^{-1/(4+d)}$ and $\hat{\sigma}_{x_{1j}}$ is the estimator of the standard deviation of the $j$-th element of $X_1$.} The bandwidth selection is done with the entire sample even for the cross-fitting method. Also the same selection method is employed in the empirical application.

We investigate the finite sample properties of the proposed high-dimensional CATE estimators and the inference procedure outlined above using Monte Carlo experiments. Because of the space constraint, this material can be found in the online supplement.

\section{Empirical application}\label{sec: empirical}

In this section, we employ the proposed high-dimensional CATE estimators to analyze the average effect of maternal smoking on birth weight while allowing for virtually unrestricted treatment effect heterogeneity conditional on the mother's age. Birth weight has been associated with health and human capital development throughout life (\cite{BDS07}, \cite{AC11}), and maternal smoking is considered to be the most important preventable cause of low birth weight \citep{K87}. In recent studies, AHL and LOW both explored this causal relationship using the CATE approach, and  found different degrees of heterogeneity by age. Using observations from 3,754 white mothers in Pennsylvania, LOW found that the CATE of smoking is decreasing from 17 to around 29 years of age, but they differ from AHL in that the contrast between young and 30-year-old mothers is still not large.\footnote{As the smoking effect is negative, ``decreasing'' means that the detrimental effects of smoking become stronger with age.} 

Our study improves on these previous investigations by considering a much larger pool of covariates and explicitly incorporating a variable selection mechanism into the estimation. This initial pool consists of a vector $X$ of raw covariates as well as technical regressors (powers and interactions) to account for the fact that the functional form of $\pi_0$ and $\mu_0$ is unknown. By contrast, AHL assume that a low dimensional parametric model (known up to its coefficients) is correctly specified for $\pi_0$, while LOW assume that either $\pi_0$ or $\mu_0$ obeys such a model. While we still assume that $\pi_0$ and $\mu_0$ are sparse functions, we let a data-driven procedure (lasso) select the most relevant regressors.

\subsection{Data Description}

We start with the same data set as AHL, composed of vital statistics collected by the North Carolina State Center Health Services, and extract the records of first-time mothers\footnote{The motivation for focusing on first-time mothers is discussed in AHL. In effect, the restricted sample enables more  credible identification of the causal effect, as there cannot be uncaptured feedback from the previous birth experience to the current one.} between 1988 and 2002. The variables include whether the mother smokes (the treatment dummy), the baby's birth weight (the main outcome variable, measured in grams), the parents' socio-economic information, such as age, education, income, race, etc., as well as the mothers' medical and health records. The dataset includes 45 raw covariates and 591,547 observations in total. Table \ref{sumstat} summarizes the most important pre-treatment covariates in the data set.\footnote{We drop some covariates from the analysis for various reasons. For example, the mother's weight gain during pregnancy is arguably not a pre-treatment variable, and the Kessner index of prenatal care is basically a function of the number of prenatal visits and the timing of the first visit.}

\begin{table}[!ht]
\centering
  \caption{\small \textbf{Variable definitions} }
\footnotesize
\renewcommand\arraystretch{0.6}
\begin{threeparttable}
    \begin{tabular}{lcllm{5cm}} \toprule[1pt]
\multicolumn{2}{c}{} &\textbf{Name} & \textbf{Type} &  \textbf{Description} \\
\hline
{\textbf{Outcome Variable}}& & bweight & real number & birth weight(g)\\
\cline{1-2}
{\textbf{Treatment}} && smoke & dummy & Whether mother smokes or not\\
\cline{1-2}
\multirow{19}[2]{*}{\textbf{Covariates}} & \multirow{7}[2]{*}{\textbf{\makecell[c]{Parents \\ Basic \\ Info}}}
&mage & real number$^*$ & mother's age\\
& & meduc & integer & mother's years of schooling \\
& & fage    & integer & father's age \\
& & feduc   & integer   & father's years of schooling \\
& & fagemiss   & integer   & Whether  or not father's age is missing \\
& & married & dummy & Whether or not mother is married  \\
& & popdens & real number & population density in mother's zip code (units/km$^2$) \\
\cline{2-2}
&\multirow{8}[2]{*}{\textbf{\makecell[c]{Mothers' \\ Medical\\ Care \\ \& \\ Health\\ Status}}} &
 prenatal & integer & month of first prenatal visit (=10 if prenatal care is foregone) \\
& & pren\_visits    & integer& number of prenatal visits \\
& & terms    & integer & previous (terminated) pregnancies \\
& & amnio & dummy & Did mother take amniocentesis? \\
& & anemia  & dummy & Did mother suffer from anemia?  \\
& & diabetes & dummy & Did mother suffer from gestational diabetes?  \\
& & hyperpr & dummy & Did mother suffer from hypertension?  \\
& & ultra & dummy   & Did mother take ultrasound exams?  \\
\cline{2-2}
& \multirow{4}{*}{\textbf{Others}} &
male & dummy    & Whether or not baby is male  \\
& & drink   & dummy  & mother's alcohol use \\
& & by88-02 & dummy & 13 birth year dummies (from 1988 to 2002) \\
\hline
\end{tabular}
$^*$Note: mother's age is originally recorded as an integer but for the purposes of this exercise we add a uniform $[-1,1]$ random number to this value to make it a continuous variable. The main results are robust when we add  a uniform [-0.5,0.5] random number to the age variable. See more empirical results in the online supplement.
\end{threeparttable}

\label{sumstat}
\end{table}

\subsection{High-Dimensional CATE Estimation}
In this section we estimate the CATE of maternal smoking on the baby's birth weight with mother's age as the conditioning variable. Following AHL and LOW, we estimate CATE separately for black mothers and white mothers. We only report the estimation results for white mothers in this section; the results for the black mothers can be found in the online supplement. The dependent variable $Y$ is the baby's birth weight measured in grams. The treatment dummy $D$ takes on the value 1 if the mother smokes and 0 otherwise. We start from the set of variables displayed in Table \ref{sumstat}, and construct an even larger dictionary $b(X)$ by adding polynomial terms to account for the unknown form of the nuisance functions in a flexible way. Specifically, we include, up to degree 3, the powers and interaction terms of key dummy variables and continuous and integer covariates. We then end up with 792 covariates in total. 

With such a large set of covariates, it is not clear which variables are important in estimating the CATE function. The true set of variables which belong to the estimating equations is assumed to be sparse, as discussed in the previous sections. We hence apply the lasso method in \cite{BCFH13} to estimate propensity score ($\pi_0$) and conditional mean function ($\mu_0$). We then compute the robust score function $\psi$ for each observation $i$, and run a local linear regression of $\psi_i$ on mother's age evaluated at numerous grid points in the interval [15, 36] (years of age). We use the cross-fitting variant of the estimator, i.e., the nuisance function estimation and the kernel regression
take place in different subsamples, and then these roles are rotated. In the empirical study, we use the same $K$ ($=4$) as in the simulations. Granted that the theoretical property of the proposed $K$-fold cross-fitting estimator is the same as the full-sample estimator in large samples, we recommend using sample-splitting estimator with $K=4$ or 5 following the suggestion of \cite{CC18}. We refer to the resulting point estimates as HDCATE (HD stands for ``high-dimensional'').

AHL include the mother's age, education, month of first prenatal visit (=10 if prenatal care is foregone), number of prenatal visits, and indicators for the baby's gender, the mother's marital status, whether or not the father's age is missing, gestational diabetes, hypertension, amniocentesis, ultrasound exams, previous (terminated) pregnancies, and alcohol use as the confounding factors. The variables selected by our first-step estimation are similar to those used in AHL, with some notable differences.\footnote{Given that we use the cross-fitting method, there are $K=4$ first-stage estimates, and each has its own variable selection. The reported set of variables selected in the first stage is the union of the selected variables in the four split-sample first stages.} In the propensity score function, we also select father's age, and father's education, besides the ones used in AHL, but not gestational diabetes and amniocentesis. In the conditional mean function, we have father's education, and the rest overlap with that of  AHL.

\begin{figure}[htpb]
\begin{minipage}[c]{0.5\linewidth}
\centering

\includegraphics[width=1\textwidth,height=0.3\textheight]{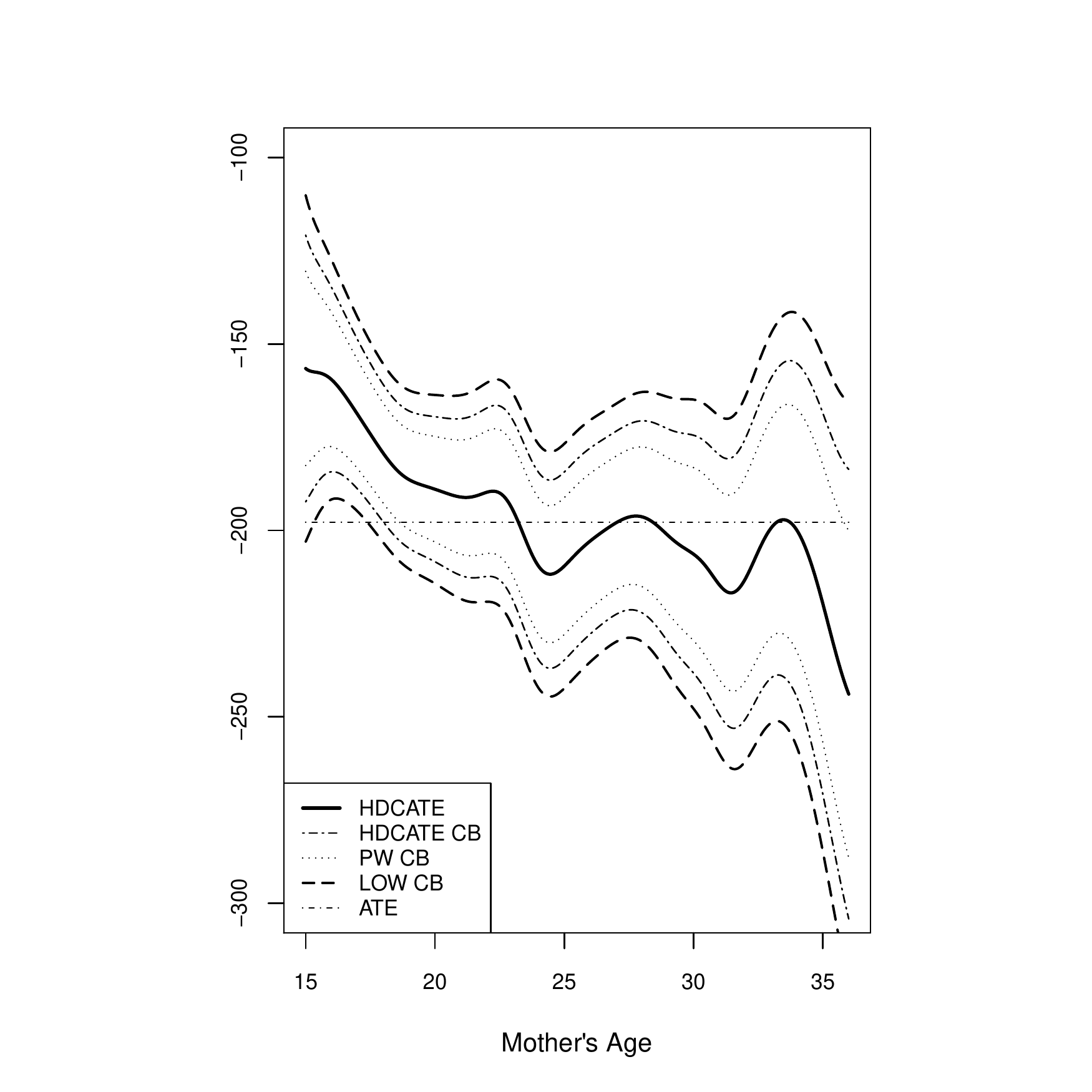}
\caption{CATE for the effect of smoking\\ on birth weights conditional on mother's\\ age, 99\% confidence bands.}
\label{fig1}
\end{minipage}%
\begin{minipage}[c]{0.5\linewidth}
\centering

\includegraphics[width=1\textwidth,height=0.3\textheight]{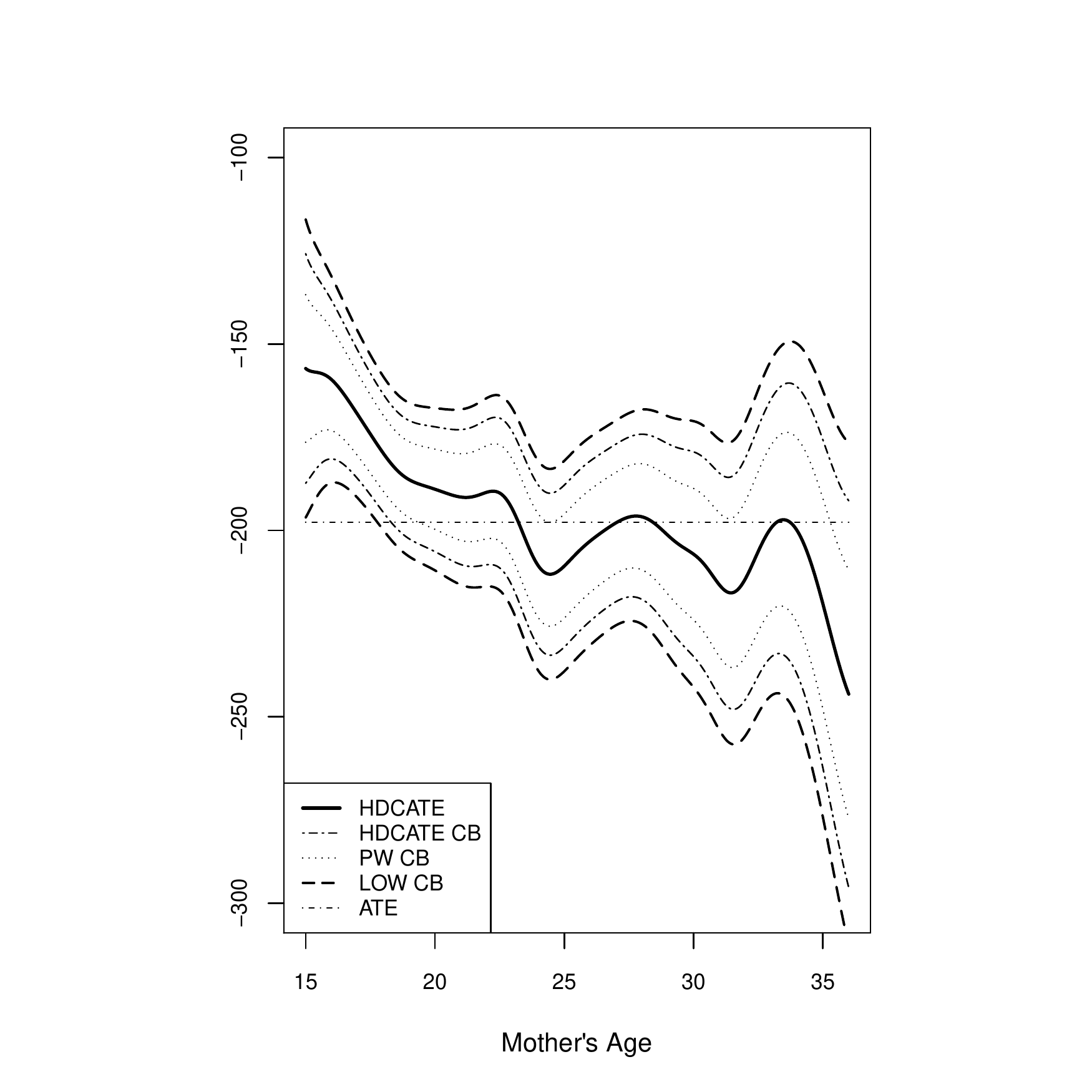}
\caption{CATE for the effect of smoking  on birth weights conditional on mother's age, 95\% confidence bands.}
\label{fig2}
\end{minipage}
\begin{minipage}[c]{1\linewidth}
\centering

\includegraphics[width=0.5\textwidth,height=0.3\textheight]{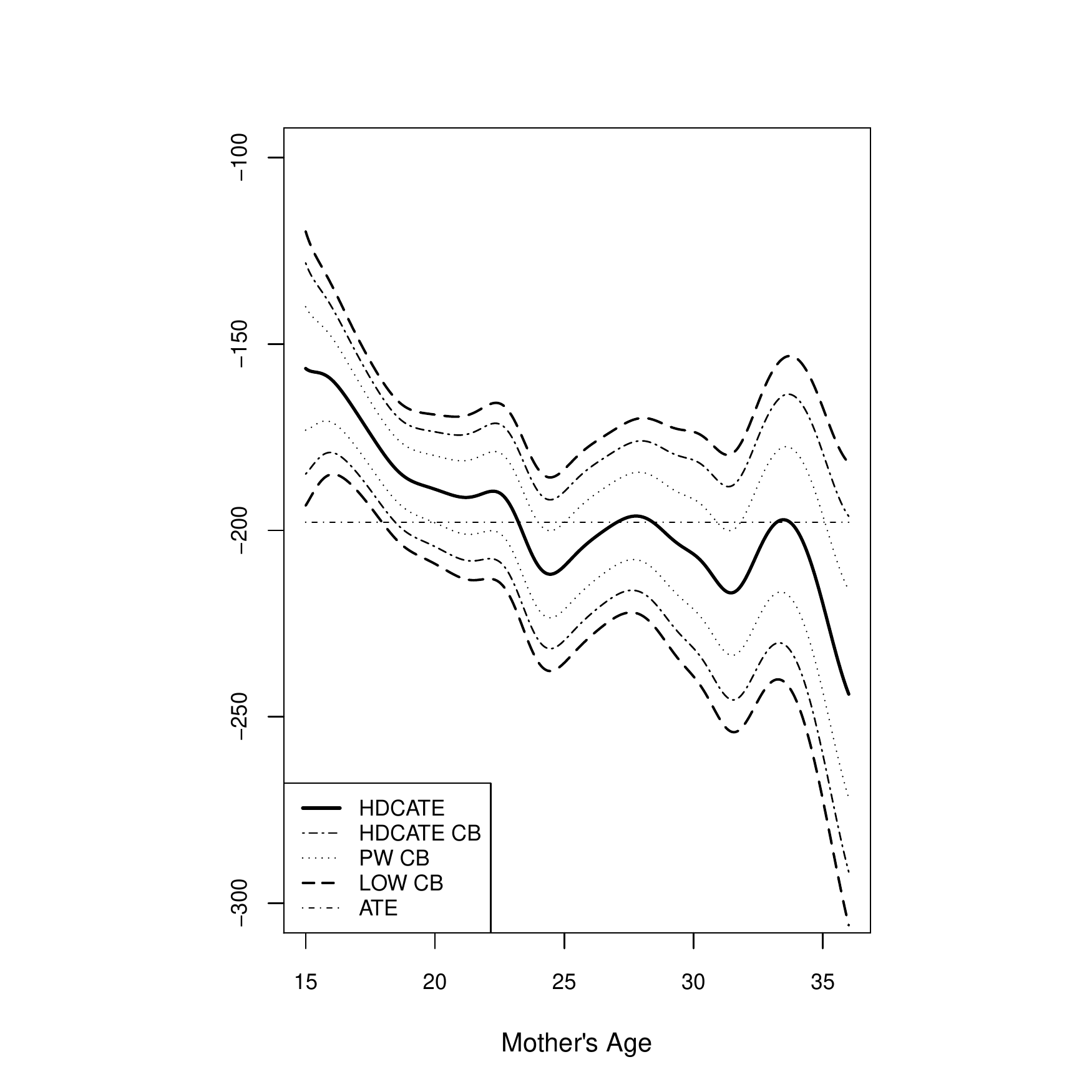}
\caption{CATE for the effect of smoking on birth weights conditional on mother's age, 90\% confidence bands.}
\label{fig3}
\end{minipage}%
\end{figure}

The HDCATE estimates are displayed in Figures \ref{fig1}, \ref{fig2} and \ref{fig3}, along with 90\%, 95\% and 99\% confidence bands, respectively. For a given confidence level, we compute two types of intervals. ``HDCATE CB'' is the proposed uniform confidence band computed according to the algorithm given in Section 4. ``PW CB'' is a pointwise confidence band, given for purposes of comparison, where the critical value $\widehat{C}^{\text{2-sided}}_\alpha$ is replaced by the corresponding value from the standard normal distribution (e.g., 1.96 for $\alpha=5\%$), and ``LOW CB'' is the uniform confidence band by that of LOW. The constant function labeled ``ATE'' represents the estimated average treatment effect across all ages.

Figures \ref{fig1}--\ref{fig3} show that maternal smoking has a negative effect on birth weight at all ages (the upper bounds of the confidence bands are negative), and the average effect is likely to become \emph{more negative} with age. For example, the point estimates show that for teenage mothers of age 18 or younger the negative effect of smoking is, on average, less than 180 grams in absolute value. For mothers around age 24, the same effect is $-220$ grams, and it approaches $-250$ above 35 years of age.\footnote{The non-monotonicities in the point estimate between ages 25 and 35 could be due to undersmoothing and the quickly declining number of first-time mothers toward the top of this age range.} Thus, there is substantial variation in the estimated average treatment effect by age. A potential explanation is that older mothers are likely to have smoked for a longer period, and the detrimental effects of smoking are cumulative (the smoking dummy does not control for duration or intensity of smoking). The figures also shed some light on the gains from using the proposed method compared with the original study of  AHL.  The CATE function changes the  shape and location of the treatment effect estimates, especially for the younger mothers, and shows that the estimated treatment effects are always significantly negative using our method.\footnote{ Granted, the choice of bandwidth is different for the two studies, which affects the shape of the estimated heterogeneous treatment curve to some extent, but it is not the key reason for the different results. If we were to use the same bandwidth choice as in  AHL, we would still observe the difference, as we mentioned in the main text.} Another important difference is that in this study we provide a valid uniform confidence band.

Examining the confidence bands qualifies the analysis of the point estimate in important ways. In Figure \ref{fig1}, the lower bound of the 99\% uniform confidence band (dashed line) attains its maximum at around 16 years of age, and the value of this maximum lies just below the minimum of the upper bound attained at around age 24. Thus, it is possible to fit a constant function (at about $-185$ grams) inside the uniform confidence bands. Nevertheless, if one is less conservative and uses the 95\% or 90\% uniform confidence bands displayed in Figures \ref{fig2} and \ref{fig3}, respectively, then it is no longer possible to do so. Thus, there is fairly compelling (statistically significant) evidence that the smoking effect becomes more negative at least between the ages of 16 and 24. Based on the pointwise confidence band, there is some evidence of further decline in HDCATE at higher ages but it is possible to fit constant functions even within the 90\% uniform confidence bands over the interval $[25,35]$. (Again note that these bands become rather wide at higher ages due to the relatively small number of observations.) The LOW confidence band is visibly wider than our confidence band, which is consistent with our simulation results.

\section{Conclusion}

We advance the literature on the estimation of the reduced dimensional CATE function by proposing that the nuisance functions necessary for identification be estimated by flexible machine learning methods, followed by a traditional local linear regression. 
The asymptotic theory we develop builds on previous work by \cite{BCFH13} and \cite{CC18}. Nevertheless, the theory requires non-trivial modifications to accommodate local linear regression in the second stage. Moreover, CATE is a functional parameter, and our results can be used to conduct uniform inference through a bootstrap procedure. 
In line with \cite{CC18}, we also advocate using the cross-fitting approach to estimate the nuisance functions and conduct the second-stage regression. 

Using the proposed methods, we revisited the problem of estimating the average effect of smoking during pregnancy on birth weight as a function of the mother's age. Our results fall in between AHL and LOW in the sense that we do find age-related heterogeneity (unlike LOW), but it is less marked than in the former study. In particular, there is evidence that the negative effect of smoking becomes somewhat more pronounced with age. 

\newpage
\appendix
\begin{center}
	\Large{Supplement to ``Estimation of Conditional Average Treatment Effects with
		High-Dimensional Data"}
\end{center}

\begin{abstract}
	This paper collects the supplementary materials to the original paper. Section \ref{sec:emp} contains the estimation results for black mothers and robustness check results of the main text. In Section \ref{sec:notation}, we introduce some notation. In Section \ref{sec:lc}, we describe the local constant second-stage estimation. In Sections \ref{sec:full} and \ref{sec:split}, we prove Theorems \ref{thm:cate} and \ref{thm:cate bootstrap} for the full- and split-sample estimators, respectively. In Section \ref{sec:var}, we prove Theorem \ref{thm: var consistency}. In Section \ref{sec:CS}, we prove  Theorem \ref{thm:boot-CS}. Section \ref{sec:lem} collects the proofs of technical lemmas. Simulation results are put in Section \ref{sec:sim}.
	
	\medskip
	\noindent \textbf{Keywords:} heterogeneous treatment effects, high-dimensional data, uniform confidence band
	
	\noindent \textbf{JEL codes:} C14, C21, C55
\end{abstract}

\setcounter{page}{1} \renewcommand\thesection{\Alph{section}} %

\renewcommand{\thefootnote}{\arabic{footnote}} \setcounter{footnote}{0}

\setcounter{equation}{0}

\section{Extra Empirical Results}\label{sec:emp}
\subsection{Estimation Results for Black Mothers}
\label{sec:black}
We also conduct our study on the subsample of black mothers. The sample size is 157,956 with the same covariates as discussed in the main text. The estimation results are summarized in the following figures. We observe similar pattern of increasingly negative treatment effect of smoking on birth weight. But different from the white mothers' sample, the shape of the CATE for black mothers is not as steep. For the younger aged mothers, the negative treatment effect for black mothers is around 100 grams, and for the older aged cohort, the estimated treatment is around negative 200 grams, whereas the two tails are around 180 grams and 250 grams for the white mothers. The confidence band is also wider using the black mother's sample. The socioeconomic factors associated with behavioral features such as smoking and other activities should be explored further in future studies. 

\begin{figure}[H]
	\begin{minipage}[c]{0.5\linewidth}
		\centering
		\includegraphics[width=1\textwidth,height=0.3\textheight]{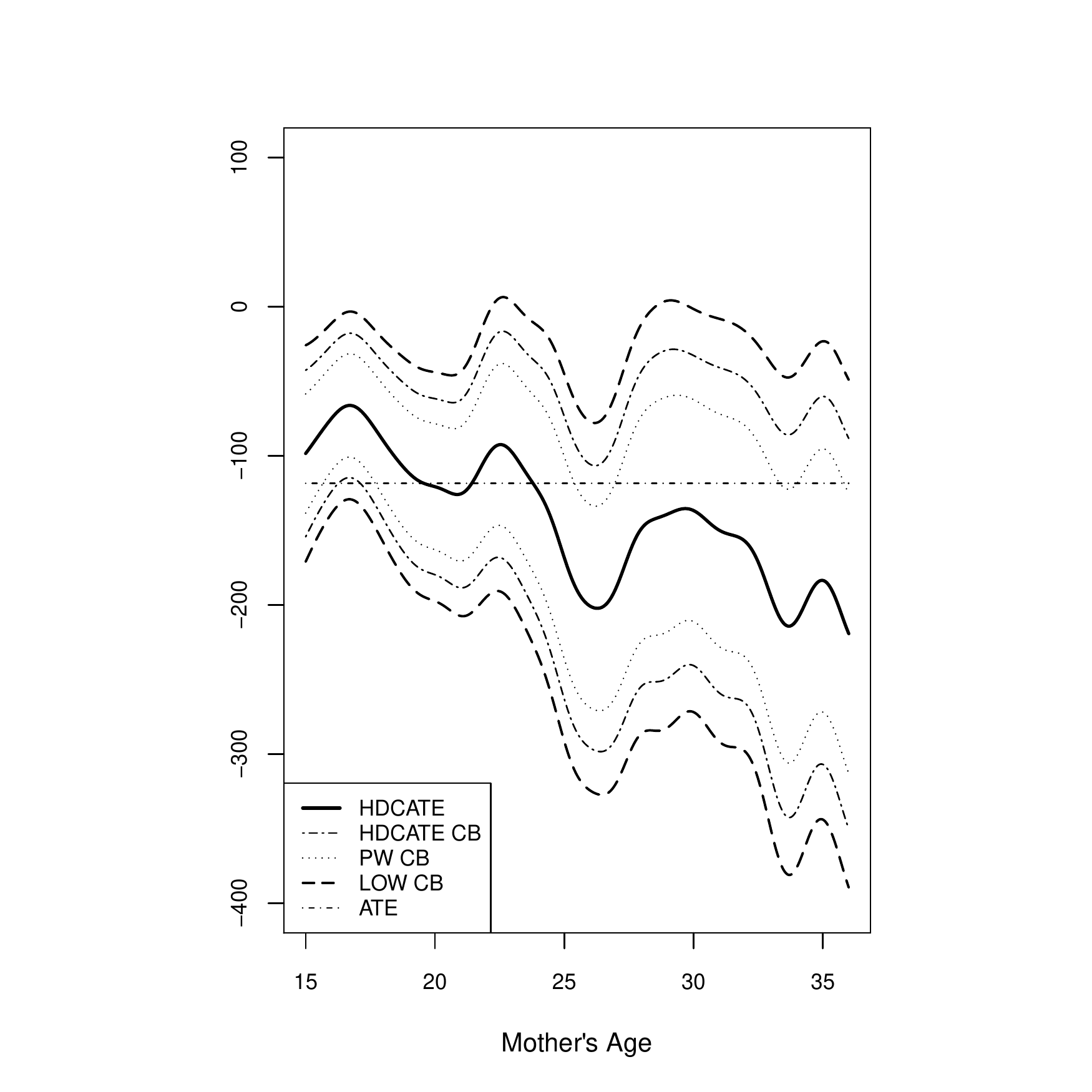}
		\caption{CATE for the effect of smoking\\ on birth weights conditional on mother's\\ age, 99\% confidence bands.}
		\label{fig1}
	\end{minipage}%
	\begin{minipage}[c]{0.5\linewidth}
		\centering
		\includegraphics[width=1\textwidth,height=0.3\textheight]{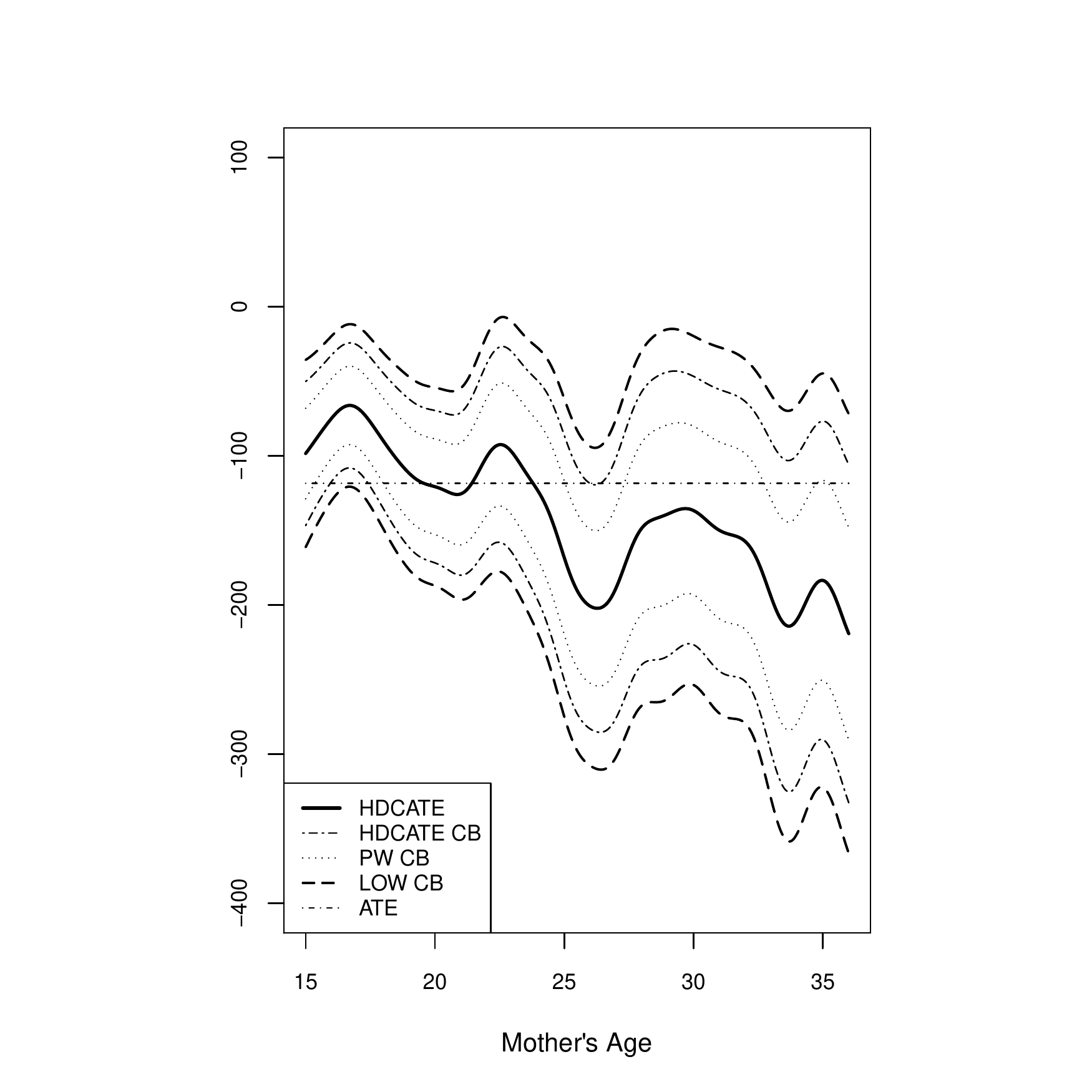}
		\caption{CATE for the effect of smoking  on birth weights conditional on mother's age, 95\% confidence bands.}
		\label{fig2}
	\end{minipage}
	\begin{minipage}[c]{1\linewidth}
		\centering
		\includegraphics[width=0.5\textwidth,height=0.3\textheight]{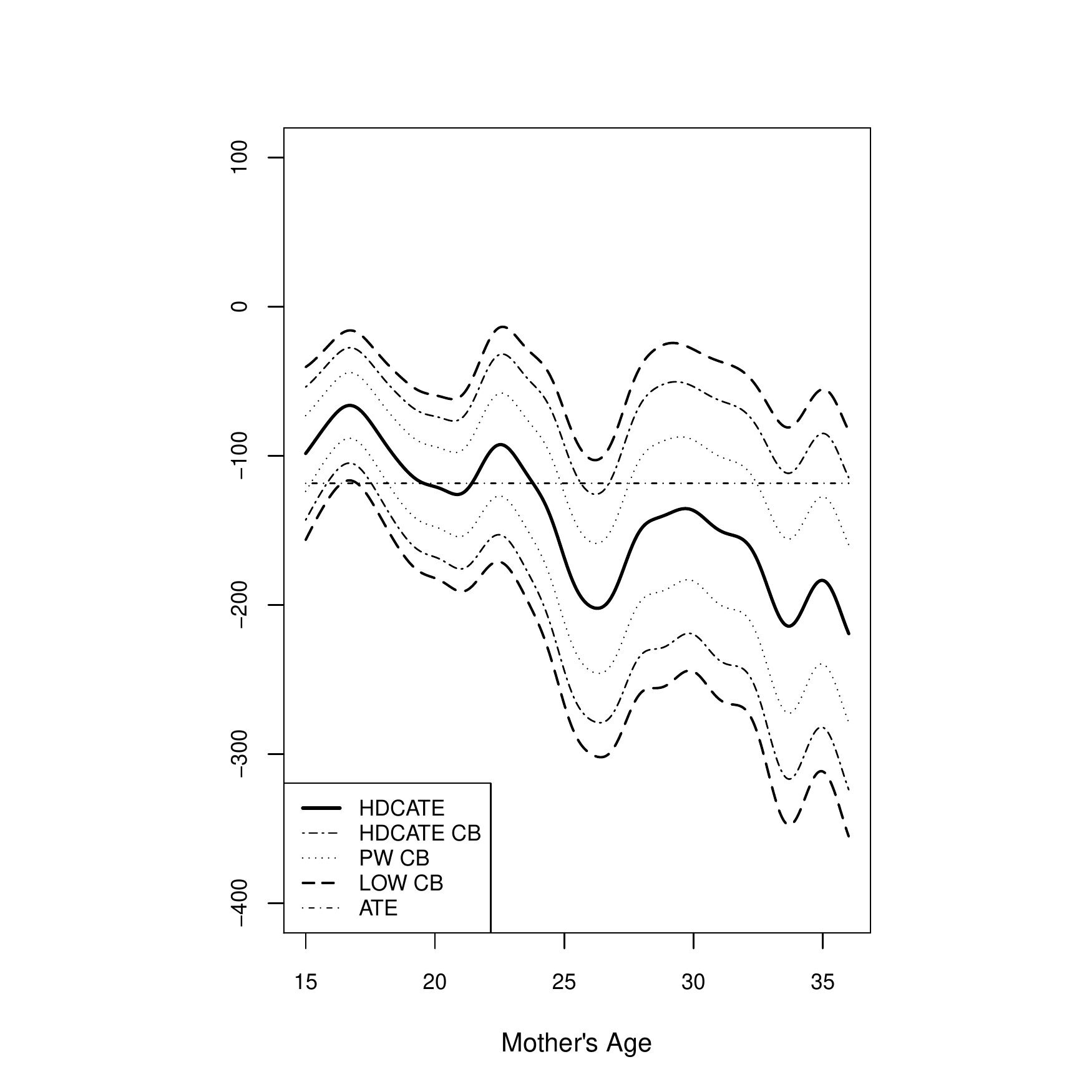}
		\caption{CATE for the effect of smoking on birth weights conditional on mother's age, 90\% confidence bands.}
		\label{fig3}
	\end{minipage}%
\end{figure}
\subsection{Robustness Check under Different ``Continuitization" Settings}
\label{sec:cont}
In the main text we add a uniform $[-1,1]$ random variable to the age variable. Here we show that the empirical results are robust under different ``continuitization" settings. Specifically, we add a uniform $[-0.5,0.5]$ error to age variable as in AHL. From the following figures, we could see the heterogeneous (with respect to age) treatment effect of smoking on birth weight is similar to what we found in the main text.

\begin{figure}[H]
	\begin{minipage}[c]{0.5\linewidth}
		\centering
		\includegraphics[width=1\textwidth,height=0.3\textheight]{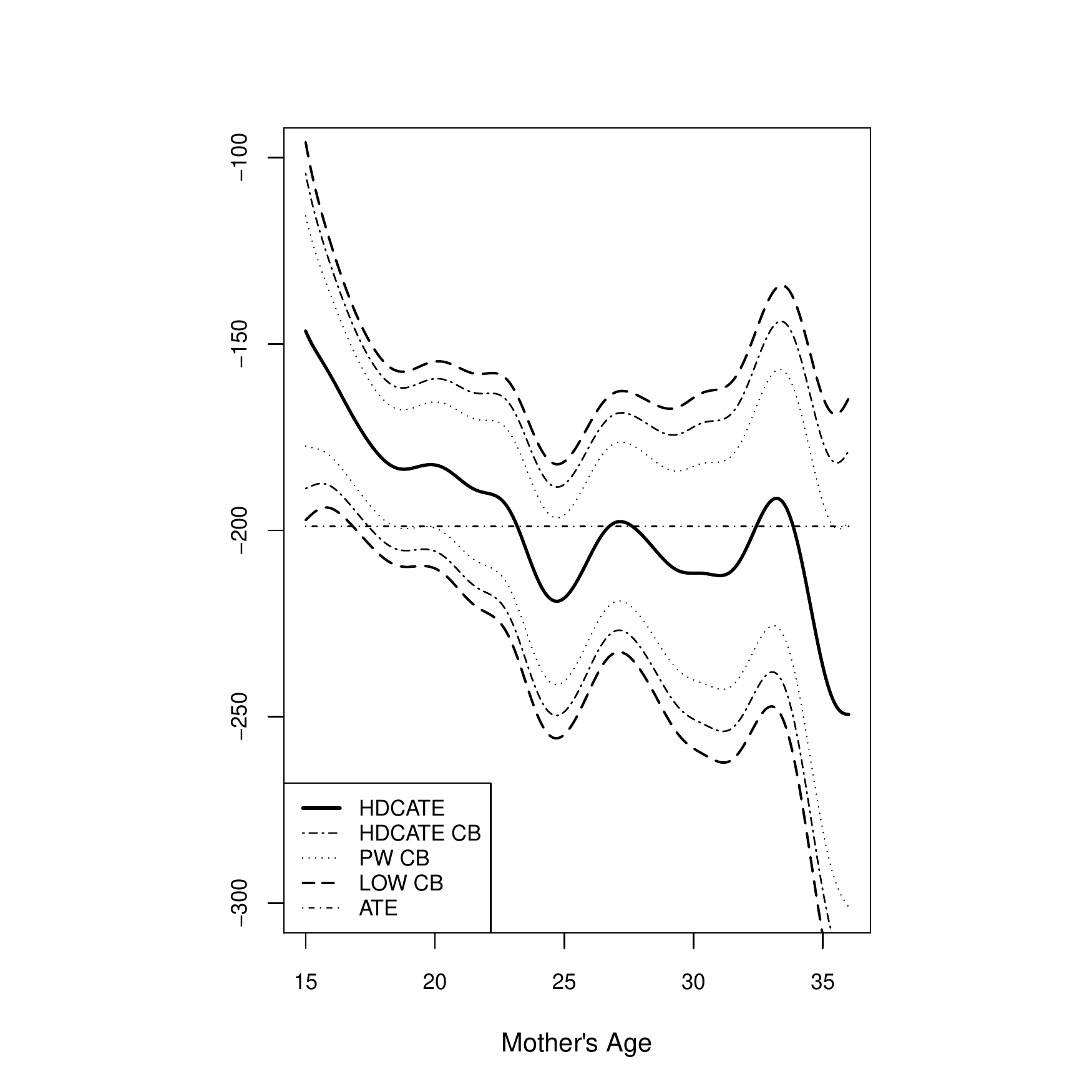}
		\caption{CATE for the effect of smoking\\ on birth weights conditional on mother's\\ age, 99\% confidence bands.}
		\label{fig1}
	\end{minipage}%
	\begin{minipage}[c]{0.5\linewidth}
		\centering
		\includegraphics[width=1\textwidth,height=0.3\textheight]{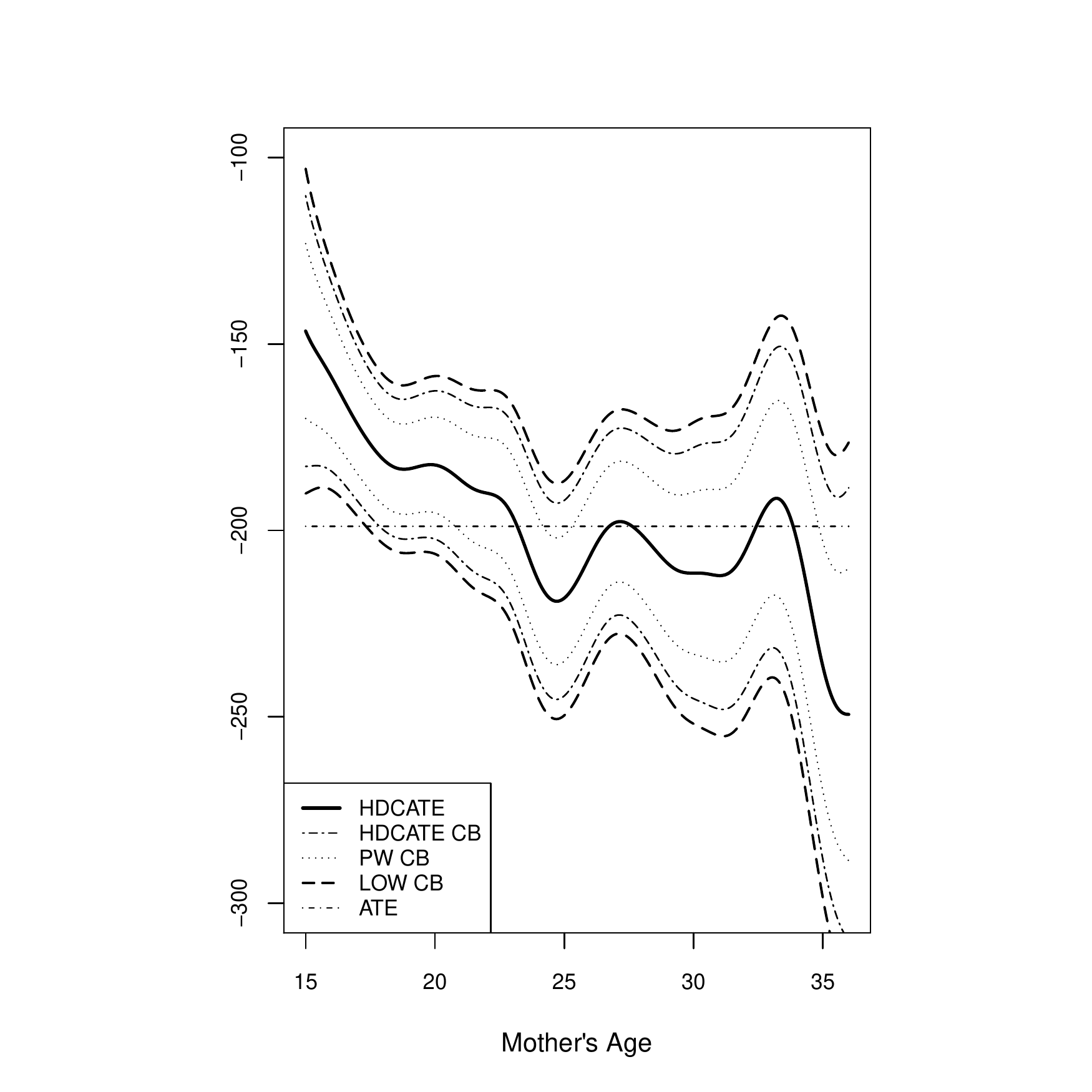}
		\caption{CATE for the effect of smoking  on birth weights conditional on mother's age, 95\% confidence bands.}
		\label{fig2}
	\end{minipage}
	\begin{minipage}[c]{1\linewidth}
		\centering
		\includegraphics[width=0.5\textwidth,height=0.3\textheight]{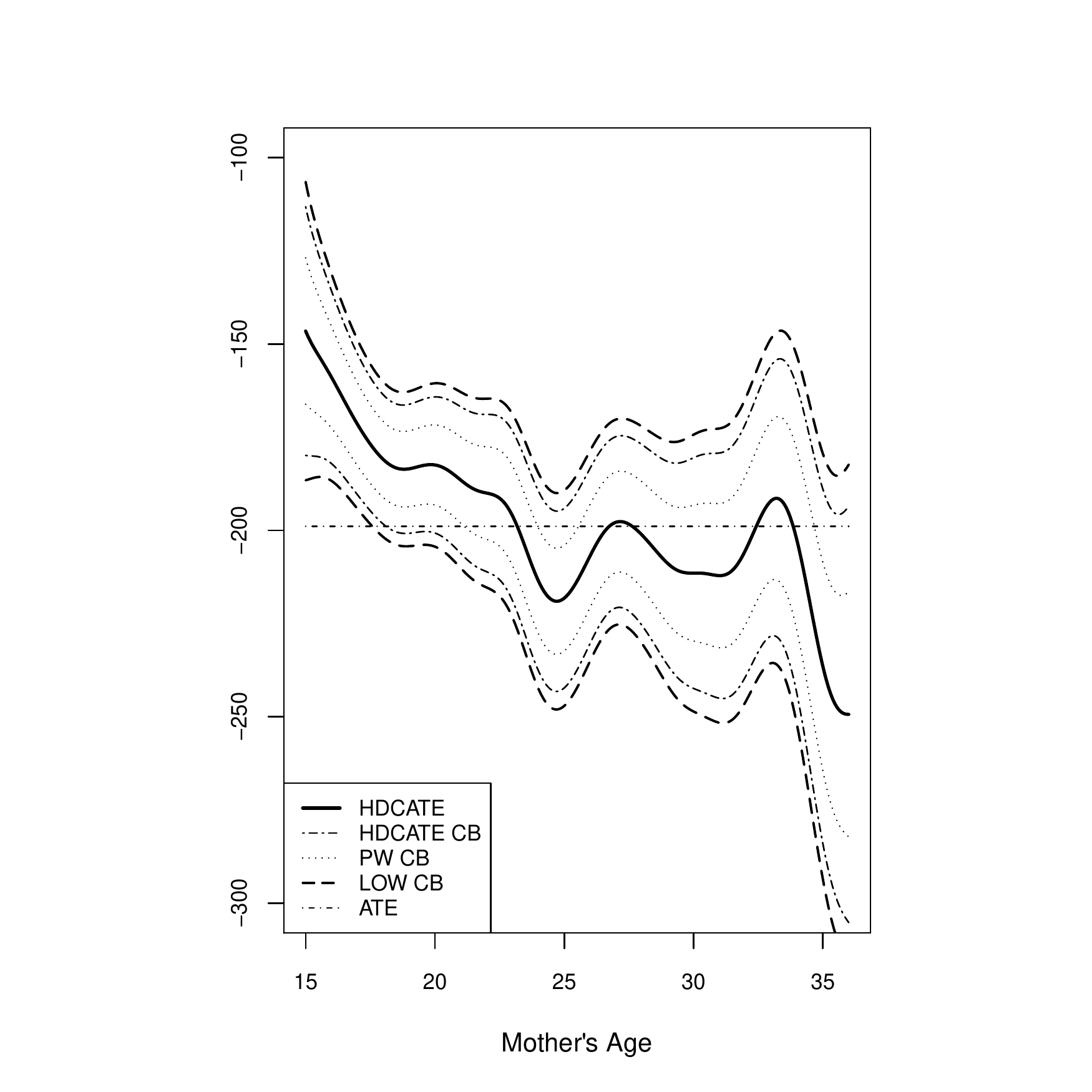}
		\caption{CATE for the effect of smoking on birth weights conditional on mother's age, 90\% confidence bands.}
		\label{fig3}
	\end{minipage}%
\end{figure}

\section{Notation}
\label{sec:notation}
In the following, we define notation that will be used later. 
\begin{itemize}
	\item Full-sample estimator:
	\begin{itemize}
		\item Let $\xi^*$ be either $1$ or $\xi$,  a random variable that satisfies Assumption \ref{assn: boot_xi}. The following proof is valid for the original and bootstrap estimators with $\xi^* = 1$ and $\xi$, respectively.
		\item Let $\psi(1,W,\eta) = \frac{D(Y-\mu(1,X))}{\pi(X)} + \mu(1,X)$ and $\psi(0,W,\eta) = \frac{(1-D)(Y-\mu(0,X))}{1-\pi(X)} + \mu(0,X)$ such that $\psi(W,\eta) = \psi(1,W,\eta)-\psi(0,W,\eta)$.
		\item Let $\tau_0(j,x_1) = \E(\mu_0(j,X)|X_1=x_1)$, $j=0,1$.
		\item Let $(\hat{\tau}^*(x_1),\hat{\beta}^*(x_1))$ be obtained as 
		\begin{align*}
		(\hat{\tau}^*(x_1),\hat{\beta}^*(x_1)) = \arg\min_{a,b} \sum_{i \in I} \xi_i^*\left[\psi(W_i,\hat\eta(I)) - a - (X_{1i} - x_1)'b \right]^2\mcalK_h\left(X_{1i}-x_1\right). 
		\end{align*}
		\item When $\xi_i^* = 1$, $(\hat{\tau}^*(x_1),\hat{\beta}^*(x_1)) = (\hat{\tau}(x_1),\hat{\beta}(x_1))$; when $\xi_i^* = \xi_i$, $(\hat{\tau}^*(x_1),\hat{\beta}^*(x_1)) = (\hat{\tau}^b(x_1),\hat{\beta}^b(x_1))$
	\end{itemize}
	\item Split-sample estimator:
	\begin{itemize}
		\item Let $\P_{n,k}f = \frac{1}{n}\sum_{i \in I_k}f(W_i)$  for a generic function $f(\cdot)$.
		\item Let $\P_{I_k}f = \E(f(W_1,\cdots,W_N)|W_i,i\in I_k^c)$ for a generic function $f$.
		\item Let $(\hat{\tau}^*_k(x_1),\hat{\beta}^*_k(x_1))$ be obtained as  
		\begin{align*}
		(\hat{\tau}_k^*(x_1),\hat{\beta}_k^*(x_1)) = \arg\min_{a,b} \sum_{i \in I_k} \xi_i^*\left[\psi(W_i,\hat\eta(I_k^c)) - a - (X_{1i} - x_1)'b \right]^2\mcalK_h\left(X_{1i}-x_1\right). 
		\end{align*}
		\item When $\xi_i^* = 1$, $(\hat{\tau}^*_k(x_1),\hat{\beta}^*_k(x_1)) = (\hat{\tau}_k(x_1),\hat{\beta}_k(x_1))$; when $\xi_i^* = \xi_i$, $(\hat{\tau}^*_k(x_1),\hat{\beta}^*_k(x_1)) = (\hat{\tau}^b_k(x_1),\hat{\beta}^b_k(x_1))$
		\item Denote $\check{\tau}^*(x_1) = \frac{1}{K}\sum_{k=1}^K\hat{\tau}_k^*(x_1).$ 
	\end{itemize}
\end{itemize}

\section{Local Constant Second-Stage Estimation}
\label{sec:lc}
In this section, we describe the local constant estimation in our second stage. 
\paragraph{The full-sample estimator} Let $\hat\mu(0,x; I)$, $\hat\mu(1,x; I)$ and $\hat\pi(x)$ be the first-stage estimator computed over the full sample $I$.  Furthermore, let
\[
\hat f(x_1; I)=\frac{1}{Nh^d}\sum_{i\in I}\mcalK_h\left(X_{1i}-x_1\right)
\]
denote a kernel density estimator of the p.d.f.\ of $X_1$ over $I$, where $\mcalK$ is a $d$-dimensional product kernel, $h$ is a smoothing parameter (bandwidth), and $\mcalK_h\left(u\right)  = \mcalK\left(\frac{u}{h}\right)$. Set $\hat\eta(I)=(\hat\mu(0,\cdot; I), \hat\mu(1,\cdot; I), \hat \pi(\cdot; I))$. The second stage of the full-sample estimator is
\begin{equation}\label{full sample 2nd stage k}
\hat\tau(x_1)=\frac{1}{Nh^d \hat f(x_1; I)} \sum_{i\in I}\psi(W_i,\hat\eta(I))\mcalK_h\left(X_{1i}-x_1\right).
\end{equation}

\paragraph{The $K$-fold cross-fitting estimator}  For each $k=1,\ldots, K$, let $\hat\mu(0,x; I_k^c)$, $\hat\mu(1,x; I_k^c)$ and $\hat\pi(x; I_k^c)$ be the first-stage estimators computed over the subsample $I_k^c=I\setminus I_k$. Furthermore, let
\[
\hat f(x_1; I_k)=\frac{1}{nh^d}\sum_{i\in I_k}\mcalK_h\left(X_{1i}-x_1\right)
\]
denote a kernel density estimator of the p.d.f.\ of $X_1$ over the subsample $I_k$. Set $\hat\eta(I_k^c)=(\hat\mu(0,\cdot; I_k^c), \hat\mu(1,\cdot; I_k^c), \hat \pi(\cdot; I_k^c) )$. The second stage of the $K$-fold cross-fitting estimator over the sample $I_k$ is
\begin{equation}\label{split sample 2nd stage k}
\tilde\tau_{k}(x_1)=\frac{1}{n h^d \hat f(x; I_k)} \sum_{i\in I_k}\psi(W_i,\hat\eta(I_k^c))\mcalK_h\left(X_{1i}-x_1\right).
\end{equation}
Finally, in the third stage we take the average of the $K$ preliminary estimates to obtain an efficient estimator:
\[
\check\tau(x_1)=\frac{1}{K}\sum_{k=1}^K \tilde\tau_k(x_1).
\]
Under the conditions for $\delta_{1N}$, $\delta_{2N}$, and $\delta_{4N}$ in Assumptions \ref{assn: 1st stage full} and  \ref{assn: 1st stage split} and all the other regularity conditions in the paper, we can derive the first-order linear expansions for $\hat{\tau}(x_1)$ and $\check\tau(x_1)$, which are the same as those in Theorem \ref{thm:cate}. We do not need the assume the condition for $\delta_{3N}$ as we cannot estimate the derivative of $\tau_0(x_1)$ for this local constant method.

For the multiplier bootstrap estimator can be computed by taking the weighted average in \eqref{full sample 2nd stage k} and \eqref{split sample 2nd stage k} with bootstrap weights $\xi_i$, i.e.,  
\begin{equation*}
\hat\tau^b(x_1)=\frac{1}{Nh^d \hat f^b(x_1; I)} \sum_{i\in I}\xi_i\psi(W_i,\hat\eta(I))\mcalK_h\left(X_{1i}-x_1\right).
\end{equation*}
and 
\begin{equation*}
\tilde\tau^b_{k}(x_1)=\frac{1}{n h^d \hat f^b(x; I_k)} \sum_{i\in I_k}\xi_i\psi(W_i,\hat\eta(I_k^c))\mcalK_h\left(X_{1i}-x_1\right),
\end{equation*}
where 
\[
\hat f^b(x_1; I)=\frac{1}{Nh^d}\sum_{i\in I}\xi_i\mcalK_h\left(X_{1i}-x_1\right)
\]
and 
\[
\hat f^b(x_1; I_k)=\frac{1}{nh^d}\sum_{i\in I_k}\xi_i\mcalK_h\left(X_{1i}-x_1\right).
\]
We can derive the first-order linear expansions for $\hat{\tau}^b(x_1)$ and $\check\tau^b(x_1)$, which are the same as those in Theorem \ref{thm:cate bootstrap}. Then, we can construct the uniform confidence band as in Section \ref{sec: unif inference}. Theorem \ref{thm:boot-CS} remains unchanged as the first-order linear expansions are the same. For the detailed proofs of all statements in this section, we refer readers to the previous version of this paper on arXiv.

\section{The proof of Theorems \ref{thm:cate} and \ref{thm:cate bootstrap} for the full-sample estimator}
\label{sec:full}
Let $D_N = \text{diag}((Nh^d)^{-1},(Nh^{d+2})^{-1}\iota_d)$ and 
$$F^{-1}(x_1) = \begin{pmatrix}
1/f(x_1) & 0 \\
- f^{(1)}/f^2(x_1) & I_d/(\nu f(x_1))
\end{pmatrix},$$ 
where $\iota_d$ is a $d \times 1$ vector of one. We aim to show 
\begin{align}
\label{eq:tb1}
\begin{pmatrix}
\hat{\tau}^*(x_1) - \tau_0(x_1) \\
\hat{\beta}^*(x_1) - \beta_0(x_1)
\end{pmatrix} = F^{-1}(x_1)\left(D_N \sum_{i \in I}\xi_i^*\begin{pmatrix}
1 \\
X_{1i} - x_1
\end{pmatrix} (\psi(W_i,\eta) - \tau_0(X_1)) \mathcal{K}_h(X_{1i}-x_1)\right) + \begin{pmatrix}
R_{\tau}(x_1) \\
R_{\beta}(x_1) \\
\end{pmatrix},
\end{align}
such that $\sup_{ x_1 \in \overline{\mathcal{X}}_1 }|R_{\tau}(x_1)| = o_p( (\log(N)Nh^d)^{-1/2})$ and $\sup_{ x_1 \in \overline{\mathcal{X}}_1 }||R_{\beta}(x_1)||_2 = o_p( (\log(N)Nh^{d+2})^{-1/2})$. When $\xi_i^*=1$, \eqref{eq:tb1} leads to the first result in Theorem \ref{thm:cate}. Taking the difference of \eqref{eq:tb1} for $\xi_i^* = \xi_i$ and $\xi_i^* =1$, we obtain the first result in Theorem \ref{thm:cate bootstrap}. 

First note 
\begin{align*}
\begin{pmatrix}
\hat{\tau}^*(x_1) \\
\hat{\beta}^*(x_1)
\end{pmatrix} = & \left[\sum_{i \in I} \xi_i^*D_N\begin{pmatrix}
1 & (X_{1i} - x_1)^T\\
X_{1i} - x_1 & (X_{1i} - x_1)(X_{1i} - x_1)^T
\end{pmatrix}\mathcal{K}_h(X_{1i}-x_1) \right]^{-1}\\
& \times \left[D_N \sum_{i \in I}\xi_i^*\begin{pmatrix}
1 \\
X_{1i} - x_1
\end{pmatrix} \psi(W_i,\hat{\eta}(I)) \mathcal{K}_h(X_{1i}-x_1)  \right] \\
\equiv &  (F^*_n(x_1))^{-1}G_n^*(x_1).
\end{align*}	
It can be shown by standard kernel estimation results (see, for example, \citet[Chapter 2]{lr07}) that 
\begin{align*}
F^*_n(x_1) = \begin{pmatrix}
f(x_1) + R_{1,F}(x_1) &  h^2(\nu f^{(1)}(x_1) + R_{2,F}(x_1))^T\\
\nu f^{(1)}(x_1) + R_{2,F}(x_1) & \nu (f(x_1) + R_{3,F}(x_1)) I_d
\end{pmatrix},
\end{align*}
where 
$$\sup_{x_1 \in \overline{\mathcal{X}}_1 }|R_{1,F}(x_1)| = O_p(\log^{1/2}(N)(Nh^d)^{-1/2}),$$ $$\sup_{x_1 \in \overline{\mathcal{X}}_1 }|R_{2,F}(x_1)| = O_p(\log^{1/2}(N)(Nh^{d+2})^{-1/2}),$$ $$\sup_{x_1 \in \overline{\mathcal{X}}_1 }|R_{3,F}(x_1)| = O_p(\log^{1/2}(N)(Nh^d)^{-1/2}),$$ $\nu = \int u^2 k(u)du$, $f^{(1)}(x_1)$ is the derivative of $f(x_1)$ w.r.t. $x_1$, and $I_d$ is the $d\times d$ identity matrix. Then, 

\begin{align}
\label{eq:Finver}
(F^*_n(x_1))^{-1} = F^{-1}(x_1) + R_{F}(x_1),
\end{align}
where 
\begin{align*}
\sup_{ x_1 \in \overline{\mathcal{X}}_1 }|R_{F}(x_1)| = & \sup_{ x_1 \in \overline{\mathcal{X}}_1 }(F_n^*(x_1))^{-1}(F(x_1) - F_n^*(x_1))(F(x_1))^{-1} \\
= & \begin{pmatrix}
O_p(\log^{1/2}(N)(Nh^d)^{-1/2})&  O_p(h^2)\\
O_p(\log^{1/2}(N)(Nh^{d+2})^{-1/2}) & O_p(\log^{1/2}(N)(Nh^d)^{-1/2}) I_d
\end{pmatrix}.
\end{align*}

For $G_n^*(x_1)$, we have 
\begin{align*}
G_n^*(x_1) = D_N \sum_{i \in I}\xi_i^*\begin{pmatrix}
1 \\
X_{1i} - x_1
\end{pmatrix} \psi(W_i,\eta) \mathcal{K}_h(X_{1i}-x_1) + R_n^*(x_1),
\end{align*}
where $R_n^*(x_1) = (R_{n,1}^*(x_1),(R_{n,2}^*(x_1))^T)^T$, 
\begin{align*}
R_{n,1}^*(x_1) = \sum_{i \in I}\frac{\xi_i^*}{Nh^d}\left(\psi(W_i,\hat{\eta}(I))-\psi(W_i,\eta)\right) \mathcal{K}_h(X_{1i}-x_1), 
\end{align*}
and 
\begin{align*}
R_{n,2}^*(x_1) = \sum_{i \in I}\frac{\xi_i^*}{Nh^{d+2}}\left(X_{1i} - x_1\right) \left(\psi(W_i,\hat{\eta}(I))-\psi(W_i,\eta)\right) \mathcal{K}_h(X_{1i}-x_1). 
\end{align*}

Then, 
\begin{align}
\label{eq:A0}
&\begin{pmatrix}
\hat{\tau}^*(x_1) - \tau_0(x_1) \\
\hat{\beta}^*(x_1) - \beta_0(x_1)
\end{pmatrix} \notag \\
= & (F^*_n(x_1))^{-1}G_n^*(x_1) - \begin{pmatrix}
\tau_0(x_1) \\
\beta_0(x_1)
\end{pmatrix} \notag \\
= & (F^*_n(x_1))^{-1}\left(D_N \sum_{i \in I}\xi_i^*\begin{pmatrix}
1 \\
X_{1i} - x_1
\end{pmatrix} \psi(W_i,\eta) \mathcal{K}_h(X_{1i}-x_1) + R_n^*(x_1)\right) - \begin{pmatrix}
\tau_0(x_1) \\
\beta_0(x_1)
\end{pmatrix} \notag \\
= & \left[(F^*_n(x_1))^{-1}\left(D_N \sum_{i \in I}\xi_i^*\begin{pmatrix}
1 \\
X_{1i} - x_1
\end{pmatrix} (\psi(W_i,\eta) - \tau_0(X_1)) \mathcal{K}_h(X_{1i}-x_1)\right) \right] \notag \\
& + \left[(F^*_n(x_1))^{-1}\left(D_N \sum_{i \in I}\xi_i^*\begin{pmatrix}
1 \\
X_{1i} - x_1
\end{pmatrix} (\tau_0(X_1) - \tau_0(x_1) - (X_1-x_1)^T\beta_0(x_1) ) \mathcal{K}_h(X_{1i}-x_1)\right)\right] \notag \\
& +\biggr[(F^*_n(x_1))^{-1}R_n^*(x_1) \biggl] \notag \\
\equiv & A_1(x_1) + A_2(x_2) + A_3(x_1). 
\end{align}

We note that, component-wise, 
\begin{align*}
\E \left(D_N \sum_{i \in I}\xi_i^*\begin{pmatrix}
1 \\
X_{1i} - x_1
\end{pmatrix} (\psi(W_i,\eta) - \tau_0(X_1)) \mathcal{K}_h(X_{1i}-x_1)\right)=0.
\end{align*}
Then, by \citet[Corollary 5.1]{CCK14}, we have 
\begin{align*}
\sup_{ x_1 \in \overline{\mathcal{X}}_1 }\left|\left(D_N \sum_{i \in I}\xi_i^*\begin{pmatrix}
1 \\
X_{1i} - x_1
\end{pmatrix} (\psi(W_i,\eta) - \tau_0(X_1)) \mathcal{K}_h(X_{1i}-x_1)\right)\right| = \begin{pmatrix}
O_p(\sqrt{\log(N)/(Nh^d)}) \\
O_p(\sqrt{\log(N)/(Nh^{d+2})})\iota_d \\
\end{pmatrix}.
\end{align*}
Then, by \eqref{eq:Finver}, 
\begin{align}
\label{eq:A1}
A_1(x_1) = F^{-1}(x_1)\left(D_N \sum_{i \in I}\xi_i^*\begin{pmatrix}
1 \\
X_{1i} - x_1
\end{pmatrix} (\psi(W_i,\eta) - \tau_0(X_1)) \mathcal{K}_h(X_{1i}-x_1)\right) + R_{1,A}(x_1),
\end{align}
where 
\begin{align*}
\sup_{ x_1 \in \overline{\mathcal{X}}_1 }|R_{1,A}(x_1)| = &  \sup_{ x_1 \in \overline{\mathcal{X}}_1 }|R_{F}(x_1)| \times \begin{pmatrix}
O_p(\sqrt{\log(N)/(Nh^d)}) \\
O_p(\sqrt{\log(N)/(Nh^{d+2})})\iota_d \\
\end{pmatrix} = \begin{pmatrix}
o_p(\left(\log(N)Nh^d\right)^{-1/2}) \\
o_p(\left(\log(N)Nh^{d+2}\right)^{-1/2})\iota_d \\
\end{pmatrix}.
\end{align*}

For the first element of $A_2(x_1)$, by the Taylor expansion to the second order, we have 
\begin{align*}
\left|\frac{1}{Nh^d}\sum_{i \in I}\xi_i^*(\tau_0(X_1) - \tau_0(x_1) - (X_1-x_1)^T\beta_0(x_1) ) \mathcal{K}_h(X_{1i}-x_1)\right| \leq \frac{1}{Nh^d}\sum_{i \in I}|\xi_i^*|||X_{1i}-x_1||_2^2\mathcal{K}_h(X_{1i}-x_1).
\end{align*}
By \citet[Corollary 5.1]{CCK14}, we have 
\begin{align}
\label{eq:2}
& \sup_{ x_1 \in \overline{\mathcal{X}}_1 } \frac{1}{Nh^d}\sum_{i \in I}|\xi_i^*|||X_{1i}-x_1||_2^2\mathcal{K}_h(X_{1i}-x_1) \notag \\
\leq & \sup_{ x_1 \in \overline{\mathcal{X}}_1 }(\P_N - \P)h^{-d}|\xi^*|||X_{1}-x_1||_2^2\mathcal{K}_h(X_{1}-x_1) + \sup_{ x_1 \in \overline{\mathcal{X}}_1 }\P h^{-d}|\xi^*|||X_{1}-x_1||_2^2\mathcal{K}_h(X_{1}-x_1) \notag \\
= & O_p(h^2\sqrt{\log(N)/Nh^d}  +h^2) = O_p(h^2) = o_p(\left(\log(N)Nh^d\right)^{-1/2}).
\end{align}
For the rest of the elements in $A_2(x_1)$, we need to conduct Taylor expansion to the third order so that 
\begin{align*}
& \left\Vert \frac{1}{Nh^{d+2}}\sum_{i \in I}\xi_i^*(X_1 - x_1)(\tau_0(X_1) - \tau_0(x_1) - (X_1-x_1)^T\beta_0(x_1) ) \mathcal{K}_h(X_{1i}-x_1)\right\Vert_2 \\
\lesssim & \frac{1}{Nh^{d+2}}\biggl\Vert\sum_{i \in I}\xi_i^*(X_{1i}-x_1)(X_{1i}-x_1)^T\tau^{(2)}(x_1)(X_{1i}-x_1)\mathcal{K}_h(X_{1i}-x_1)\biggr\Vert_2 \\
&+ \frac{1}{Nh^{d+2}}\biggl|\sum_{i \in I}\xi_i^*||X_{1i}-x_1||_2^4\mathcal{K}_h(X_{1i}-x_1)\biggr|.
\end{align*}
For the first term of the above display, by \citet[Corollary 5.1]{CCK14}, 
\begin{align*}
& \sup_{ x_1 \in \overline{\mathcal{X}}_1 }\frac{1}{Nh^{d+2}}\biggl\Vert \sum_{i \in I}\xi_i^*(X_{1i}-x_1)(X_{1i}-x_1)^T\tau^{(2)}(x_1)(X_{1i}-x_1)\mathcal{K}_h(X_{1i}-x_1)\biggr\Vert_2 \\
\leq & \sup_{ x_1 \in \overline{\mathcal{X}}_1 } \biggl\Vert(\P_N-\P)h^{-d-2}\xi^*(X_{1}-x_1)(X_{1}-x_1)^T\tau^{(2)}(x_1)(X_{1}-x_1)\mathcal{K}_h(X_{1}-x_1)\biggr\Vert_2 \\ 
& +  \sup_{ x_1 \in \overline{\mathcal{X}}_1 } \biggl\Vert \P h^{-d-2}\xi^*(X_{1}-x_1)(X_{1}-x_1)^T\tau^{(2)}(x_1)(X_{1}-x_1)\mathcal{K}_h(X_{1}-x_1)\biggr\Vert_2
\\
\leq & O_p(\sqrt{\log(N)h^2/(Nh^d)}) + O(h^2) = O_p(h^2) = o_p(\left(\log(N)Nh^{d+2}\right)^{-1/2}).  
\end{align*}
For the second term of the above display, by the same argument as \eqref{eq:2}, we have 
\begin{align*}
\frac{1}{Nh^{d+2}} \sum_{i \in I}|\xi_i^*| ||X_{1i}-x_1||_2^4\mathcal{K}_h(X_{1i}-x_1) = O_p(h^2) = o_p(\left(\log(N)Nh^{d+2}\right)^{-1/2}).
\end{align*}

Last, by \eqref{eq:Finver}, we have
\begin{align}
\label{eq:A2}
\sup_{ x_1 \in \overline{\mathcal{X}}_1 }||A_2(x_1)||_2 = O_p(h^2). 
\end{align}

By Lemma \ref{lem:1}, we have 
\begin{align}
\label{eq:R1}
\sup_{x_1 \in \overline{\mathcal{X}}_1}|R_{n,1}^*(x_1)| = o_p((\log(N)Nh^d)^{-1/2})
\end{align}
and 
\begin{align}
\label{eq:R2}
\sup_{x_1 \in \overline{\mathcal{X}}_1}|R_{n,2}^*(x_1)| = o_p((\log(N)Nh^{d+2})^{-1/2}).
\end{align}
Combining \eqref{eq:Finver}, \eqref{eq:R1}, and \eqref{eq:R2}, we have, component-wise 
\begin{align}
\label{eq:A3}
\sup_{ x_1 \in \overline{\mathcal{X}}_1 }|A_3(x_1)| = \begin{pmatrix}
O_p((\log(N)Nh^d)^{-1/2}) \\
O_p((\log(N)Nh^{d+2})^{-1/2})\iota_d
\end{pmatrix},
\end{align}

Combining \eqref{eq:A0}, \eqref{eq:A1}, \eqref{eq:A2}, and \eqref{eq:A3}, we obtain \eqref{eq:tb1}. This concludes the proof.

\section{The proof of Theorems \ref{thm:cate} and \ref{thm:cate bootstrap} for the split-sample estimator}
\label{sec:split}
Recall that $n=N/K$. Let $D_n = \text{diag}((nh^d)^{-1},(nh^{d+2})^{-1}\iota_d) = K D_N$. We aim to show 
\begin{align}
\label{eq:tb2}
\begin{pmatrix}
\breve{\tau}^*(x_1) - \tau_0(x_1) \\
\breve{\beta}^*(x_1) - \beta_0(x_1)
\end{pmatrix} = F^{-1}(x_1)\left(D_N \sum_{i \in I}\xi_i^*\begin{pmatrix}
1 \\
X_{1i} - x_1
\end{pmatrix} (\psi(W_i,\eta) - \tau_0(X_1)) \mathcal{K}_h(X_{1i}-x_1)\right) + \begin{pmatrix}
\check{R}_{\tau}(x_1) \\
\check{R}_{\beta}(x_1) \\
\end{pmatrix},
\end{align}
such that $\sup_{x_1 \in \overline{\mathcal{X}}_1 }|\check{R}_{\tau}(x_1)| = o_p( (\log(N)Nh^d)^{-1/2})$ and $\sup_{x_1 \in \overline{\mathcal{X}}_1 }||\check{R}_{\beta}(x_1)||_2 = o_p( (\log(N)Nh^{d+2})^{-1/2})$. When $\xi_i^*=1$, \eqref{eq:tb2} leads to the second result in Theorem \ref{thm:cate}. Taking the difference of \eqref{eq:tb2} for $\xi_i^* = \xi_i$ and $\xi_i^* =1$, we obtain the second result in Theorem \ref{thm:cate bootstrap}.

Following the same argument in the previous section, we have
\begin{align*}
\begin{pmatrix}
\hat{\tau}_k^*(x_1) \\
\hat{\beta}_k^*(x_1)
\end{pmatrix} = & \left[\sum_{i \in I_k} \xi_i^*D_n\begin{pmatrix}
1 & X_{1i} - x_1\\
X_{1i} - x_1 & (X_{1i} - x_1)(X_{1i} - x_1)^T
\end{pmatrix}\mathcal{K}_h(X_{1i}-x_1) \right]^{-1}\\
& \times \left[D_n \sum_{i \in I_k}\xi_i^*\begin{pmatrix}
1 \\
X_{1i} - x_1
\end{pmatrix} \psi(W_i,\hat{\eta}(I_k^c)) \mathcal{K}_h(X_{1i}-x_1)  \right] \\
\equiv &  (F^*_{n,k}(x_1))^{-1}G_{n,k}^*(x_1).
\end{align*}	
Similar to \eqref{eq:Finver}, we have 
\begin{align}
\label{eq:Finvers}
(F^*_{n,k}(x_1))^{-1} = F^{-1}(x_1) + R_{F,k}(x_1),
\end{align}
where $F^{-1}(x_1)$ is defined as in the previous section and 
$$ \sup_{x_1 \in \overline{\mathcal{X}}_1, k =1,\cdots,K}| R_{F,k}(x_1) | = \begin{pmatrix}
O_p(\log^{1/2}(N)(Nh^d)^{-1/2})&  O_p(h^2)\\
O_p(\log^{1/2}(N)(Nh^{d+2})^{-1/2}) & O_p(\log^{1/2}(N)(Nh^d)^{-1/2}) I_d
\end{pmatrix}.$$

For $G_{n,k}^*(x_1)$, we have 
\begin{align*}
G_{n,k}^*(x_1) = D_n \sum_{i \in I_k}\xi_i^*\begin{pmatrix}
1 \\
X_{1i} - x_1
\end{pmatrix} \psi(W_i,\eta) \mathcal{K}_h(X_{1i}-x_1) + R_{n,k}^*(x_1),
\end{align*}
where $R_{n,k}^*(x_1) = (R_{n,1,k}^*(x_1),(R_{n,2,k}^*(x_1))^T)^T$, 
\begin{align*}
R_{n,1,k}^*(x_1) = \sum_{i \in I}\frac{\xi_i^*}{nh^d}\left(\psi(W_i,\hat{\eta}(I_k^c))-\psi(W_i,\eta)\right) \mathcal{K}_h(X_{1i}-x_1), 
\end{align*}
and 
\begin{align*}
R_{n,2,k}^*(x_1) = \sum_{i \in I_k}\frac{\xi_i^*}{nh^{d+2}}\left(X_{1i} - x_1\right) \left(\psi(W_i,\hat{\eta}(I_k^c))-\psi(W_i,\eta)\right) \mathcal{K}_h(X_{1i}-x_1). 
\end{align*}
By Lemma \ref{lem:2}, we have  
\begin{align}
\label{eq:R1k}
\sup_{x_1 \in \overline{\mathcal{X}}_1,k=1,\cdots,K}|R_{n,1,k}^*(x_1)| = o_p((\log(N)Nh^d)^{-1/2})
\end{align}
and 
\begin{align}
\label{eq:R2k}
\sup_{x_1 \in \overline{\mathcal{X}}_1,k=1,\cdots,K}|R_{n,2,k}^*(x_1)| = o_p((\log(N)Nh^{d+2})^{-1/2}).
\end{align}

Then, following the same argument as those in the previous section, we have 
\begin{align}
\label{eq:A0k}
&\begin{pmatrix}
\hat{\tau}_k^*(x_1) - \tau_0(x_1) \\
\hat{\beta}_k^*(x_1) - \beta_0(x_1)
\end{pmatrix} \notag \\
= & \left[(F^*_{n,k}(x_1))^{-1}\left(D_n \sum_{i \in I_k}\xi_i^*\begin{pmatrix}
1 \\
X_{1i} - x_1
\end{pmatrix} (\psi(W_i,\eta) - \tau_0(X_1)) \mathcal{K}_h(X_{1i}-x_1)\right) \right] \notag \\
& + \left[(F^*_{n,k}(x_1))^{-1}\left(D_n \sum_{i \in I_k}\xi_i^*\begin{pmatrix}
1 \\
X_{1i} - x_1
\end{pmatrix} (\tau_0(X_1) - \tau_0(x_1) - (X_1-x_1)^T\beta_0(x_1) ) \mathcal{K}_h(X_{1i}-x_1)\right)\right] \notag \\
& +\biggr[(F^*_{n,k}(x_1))^{-1}R_{n,k}^*(x_1) \biggl] \notag \\
\equiv & A_{1k}(x_1) + A_{2k}(x_2) + A_{3k}(x_1). 
\end{align}
We can show that 
\begin{align}
\label{eq:A1k}
A_{1k}(x_1) = F^{-1}(x_1)\left(D_n \sum_{i \in I_k}\xi_i^*\begin{pmatrix}
1 \\
X_{1i} - x_1
\end{pmatrix} (\psi(W_i,\eta) - \tau_0(X_1)) \mathcal{K}_h(X_{1i}-x_1)\right) + R_{1,A,k}(x_1),
\end{align}
\begin{align}
\label{eq:A2k}
\sup_{ x_1 \in \overline{\mathcal{X}}_1 }||A_{2k}(x_1)||_2 = O_p(h^2),
\end{align}
and component-wise 
\begin{align}
\label{eq:A3k}
\sup_{ x_1 \in \overline{\mathcal{X}}_1 }|A_{3k}(x_1)| = \begin{pmatrix}
o_p((\log(N)Nh^d)^{-1/2}) \\
o_p((\log(N)Nh^{d+2})^{-1/2})\iota_d
\end{pmatrix},
\end{align}
where 
\begin{align*}
\sup_{k \leq K, x_1 \in \overline{\mathcal{X}}_1 }|R_{1,A,k}(x_1)| = \sup_{ x_1 \in \overline{\mathcal{X}}_1 }|R_{F,k}(x_1)| \times \begin{pmatrix}
O_p(\sqrt{\log(N)/(Nh^d)}) \\
O_p(\sqrt{\log(N)/(Nh^{d+2})})\iota_d \\
\end{pmatrix} = \begin{pmatrix}
o_p((\log(N)Nh^d)^{-1/2}) \\
o_p((\log(N)Nh^{d+2})^{-1/2})\iota_d
\end{pmatrix}.
\end{align*}
Combining \eqref{eq:A0k}--\eqref{eq:A3k}, we have 
\begin{align*}
\begin{pmatrix}
\hat{\tau}_k(x_1) - \tau_0(x_1) \\
\hat{\beta}_k(x_1) - \beta_0(x_1)
\end{pmatrix} = F^{-1}(x_1)\left(D_n \sum_{i \in I_k}\xi_i^*\begin{pmatrix}
1 \\
X_{1i} - x_1
\end{pmatrix} (\psi(W_i,\eta) - \tau_0(X_1)) \mathcal{K}_h(X_{1i}-x_1)\right) + \begin{pmatrix}
R_{\tau,k}(x_1) \\
R_{\beta,k}(x_1) \\
\end{pmatrix},
\end{align*}
such that $\sup_{k \leq K, x_1 \in \overline{\mathcal{X}}_1 }|R_{\tau,k}(x_1)| = o_p( (\log(N)Nh^d)^{-1/2})$ and $\sup_{k \leq K, x_1 \in \overline{\mathcal{X}}_1 }||R_{\beta,k}(x_1)||_2 = o_p( (\log(N)Nh^{d+2})^{-1/2})$. Taking average over $k$ on both sides, we obtain \eqref{eq:tb2}. This concludes the proof.

\section{Proof of Theorem \ref{thm: var consistency}}
\label{sec:var}
We focus on the split-sample estimator $\check{\sigma}_N^2(x_1)$. The proof for the full-sample estimator $\widehat{\sigma}_N^2(x_1)$ is similar but simpler. Let $$\sigma_k^2(x_1) = Var\left(\frac{\sqrt{nh^d}}{h^d f(x_1)}\P_{n,k}(\psi(W,\eta_0) - \tau_0(x_1))\mathcal{K}_h\left(X_1-x_1\right)\right)$$ and recall that $\sigma^2_N(x_1)$ is defined as
$$\sigma^2_N(x_1)=Var\left(\frac{\sqrt{Nh^d}}{h^d f(x_1)}\P_{N}(\psi(W,\eta_0) - \tau_0(x_1))\mathcal{K}_h\left(X_1-x_1\right)\right).$$

Then, we have
\begin{align*}
\frac{1}{K}\sum_{k=1}^K \sigma_k^2(x_1) = \sigma^2_N(x_1).
\end{align*}
Therefore, it suffices to show that
\begin{align*}
\sup_{k \leq K, x_1 \in \overline{\mathcal{X}}_1 }|\check{\sigma}^2_k(x_1) - \sigma_k^2(x_1)| = o_p(1).
\end{align*}

Let $\Gamma(W,x_1) =\frac{\psi(W_i,\eta_0)-\tau_0(x_1) }{h^d f(x_1)} \mathcal{K}_h(X_1-x_1)$ and $\ddot{\sigma}_k^2(x_1)=h^d\P_{n,k}(\Gamma(W,x_1))^2$.  We aim to show that, for $k=1,\cdots,K$,
\begin{align}
\sup_{x_1\in\overline{\mathcal{X}}_1}|\ddot{\sigma}_k^2(x_1)-\sigma_k^2(x_1)|=o_p(1)
\label{eq: ddot sigma}
\end{align}
and
\begin{align}
\sup_{x_1\in\overline{\mathcal{X}}_1}|\ddot{\sigma}_k^2(x_1)-\check{\sigma}_k^2(x_1)|=o_p(1).
\label{eq: diff sigma}
\end{align}
We first show (\ref{eq: ddot sigma}). We claim that $\sup_{x_1\in\overline{\mathcal{X}}_1}|Var(\sqrt{nh^d}\P_{n,k}\Gamma(W,x_1))-\E(nh^d[\P_{n,k} \Gamma(W,x_1)]^2)|=o_p(1).$  Because $Var(A)=\E[A^2]-\E[A]^2$, it is equivalent to show that $\E[\sqrt{nh^d}\P_{n,k}\Gamma(W,x_1)]=o(1)$ uniformly over $x_1$.  By standard arguments and Assumption \ref{ass:regularity}, we have, uniformly over $x_1$
\begin{align*}
\sup_{x_1 \in \overline{\mathcal{X}}_1 }\E[\sqrt{nh^d}\P_{n,k}\Gamma(W,x_1)]= O\left(\sqrt{nh^d}h^2\right)=o(1),
\end{align*}
Similarly, we have
\begin{align*}
\sup_{x_1 \in \overline{\mathcal{X}}_1 }|\E(nh^d[\P_{n,k} \Gamma(W,x_1)]^2) - \E(h^d\P_{n,k} \Gamma^2(W,x_1))| = O(nh^{d+4}) = o(1).
\end{align*}

We next show that
\begin{align}
\label{eq:a13'}
\sup_{x_1\in\overline{\mathcal{X}}_1}|h^d\P_{n,k}\Gamma^2(W,x_1)-\E [h^d\P_{n,k}\Gamma^2(W,x_1)]|=o_p(1).
\end{align}

By \citet[Section 2.6]{VW96}, we have
\begin{align*}
\mathcal{F}_K=\left\{ \mathcal{K}\Big(\frac{X_1-x_1}{h}\Big): x_1\in\overline{\mathcal{X}}_1 \right\}
\end{align*}
is of VC type with envelop function $\overline{K}=\sup_{u}|\mathcal{K}(u)|$ which is bounded.  This implies that
\begin{align*}
\mathcal{F}_{K^2}=\left\{ \mathcal{K}^2\Big(\frac{X_1-x_1}{h}\Big): x_1\in\overline{\mathcal{X}}_1 \right\}
\end{align*}
is of VC type with envelop function $\overline{K}^2$. Similarly,
\begin{align*}
\mathcal{F}_{h^d\Gamma^2}=\left\{h^d\Gamma^2(W,x_1): x_1\in\overline{\mathcal{X}}_1 \right\}
\end{align*}
is of VC type with an envelop function $Ch^{-d}\overline{K}^2 \cdot (\psi(W,\eta_0)-\tau_0(x_1))^2$ for some constant $C>0$. In addition,
\begin{align*}
\sup_{x_1 \in \overline{\mathcal{X}}_1 }\E h^{2d}\Gamma^4(W,x_1) \lesssim \sup_{x_1 \in \overline{\mathcal{X}}_1 } h^{-2d}\E \left(\mathcal{K}^4\Big(\frac{X_1-x_1}{h}\Big) \right) \lesssim h^{-d}.
\end{align*}
Therefore, by \citet[Corollary 5.1]{CCK14}, we have
\begin{align}
\label{eq:Epnk}
\E \left[\sup_{x_1\in\overline{\mathcal{X}}_1}|h^d\P_{n,K}\Gamma^2(W,x_1)-\E [h^d\P_{n,k}\Gamma^2(W,x_1)]|\right]\lesssim \sqrt{\frac{\log(n)}{nh^d}}+\frac{\log(n)n^{1/q}}{nh^d} = o(1),
\end{align}
implying that \eqref{eq:a13'} holds, and thus,
\begin{align*}
\sup_{x_1\in\overline{\mathcal{X}_1}}
\left|\ddot{\sigma}_k^2(x_1)-\sigma_k^2(x_1)\right| =o_p(1).
\end{align*}
This shows (\ref{eq: ddot sigma}). Next, we show (\ref{eq: diff sigma}). Denote
\begin{align*}
\widetilde{\Gamma}(W,x_1) =  \frac{(\psi(W,\hat{\eta}(I_k^c))-\check{\tau}_k(x_1))}{\hat{f}(x_1;I_k)h^d}\mathcal{K}_h(X_1-x_1)\quad \text{and} \quad R_k(W_i,x_1) = \widetilde{\Gamma}(W_i,x_1) - \Gamma(W_i,x_1).
\end{align*}
If
\begin{align}
\label{eq:Rnk}
\sup_{x_1\in\overline{\mathcal{X}}_1}h^d||R_k(\cdot,x_1)||_{P_{n,k},2}^2 = o_p(1),
\end{align}
then
\begin{align*}
&\sup_{x_1\in\overline{\mathcal{X}}_1}| \ddot{\sigma}_k^2(x_1)-\check{\sigma}_k^2(x_1) |\\
\leq & \sup_{ x_1 \in \overline{\mathcal{X}}_1 } h^d \left[2|\P_{n,k}\Gamma(W,x_i)R_k(W,x_1)| + \P_{n,k}R^2_k(W,x_1) \right]\\
\leq & \sup_{x_1\in\overline{\mathcal{X}}_1}\biggl[2h^d || \Gamma(\cdot,x_1)||_{\P_{n,k},2} || R_k(\cdot,x_1)||_{\P_{n,k},2} +h^d|| R_k(\cdot,x_1)||_{\P_{n,k},2}^2 \biggr]\\
= & 2\sqrt{\left[\sup_{x_1\in\overline{\mathcal{X}}_1}|h^d\P_{n,K}\Gamma^2(W,x_1)-\E[h^d\P_{n,k}\Gamma^2(W,x_1)]|\right] + \sup_{ x_1 \in \overline{\mathcal{X}}_1 } \E[h^d\P_{n,k}\Gamma^2(W,x_1)]} \\
& \times h^{d/2}\sup_{x_1\in\overline{\mathcal{X}}_1}||R_k(\cdot,x_1)||_{P_{n,k},2}  + h^d|| R_k(\cdot,x_1)||_{P_{n,k},2}^2 \\
= &o_p(1),
\end{align*}
where the last equality holds because of \eqref{eq:Epnk}, \eqref{eq:Rnk}, and the fact that $\sup_{ x_1 \in \overline{\mathcal{X}}_1 } \E[h^d\P_{n,k}\Gamma^2(W,x_1)]$ is bounded. Therefore, we only need to verify \eqref{eq:Rnk}. We have
\begin{align*}
& \sup_{x_1\in\overline{\mathcal{X}}_1}h^d||R_k(\cdot,x_1)||_{\P_{n,k},2}^2 \\
\leq & \sup_{ x_1 \in \overline{\mathcal{X}}_1 }\frac{|\hat{f}^2(x_1;I_k) - f^2(x_1)|}{\hat{f}^2(x_1;I_k)f^2(x_1)}\frac{1}{nh^d}\sum_{i \in I_k} (\psi(W,\eta_0)-\tau_0(x_1))^2\mathcal{K}^2_h\left(X_{1i}-x_1\right) \\
& + \sup_{ x_1 \in \overline{\mathcal{X}}_1 } \frac{1}{\hat{f}^2(x_1;I_k)}\frac{1}{nh^d}\sum_{i \in I_k}\left(\frac{D}{\hat{\pi}(X;I_k^c)} -1\right)^2(\hat{\mu}(1,X;I_k^c) - \mu_0(1,X))^2 \mathcal{K}^2_h\left(X_{1i}-x_1\right) \\
& + \sup_{ x_1 \in \overline{\mathcal{X}}_1 } \frac{1}{\hat{f}^2(x_1;I_k) }\frac{1}{nh^d}\sum_{i \in I_k}\left(\frac{1-D}{1-\hat{\pi}(X;I_k^c)} -1\right)^2(\hat{\mu}(1,X;I_k^c) - \mu_0(1,X))^2 \mathcal{K}^2_h\left(X_{1i}-x_1\right) \\
& + \sup_{ x_1 \in \overline{\mathcal{X}}_1 } \frac{1}{\hat{f}^2(x_1;I_k)}\frac{1}{nh^d}\sum_{i \in I_k}\left(\frac{\hat{\pi}(X;I_k^c) - \pi_0(X) }{P(X)\hat{\pi}(X;I_k^c)}D(Y - \mu_0(1,X))\right)^2 \mathcal{K}^2_h\left(X_{1i}-x_1\right) \\
& + \sup_{ x_1 \in \overline{\mathcal{X}}_1 } \frac{1}{\hat{f}^2(x_1;I_k)}\frac{1}{nh^d}\sum_{i \in I_k}\left(\frac{\hat{\pi}(X;I_k^c) - \pi_0(X) }{\pi_0(X)\hat{\pi}(X;I_k^c)}(1-D)(Y - \mu_0(0,X))\right)^2 \mathcal{K}^2_h\left(X_{1i}-x_1\right) \\
\equiv & \sum_{q=1}^5 A_{qk}.
\end{align*}
We want to show $A_{qk} = o_p(1)$ for $q = 1,\cdots,5$, $k=1,\cdots,K$. Note $A_{1k} = o_p(1)$ because
\begin{align*}
\sup_{ x_1 \in \overline{\mathcal{X}}_1 }|\hat{f}(x_1;I_k) - f(x_1)| = O_p\left(\sqrt{\frac{\log(n)}{nh^d}}\right) = o_p(1),
\end{align*}
\begin{align*}
\sup_{ x_1 \in \overline{\mathcal{X}}_1 }\E\frac{1}{nh^d}\sum_{i \in I_k} (\psi(W,\eta_0) - \tau_0(x_1))^2\mathcal{K}^2_h\left(X_{1i}-x_1\right) = O(1),
\end{align*}
and
\begin{align*}
\sup_{ x_1 \in \overline{\mathcal{X}}_1 }(\P_{n,k}-\P)h^{-d} (\psi(W,\eta_0) - \tau_0(x_1))^2\mathcal{K}^2_h\left(X_{1i}-x_1\right) = o_p(1).
\end{align*}
For $A_{2k}$, by Assumptions \ref{ass:regularity}(ii) and \ref{assn: 1st stage split},
\begin{align*}
\left\Vert \left(\frac{D}{\hat{\pi}(X;I_k^c)} -1\right)^2\right\Vert_{\P,\infty} = O_p(1).
\end{align*}
Therefore,
\begin{align*}
0 \leq A_{2k} \leq O_p(1) \times h^{-d}\delta_{2n}^2\sup_{ x_1 \in \overline{\mathcal{X}}_1 }\left\Vert\mathcal{K}\left(\frac{X_1 - x_1}{h^d} \right) \right\Vert_{\P_{n,k},2}^2 = O_p(\delta_{2n}^2) = o_p(1).
\end{align*}
Similarly, $A_{3k} = o_p(1)$. Next, by Assumption \ref{assn: 1st stage split},
\begin{align*}
0 \leq A_{4k} \leq O_p(\delta^2_{n2}) \sup_{ x_1 \in \overline{\mathcal{X}}_1 } \P_{n,k}h^{-d}(D(Y-\mu_1(X)))^2\mathcal{K}^2_h\left(X_{1i}-x_1\right) = O_p(\delta^2_{n2}) = o_p(1).
\end{align*}
Similarly, $A_{5k} = o_p(1)$. This completes the derivation of \eqref{eq:Rnk}, and thus, the whole proof.

\section{Proof of Theorem \ref{thm:boot-CS}}
\label{sec:CS}
Theorem \ref{thm:boot-CS} is a direct consequence of \citet[Corollary 3.1 ]{CCK14-anti}. In order to apply \citet[Corollary 3.1 ]{CCK14-anti}, we need to verify Conditions H1--H4. Our Theorems \ref{thm:cate} and \ref{thm:cate bootstrap} have already established that the original and multiplier bootstrap estimators can be approximated by local empirical processes with a kernel function and the approximation errors are $o_P((\log(n))^{-1/2})$ uniformly over $x_1 \in \overline{X}_1$. Then, following \citet[Proposition 3.2 and Remark 3.2]{CCK14}, the approximation errors are asymptotically negligible. Focusing on the local empirical process part, Conditions H1--H4 can be verified by \citet[Theorem 3.2]{CCK14-anti}. Specifically, Condition VC in \cite{CCK14-anti} holds where, in their notation, $a_n$ and $v_n$ are constants, $b_n = h^{-d/2}$, $K_n = \log(n)$, $\sigma_n^2$ is bounded, and 
\begin{align*}
\log^4(n)/nh^d = o(n^{-c}),
\end{align*}
for some constant $c>0$ as we assume $h = cn^{-H}$ for $H < 1/d$.

%
%

\section{Technical Lemmas}
\label{sec:lem}
\begin{lemma}
	\label{lem:1}
	If the assumptions in Theorems \ref{thm:cate} and \ref{thm:cate bootstrap} hold, then \eqref{eq:R1} and \eqref{eq:R2} hold. 
\end{lemma}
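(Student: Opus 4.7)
The plan is to introduce the kernel-weighted score differential
\[
g_{x_1,\eta}(W) := h^{-d}\mcalK_h(X_1-x_1)\bigl(\psi(W,\eta)-\psi(W,\eta_0)\bigr),
\]
so that $R_{n,1}^*(x_1) = \P_N[\xi^* g_{x_1,\hat\eta(I)}]$; then, using $\E\xi^*=1$ and the independence of $\xi^*$ from $(W,\hat\eta)$, split it into a bias piece and a centered piece,
\[
R_{n,1}^*(x_1) = (\P_N-\P)[\xi^* g_{x_1,\hat\eta(I)}] + \P g_{x_1,\hat\eta(I)}.
\]
These two pieces are then bounded separately, uniformly over $x_1\in\mcalXbar$. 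The argument for $R_{n,2}^*(x_1)$ is parallel, with $\mcalK_h(\cdot)$ replaced by $(X_1-x_1)\mcalK_h(\cdot)/h^2$ throughout.

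For the bias piece I would expand
\[
\psi(1,W,\eta)-\psi(1,W,\eta_0) = \bigl(\hat\mu(1,X)-\mu_0(1,X)\bigr)\Bigl(1-\tfrac{D}{\hat\pi(X)}\Bigr) - D\bigl(Y-\mu_0(1,X)\bigr)\tfrac{\hat\pi(X)-\pi_0(X)}{\hat\pi(X)\pi_0(X)},
\]
and analogously for the $j=0$ component. Conditioning on $(X,\hat\eta(I))$, unconfoundedness annihilates the second summand, while the first collapses to a product of $(\hat\mu-\mu_0)$ and $(\hat\pi-\pi_0)$ times a factor bounded by Assumption~\ref{assn: uc}(iii). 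Splitting the kernel symmetrically as $\mcalK_h=\mcalK_h^{1/2}\cdot\mcalK_h^{1/2}$ and applying Cauchy--Schwarz with \eqref{assn: crate L2 full} yields $\sup_{x_1}|\P g_{x_1,\hat\eta(I)}|=O_p(\delta_{1N}^2/h^d)$; alternatively, when the minimum in \eqref{assn: rate cond 1 full} is attained by $\delta_{2N}$, one uses $\|\hat\eta-\eta_0\|_{\P,\infty}=O(\delta_{2N})$ together with H\"older to obtain $O_p(\delta_{2N}^2)$. Either way the bound is $o_p((\log(N)Nh^d)^{-1/2})$ by \eqref{assn: rate cond 1 full}. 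The companion computation for $R_{n,2}^*$ produces $O_p(\delta_{3N}^2/h^{d+2})$ via \eqref{assn: crate Linf full 2}, which is $o_p((\log(N)Nh^{d+2})^{-1/2})$ by \eqref{eq:fullx}.

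For the stochastic piece I would condition on the probability-approaching-one event $\{\hat\eta(I)\in\mathcal{G}_N^{(0)}\times\mathcal{G}_N^{(1)}\times\mathcal{G}_N^{(\pi)}\}$ granted by Assumption~\ref{assn: 1st stage full}(ii), and apply \citet[Corollary 5.1]{CCK14} to
\[
\mathcal{F}_N = \bigl\{\xi\, g_{x_1,\eta}:\ x_1\in\mcalXbar,\ \eta\in\mathcal{G}_N^{(0)}\times\mathcal{G}_N^{(1)}\times\mathcal{G}_N^{(\pi)},\ \|\eta-\eta_0\|_{\P,\infty}\le M\delta_{2N}\bigr\}.
\]
The variance bound $\sigma^2\lesssim h^{-d}\delta_{2N}^2$ follows from the kernel localizing to an $h^d$-volume window and the uniform $L_\infty$ control on the score differential; the envelope is $F\lesssim|\xi|\,h^{-d}(1+|Y|)\delta_{2N}$; and the uniform entropy is of order $\delta_{4N}\log(A_N\vee N/\eps)$ by composing \eqref{eq:G01} with the polynomial VC entropy of the kernel family. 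Truncating $Y$ at $N^{1/q}$, absorbing the residual via Markov using Assumption~\ref{assn: uc}(ii), and exploiting the sub-exponential tails of $\xi^*$ delivers
\begin{align*}
\E\sup_{x_1}|(\P_N-\P)[\xi^* g_{x_1,\hat\eta(I)}]| \ \lesssim\  \delta_{2N}\sqrt{\tfrac{\delta_{4N}\log(A_N\vee N)}{Nh^d}} + \tfrac{\delta_{2N}\delta_{4N}N^{1/q}\log(A_N\vee N)}{Nh^d},
\end{align*}
which is $o((\log(N)Nh^d)^{-1/2})$ precisely under \eqref{assn: rate cond 2 full}--\eqref{assn: rate cond 3 full}. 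For $R_{n,2}^*$ the localization factor $\|X_1-x_1\|\lesssim h$ contributes an extra $h^2$ to both the variance and the envelope, converting the target tolerance $(\log(N)Nh^d)^{-1/2}$ into $(\log(N)Nh^{d+2})^{-1/2}$.

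The main obstacle is the stochastic piece: one has to simultaneously tame the unbounded envelope caused by $Y$ having only a finite $q$-th moment (addressed by truncation at $N^{1/q}$, which is the source of the $N^{1/q}$ factor in \eqref{assn: rate cond 3 full}) and the diverging uniform entropy $\delta_{4N}\log A_N$ of the machine-learning function class. The rate conditions in Assumption~\ref{assn: 1st stage full}(iii) are calibrated precisely so that both the Cauchy--Schwarz bound on the bias term and both summands of the maximal-inequality bound on the centered term fit inside the tight undersmoothed tolerance dictated by the kernel localization.
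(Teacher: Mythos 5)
Your proposal is correct and follows essentially the same route as the paper's proof of Lemma \ref{lem:1}: the same decomposition of $R_{n,1}^*$ into a centered empirical-process term and a plug-in expectation term, the same use of Neyman orthogonality plus Cauchy--Schwarz with the split kernel (via \eqref{assn: crate L2 full}, \eqref{assn: crate Linf full 2}, or alternatively the $L_\infty$ bound) for the bias piece, and the same envelope/variance/entropy calculation feeding a maximal inequality (the paper uses \citet[Lemma C.1]{BCFH13}, with $\|\max_i|\xi_i^* Y_i|\|_2\lesssim N^{1/q}$ playing the role of your truncation) under the rate conditions \eqref{assn: rate cond 1 full}--\eqref{assn: rate cond 3 full}. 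The treatment of $R_{n,2}^*$ via the extra localization factor $\|X_1-x_1\|$ also matches the paper's argument.
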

\begin{proof}
	Note that 
	\begin{align*}
	\psi(W_i,\hat{\eta}(I))-\psi(W_i,\eta) = \psi(1,W_i,\hat{\eta}(I))-\psi(1,W_i,\eta) - (\psi(0,W_i,\hat{\eta}(I))-\psi(0,W_i,\eta)).
	\end{align*}
	Thus, we only need to derive the bound for 
	\begin{align*}
	\tilde{R}_{n,1}^*(x_1) = \sum_{i \in I}\frac{\xi_i^*}{Nh^{d}} \left(\psi(1,W_i,\hat{\eta}(I))-\psi(1,W_i,\eta)\right) \mathcal{K}_h(X_{1i}-x_1)
	\end{align*}
	and the bound for 
	\begin{align*}
	\sum_{i \in I}\frac{\xi_i^*}{Nh^{d}}\left(\psi(0,W_i,\hat{\eta}(I))-\psi(0,W_i,\eta)\right) \mathcal{K}_h(X_{1i}-x_1)
	\end{align*}	
	can be established in the same manner. We have 
	\begin{align*}
	& \tilde{R}_{n,1}^*(x_1)\\
	= & (\P_N - \P)\frac{\xi_i^*}{h^{d}} \left(\psi(1,W_i,\hat{\eta}(I))-\psi(1,W_i,\eta)\right) \mathcal{K}_h(X_{1i}-x_1) + \P\frac{\xi_i^*}{h^{d}} \left(\psi(1,W_i,\hat{\eta}(I))-\psi(1,W_i,\eta)\right) \mathcal{K}_h(X_{1i}-x_1) \\
	\equiv & I(x_1) + II(x_1). 
	\end{align*}
	
	By Assumption \ref{assn: 1st stage full},  for any $\eps>0$, there exists a positive constant $M$, such that, with probability greater than $1-\eps$, $(\widehat{\mu}(j,\cdot),\widehat{\pi}(\cdot)) \in  \mathcal{F}_{n}^{(j)},$ where
	\begin{align*}
	\mathcal{F}_n^{(j)} = \begin{Bmatrix} & (\pi(\cdot),\mu(j,\cdot)) \in \mathcal{G}_{n}^{\pi}) \times \mathcal{G}_{n}^{(j)}:  \sup_{ x_1 \in \overline{\mathcal{X}}_1 }\left\Vert(\mu(j,\cdot) - \mu_0(j,\cdot))\mathcal{K}_h^{1/2}\left(X_1-x_1\right)\right\Vert_{\P,2} \\
	& \times \left\Vert(\pi(\cdot) - \pi_0(\cdot))\mathcal{K}_h^{1/2}\left(X_1-x_1\right)\right\Vert_{\P,2} \leq M\delta_{1N}^2, \\
	& ||\mu(0,X)-\mu_0(0,X)||_{\P,\infty} \leq M\delta_{2n}, \quad ||\pi(X)-\pi_0(X)||_{\P,\infty} \leq M\delta_{2n} \\
	&  \sup_{ x_1 \in \overline{\mathcal{X}}_1 }\left\Vert(\mu(j,\cdot) - \mu_0(j,\cdot))||X_1 - x_1||_2^{1/2}\mathcal{K}_h^{1/2}\left(X_1-x_1\right)\right\Vert_{\P,2} \\
	& \times \left\Vert(\pi(\cdot) - \pi_0(\cdot))||X_1 - x_1||_2^{1/2}\mathcal{K}_h^{1/2}\left(X_1-x_1\right)\right\Vert_{\P,2} \leq M\delta_{3N}^2,
	\end{Bmatrix}, \quad j=0,1.
	\end{align*}
	Then, with probability greater than $1-\eps$, 
	\begin{align*}
	& \sup_{ x_1 \in \overline{\mathcal{X}}_1 }|I(x_1)| \\
	\leq & \sup_{(\mu(1,\cdot), \pi(\cdot)) \in \mathcal{F}_n^{(1)}, x_1 \in \overline{\mathcal{X}}_1 }\biggl|(\P_n - \P)\xi^*\biggl[\frac{D(Y-\mu(1,X))(\pi_0(X) - \pi(X))}{h^{d}\pi(X)\pi_0(X)}\biggr]\mathcal{K}_h\left(X_1-x_1\right)\biggr| \\
	& + \sup_{(\mu(1,\cdot), \pi(\cdot)) \in \mathcal{F}_n^{(1)}, x_1 \in \overline{\mathcal{X}}_1 }\biggl|h^{-d}(\P_n - \P)\xi^*\biggl[\biggl(1-\frac{D}{\pi(X)}\biggr)(\mu(1,X) - \mu_0(1,X)) \biggr]  \mathcal{K}_h\left(X_1-x_1\right)\biggr| \\
	= & I_1+ I_2.
	\end{align*}
	
	Uniformly over $x \in \mathcal{X}$,  $\pi_0(x)$ is bounded and bounded away from zero and $\mu_0(1,x)$ is bounded. Therefore, so are $(\pi(x),\mu(1,x)) \in \mathcal{F}_n^{(1)}$ as $\delta_{2N} = o(1)$. Therefore, element-wise, 
	\begin{equation}
	\begin{aligned}
	\sup_{(\mu(1,\cdot), \pi(\cdot)) \in \mathcal{F}_n^{(1)}, x_1 \in \overline{\mathcal{X}}_1 }\biggl|\xi^*\biggl[\frac{D(Y-\mu(1,X))(\pi_0(X) - \pi(X))}{h^{d}\pi(X)\pi_0(X)}\biggr]\mathcal{K}_h\left(X_1-x_1\right)\biggr| \lesssim h^{-d}\delta_{2N}|\xi^*Y|.
	\label{eq:upboundf}
	\end{aligned}
	\end{equation}
	In addition, we have
	\begin{align}
	\label{eq:sigmaf}
	& \sup_{(\mu(1,\cdot), \pi(\cdot)) \in \mathcal{F}_n^{(1)}, x_1 \in \overline{\mathcal{X}}_1 }\E \biggl| \xi^*\biggl[\frac{D(Y-\mu(1,X))(\pi_0(X) - \pi(X))}{h^{d}\pi(X)\pi_0(X)}\biggr]\mathcal{K}_h\left(X_1-x_1\right)\biggr|^2 \notag \\
	\lesssim & h^{-2d}\sup_{\pi(\cdot) \in \mathcal{F}_n^{(1)} }\E(\pi_0(X)-\pi(X))^2\mathcal{K}_h\left(X_1-x_1\right) \lesssim h^{-d}\delta_{2N}^2.
	\end{align}

	Denote
	$$\mathcal{H}_1 = \biggl\{\xi^*\biggl[\frac{D(Y-\mu(1,X))(\pi_0(X) - \pi(X))}{h^d\pi(X)\pi_0(X)}\biggr]\mathcal{K}_h\left(X_1-x_1\right):(\mu(1,\cdot), \pi(\cdot)) \in \mathcal{F}_n^{(1)}, x_1 \in \overline{\mathcal{X}}_1 \biggr\}.$$
	Combining \eqref{eq:G01} and the fact that
	$$\sup_Q \log N\biggl(\left\{K(\frac{\cdot - x_1}{h}): x_1 \in \Re\right\},||\cdot||_{Q,2},\eps\biggr) \lesssim \log(1/\eps)\vee 0, $$
	we have
	\begin{equation}
	\begin{aligned}
	\sup_Q \log N(\mathcal{H}_1,||\cdot||_{Q,2},\eps||F_1||_{Q,2}) \lesssim \delta_{4N}(\log(A_n) + \log(1/\eps)\vee 0).
	\label{eq:entropyf}
	\end{aligned}
	\end{equation}
	
	Since $\xi^*$ is either $1$ or $\eta$ which has a sub-exponential tail and $\E Y^q < \infty$, we have $||\max_i |\xi_i^*Y_i|||_2 \lesssim N^{1/q}.$ Combining this fact with \eqref{eq:upboundf}, \eqref{eq:sigmaf}, and \eqref{eq:entropyf}, \citet[Lemma C.1]{BCFH13} implies that
	\begin{equation*}
	\begin{aligned}
	\E I_1 \lesssim \sqrt{\frac{\delta_{4N} \log(A_N \vee N) \delta_{2N}^2}{Nh^{d}}} + \frac{\delta_{2N}\delta_{4N}N^{1/q} \log(A_N \vee N)}{Nh^d},
	\end{aligned}
	\end{equation*}
	and thus, by Assumption \ref{assn: 1st stage full},
	\begin{equation}
	\begin{aligned}
	I_1 = o_p((\log(N)Nh^d)^{-1/2}).
	\label{eq:IIf}
	\end{aligned}
	\end{equation}
	Similarly, for $I_2$, we have
	\begin{align*}
	\sup_{(\mu(1,\cdot), \pi(\cdot)) \in \mathcal{F}_n^{(1)}, x_1 \in \overline{\mathcal{X}}_1 }\biggl|h^{-d}\xi^*\biggl[\biggl(1-\frac{D}{\pi(X)}\biggr)(\mu(1,X) - \mu_0(1,X)) \biggr]  \mathcal{K}_h\left(X_1-x_1\right)\biggr|   \lesssim  \delta_{2N}h^{-d} |\xi^*|,
	\end{align*}
	
	\begin{equation*}
	\begin{aligned}
	&  \sup_{(\mu(1,\cdot), \pi(\cdot)) \in \mathcal{F}_n^{(1)}, x_1 \in \overline{\mathcal{X}}_1 }\E\biggl\{h^{-d}\xi^*\biggl[\biggl(1-\frac{D}{\pi(X)}\biggr)(\mu(1,X) - \mu_0(1,X)) \biggr]  \mathcal{K}_h\left(X_1-x_1\right)\biggr\}^2 \lesssim h^{-d}\delta_{2N}^2,
	\end{aligned}
	\end{equation*}
	and
	$$\sup_Q \log N(\mathcal{H}_2,||\cdot||_{Q,2},\eps||F_1||_{Q,2}) \lesssim \delta_{4N}(\log(A_n) + \log(1/\eps)\vee 0).,$$
	where
	$$\mathcal{H}_2 = \biggl\{h^{-d}\xi^*\biggl[\biggl(1-\frac{D}{\pi(X)}\biggr)(\mu(1,X) - \mu_0(1,X)) \biggr]  \mathcal{K}_h\left(X_1-x_1\right): (\mu(1,\cdot), \pi(\cdot)) \in \mathcal{F}_n^{(1)}, x_1 \in \overline{\mathcal{X}}_1  \biggr\}.$$
	Therefore, by the same argument as above,
	\begin{equation}
	\begin{aligned}
	I_2 =o_p((\log(N)Nh^d)^{-1/2}).
	\label{eq:II_2f}
	\end{aligned}
	\end{equation}
	\eqref{eq:IIf} and \eqref{eq:II_2f} imply that
	$$\sup_{x_1 \in \overline{\mathcal{X}}_1}|I(x_1)| = o_p((\log(N)Nh^d)^{-1/2}).$$	
	
	For $II(x_1)$, we have 
	\begin{align*}
	& \sup_{ x_1 \in \overline{\mathcal{X}}_1 }|II(x_1)| \\
	\lesssim & \sup_{(\mu(1,\cdot), \pi(\cdot)) \in \mathcal{F}_n^{(1)}, x_1 \in \overline{\mathcal{X}}_1}\left|\P \left[\frac{(\mu_0(1,X) - \mu(1,X))(\pi_0(X) - \pi(X))}{h^d \pi(X)}\mathcal{K}_h(X_1 -x_1)\right]\right| \\
	\lesssim & \sup_{(\mu(1,\cdot), \pi(\cdot)) \in \mathcal{F}_n^{(1)}, x_1 \in \overline{\mathcal{X}}_1}h^{-d} \left \Vert (\mu_0(1,X) - \mu(1,X)) \mathcal{K}^{1/2}_h(X_1 -x_1) \right \Vert_{\P,2} \left \Vert (\pi_0(X) - \pi(1,X)) \mathcal{K}^{1/2}_h(X_1 -x_1) \right \Vert_{\P,2} \\
	\lesssim & h^{-d}\delta_{1N}^2 = o((\log(N)Nh^d)^{-1/2}). 
	\end{align*}
	Therefore, we have established \eqref{eq:R1}. 
	
	For \eqref{eq:R2}, similarly, we only need to derive the bound for 
	\begin{align*}
	\tilde{R}_{n,2}^*(x_1) = \sum_{i \in I}\frac{\xi_i^*}{Nh^{d+2}}\left(X_{1i} - x_1\right) \left(\psi(1,W_i,\hat{\eta}(I))-\psi(1,W_i,\eta)\right) \mathcal{K}_h(X_{1i}-x_1)
	\end{align*}
	Then, the bound for 
	\begin{align*}
	\sum_{i \in I}\frac{\xi_i^*}{Nh^{d+2}}\left(X_{1i} - x_1\right) \left(\psi(0,W_i,\hat{\eta}(I))-\psi(0,W_i,\eta)\right) \mathcal{K}_h(X_{1i}-x_1)
	\end{align*}	
	will follow. We have 
	\begin{align*}
	\tilde{R}_{n,2}^*(x_1) = & (\P_N - \P)\frac{\xi_i^*}{h^{d+2}}\left(X_{1i} - x_1\right) \left(\psi(1,W_i,\hat{\eta}(I))-\psi(1,W_i,\eta)\right) \mathcal{K}_h(X_{1i}-x_1) \\
	& + \P\frac{\xi_i^*}{h^{d+2}}\left(X_{1i} - x_1\right) \left(\psi(1,W_i,\hat{\eta}(I))-\psi(1,W_i,\eta)\right) \mathcal{K}_h(X_{1i}-x_1) \\
	\equiv & III(x_1) + IV(x_1). 
	\end{align*}
	
	By Assumption \ref{assn: 1st stage full},  for any $\eps>0$, there exists a positive constant $M$, such that component-wise, with probability greater than $1-\eps$,
	\begin{align*}
	& \sup_{ x_1 \in \overline{\mathcal{X}}_1 }|III(x_1)| \\
	\leq & \sup_{(\mu(1,\cdot), \pi(\cdot)) \in \mathcal{F}_n^{(1)}, x_1 \in \overline{\mathcal{X}}_1 }\biggl|(\P_n - \P)\xi^*\biggl[\frac{D(Y-\mu(1,X))(\pi_0(X) - \pi(X))}{h^{d+2}\pi(X)\pi_0(X)}\biggr]\left(X_{1} - x_1\right)\mathcal{K}_h\left(X_1-x_1\right)\biggr| \\
	& + \sup_{(\mu(1,\cdot), \pi(\cdot)) \in \mathcal{F}_n^{(1)}, x_1 \in \overline{\mathcal{X}}_1 }\biggl|h^{-d-2}(\P_n - \P)\xi^*\biggl[\biggl(1-\frac{D}{\pi(X)}\biggr)(\mu(1,X) - \mu_0(1,X)) \biggr]    \left(X_{1} - x_1\right)  \mathcal{K}_h\left(X_1-x_1\right)\biggr| \\
	= & III_1+ III_2.
	\end{align*}
	
	Note $u \kappa(u)$ is bounded. Then, similar to \eqref{eq:upboundf}, we have, component-wise,
	\begin{equation}
	\begin{aligned}
	\sup_{(\mu(1,\cdot), \pi(\cdot)) \in \mathcal{F}_n^{(1)}, x_1 \in \overline{\mathcal{X}}_1 }\biggl|\xi^*\biggl[\frac{D(Y-\mu(1,X))(\pi_0(X) - \pi(X))}{h^{d+2}\pi(X)\pi_0(X)}\biggr]\left(X_{1} - x_1\right)\mathcal{K}_h\left(X_1-x_1\right)\biggr| \lesssim h^{-d-1}\delta_{2N}|\xi^*Y|.
	\label{eq:upboundfprime}
	\end{aligned}
	\end{equation}
	In addition, we have
	\begin{align}
	\label{eq:sigmafprime}
	& \sup_{(\mu(1,\cdot), \pi(\cdot)) \in \mathcal{F}_n^{(1)}, x_1 \in \overline{\mathcal{X}}_1 }\E \biggl\Vert \xi^*\biggl[\frac{D(Y-\mu(1,X))(\pi_0(X) - \pi(X))}{h^{d+2}\pi(X)\pi_0(X)}\biggr]\mathcal{K}_h\left(X_1-x_1\right)\left(X_1-x_1\right)\biggr\Vert_2^2 \notag \\
	\lesssim & h^{-2d-4}\sup_{\pi(\cdot) \in \mathcal{F}_n^{(1)} }\E(\pi_0(X)-\pi(X))^2\Vert X_1-x_1\Vert_2^2\mathcal{K}_h\left(X_1-x_1\right) \lesssim h^{-d-2}\delta_{2N}^2.
	\end{align}

	Denote
	$$\mathcal{H}_3 = \biggl\{\xi^*\biggl[\frac{D(Y-\mu(1,X))(\pi_0(X) - \pi(X))}{h^{d+2}\pi(X)\pi_0(X)}\biggr]\left(X_1 -x_1\right)\mathcal{K}_h\left(X_1-x_1\right):(\mu(1,\cdot), \pi(\cdot)) \in \mathcal{F}_n^{(1)}, x_1 \in \overline{\mathcal{X}}_1 \biggr\}.$$
	Combining \eqref{eq:G01} and the fact that
	$$\sup_Q \log N\biggl(\left\{(\cdot - x_1)K(\frac{\cdot - x_1}{h}): x_1 \in \Re\right\},||\cdot||_{Q,2},\eps\biggr) \lesssim \log(1/\eps)\vee 0, $$
	we have
	\begin{equation}
	\begin{aligned}
	\sup_Q \log N(\mathcal{H}_3,||\cdot||_{Q,2},\eps||F_1||_{Q,2}) \lesssim \delta_{4N}(\log(A_n) + \log(1/\eps)\vee 0).
	\label{eq:entropyfprime}
	\end{aligned}
	\end{equation}
	
	Since $\xi^*$ is either $1$ or $\eta$ which has a sub-exponential tail and $\E Y^q < \infty$, we have $||\max_i |\xi_i^*Y_i|||_2 \lesssim N^{1/q}.$ Combining this fact with \eqref{eq:upboundf}, \eqref{eq:sigmafprime}, and \eqref{eq:entropyfprime}, \citet[Lemma C.1]{BCFH13} implies that
	\begin{equation*}
	\begin{aligned}
	\E III_1 \lesssim \sqrt{\frac{\delta_{4N} \log(A_N \vee N) \delta_{2N}^2}{Nh^{d+2}}} + \frac{\delta_{2N}\delta_{4N}N^{1/q} \log(A_N \vee N)}{Nh^{d+1}},
	\end{aligned}
	\end{equation*}
	and thus, by Assumption \ref{assn: 1st stage full},
	\begin{equation}
	\begin{aligned}
	III_1 = o_p((\log(N)Nh^{d+2})^{-1/2}).
	\label{eq:IIfprime}
	\end{aligned}
	\end{equation}
	Similarly, for $III_2$, we have
	\begin{align*}
	\sup_{(\mu(1,\cdot), \pi(\cdot)) \in \mathcal{F}_n^{(1)}, x_1 \in \overline{\mathcal{X}}_1 }\biggl|h^{-d-2}\xi^*\biggl[\biggl(1-\frac{D}{\pi(X)}\biggr)(\mu(1,X) - \mu_0(1,X)) \biggr] \left(X_1-x_1\right) \mathcal{K}_h\left(X_1-x_1\right)\biggr|   \lesssim  \delta_{2N}h^{-d-1} |\xi^*|,
	\end{align*}
	
	\begin{equation*}
	\begin{aligned}
	&  \sup_{(\mu(1,\cdot), \pi(\cdot)) \in \mathcal{F}_n^{(1)}, x_1 \in \overline{\mathcal{X}}_1 }\E\biggl \Vert h^{-d-2}\xi^*\biggl[\biggl(1-\frac{D}{\pi(X)}\biggr)(\mu(1,X) - \mu_0(1,X)) \biggr]  \mathcal{K}_h\left(X_1-x_1\right)(X_1 - x_1)\biggr \Vert_2^2 \\
	\lesssim & h^{-d-2}\delta_{2N}^2,
	\end{aligned}
	\end{equation*}
	and
	$$\sup_Q \log N(\mathcal{H}_2,||\cdot||_{Q,2},\eps||F_1||_{Q,2}) \lesssim \delta_{4N}(\log(A_n) + \log(1/\eps)\vee 0).,$$
	where
	$$\mathcal{H}_4 = \biggl\{h^{-d-2}\xi^*\biggl[\biggl(1-\frac{D}{\pi(X)}\biggr)(\mu(1,X) - \mu_0(1,X)) \biggr] \left(X_1-x_1\right) \mathcal{K}_h\left(X_1-x_1\right): (\mu(1,\cdot), \pi(\cdot)) \in \mathcal{F}_n^{(1)}, x_1 \in \overline{\mathcal{X}}_1  \biggr\}.$$
	Therefore, by the same argument as above,
	\begin{equation}
	\begin{aligned}
	III_2 =o_p((\log(N)Nh^{d+2})^{-1/2}).
	\label{eq:II_2fprime}
	\end{aligned}
	\end{equation}
	\eqref{eq:IIfprime} and \eqref{eq:II_2fprime} imply that
	$$\sup_{x_1 \in \overline{\mathcal{X}}_1}|III(x_1)| = o_p((\log(N)Nh^{d+2})^{-1/2}).$$	
	
	For $IV(x_1)$, we have 
	\begin{align*}
	& \sup_{ x_1 \in \overline{\mathcal{X}}_1 }||IV(x_1)||_2 \\
	\lesssim & \sup_{(\mu(1,\cdot), \pi(\cdot)) \in \mathcal{F}_n^{(1)}, x_1 \in \overline{\mathcal{X}}_1}\left\Vert \P \left[\frac{(\mu_0(1,X) - \mu(1,X))(\pi_0(X) - \pi(X))}{h^{d+2} \pi(X)}(X_1 -x_1)\mathcal{K}_h(X_1 -x_1)\right]\right\Vert_2  \\
	\lesssim & \sup_{(\mu(1,\cdot), \pi(\cdot)) \in \mathcal{F}_n^{(1)}, x_1 \in \overline{\mathcal{X}}_1}h^{-d-2} \left \Vert (\mu_0(1,X) - \mu(1,X))||X_1 -x_1||_2^{1/2} \mathcal{K}^{1/2}_h(X_1 -x_1) \right \Vert_{\P,2} \\
	& \times \left \Vert (\pi_0(X) - \pi(1,X))||X_1 -x_1||_2^{1/2} \mathcal{K}^{1/2}_h(X_1 -x_1) \right \Vert_{\P,2} \\
	\lesssim & h^{-d-2}\delta_{3N}^2 = o((\log(N)Nh^{d+2})^{-1/2}). 
	\end{align*}
	Therefore, we have established \eqref{eq:R2}. 
\end{proof}

\begin{lemma}
	\label{lem:2}
	If assumptions in Theorems \ref{thm:cate} and \ref{thm:cate bootstrap} hold, then \eqref{eq:R1k} and \eqref{eq:R2k} hold. 
\end{lemma}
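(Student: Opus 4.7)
My plan is to mimic the decomposition used in Lemma \ref{lem:1}, but to exploit the independence between $\hat\eta(I_k^c)$ and $\{W_i\}_{i\in I_k}$ provided by cross-fitting. Concretely, I will condition on the subsample $I_k^c$, so that the estimated nuisance functions can be treated as deterministic; this removes the need for any entropy bound on the function classes $\mathcal{G}_N^{(j)}$ and is what allows Assumption \ref{assn: 1st stage split} to be much weaker than Assumption \ref{assn: 1st stage full}. As in Lemma \ref{lem:1}, I will split $\psi=\psi(1,\cdot,\cdot)-\psi(0,\cdot,\cdot)$ and treat the $j=1$ term; the $j=0$ term is symmetric. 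After this reduction it is enough to bound the conditional analogues of the terms $I(x_1)+II(x_1)$ and $III(x_1)+IV(x_1)$ appearing in the proof of Lemma \ref{lem:1}, uniformly over $x_1\in\overline{\mathcal{X}}_1$, and then take a union bound over the finite set $\{1,\dots,K\}$.

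For the deterministic ``bias'' piece (the analogues of $II$ and $IV$), I will use Neyman orthogonality exactly as in Lemma \ref{lem:1}: $\P_{I_k}\psi(1,W,\hat\eta(I_k^c))-\psi(1,W,\eta_0)$, after localization by $h^{-d}\mathcal{K}_h(X_1-x_1)$ (resp.\ $h^{-d-2}(X_1-x_1)\mathcal{K}_h(X_1-x_1)$), factors through Cauchy--Schwarz into a product of the $L_2$-norms appearing in \eqref{assn: crate L2 split} (resp.\ \eqref{assn: crate L2 splitx}). This yields bounds of order $h^{-d}\delta_{1n}^2$ and $h^{-d-2}\delta_{3n}^2$, which by assumption are $o((\log(n)nh^d)^{-1/2})$ and $o((\log(n)nh^{d+2})^{-1/2})$, respectively.

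For the centered stochastic piece (the analogues of $I$ and $III$), the key simplification is that conditionally on $I_k^c$ the function $g(W):=\psi(1,W,\hat\eta(I_k^c))-\psi(1,W,\eta_0)$ is a single fixed function rather than an element of a large class. So the only source of entropy is the one-parameter kernel family $\{h^{-d}\mathcal{K}_h(X_1-x_1):x_1\in\overline{\mathcal{X}}_1\}$ (resp.\ $\{h^{-d-2}(X_1-x_1)\mathcal{K}_h(X_1-x_1):x_1\in\overline{\mathcal{X}}_1\}$), which is of VC type with constant VC characteristics. Then \citet[Corollary 5.1]{CCK14} (applied conditionally on $I_k^c$) gives uniform bounds whose leading terms are of the form $\sqrt{\log(n)h^{-d}\|g\,\mathcal{K}_h^{1/2}(X_1-\cdot)\|^2_{\P_{I_k},2}/(nh^d)}$ and $\|g\|_{\P,\infty}\,n^{1/q}\log(n)/(nh^d)$. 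Using $\|g\|_{\P,\infty}=O(\delta_{2n})$ from \eqref{assn: crate Linf split} together with the bilinear $L_2$ bound \eqref{assn: crate L2 split}, the first term is $o((\log(n)nh^d)^{-1/2})$ by the rate condition $h^{-d}\delta_{1n}^2=o((\log(n)nh^d)^{-1/2})$, and the second is $o((\log(n)nh^d)^{-1/2})$ because $\delta_{2n}=o((\log n)^{-1})$ and $H<(1-2/q)/d$ in Assumption \ref{assn: regularity}(iv); throwing in the extra factor $\|\xi^*\|_\infty$ merely inflates by a polylog factor since $\xi^*$ has sub-exponential tails.

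The argument for \eqref{eq:R2k} is structurally identical, using the alternative VC class indexed by $(X_1-x_1)\mathcal{K}_h(X_1-x_1)$ (bounded because $|t|\kappa(t)$ is bounded by Assumption \ref{assn: regularity}(iii)) and invoking \eqref{assn: crate L2 splitx} and the rate $h^{-d-2}\delta_{3n}^2=o((\log(n)nh^{d+2})^{-1/2})$ in place of the corresponding $\delta_{1n}$ conditions. The conditional bounds hold uniformly in $x_1$ and in $k$, and since the conditional probability statements are $o_p(1)$ with deterministic rates, dominated convergence converts them to unconditional statements. I expect the main obstacle to be bookkeeping the $\|\xi^*\|_\infty$ contribution together with the $n^{1/q}$ term coming from the $q$-th moment of $Y$ in the maximal inequality, since these need to be shown not to spoil the rate; Assumption \ref{assn: regularity}(iv) is precisely what absorbs them.
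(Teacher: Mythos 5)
Your proposal follows essentially the same route as the paper: split $\psi$ into the $j=1$ and $j=0$ pieces, decompose each remainder into a centered empirical-process part and a conditional-expectation part, condition on $I_k^c$ so the nuisance estimates are fixed and the only entropy comes from the kernel location family, bound the bias part by Cauchy--Schwarz via \eqref{assn: crate L2 split}/\eqref{assn: crate L2 splitx}, and convert the conditional maximal-inequality bounds to unconditional ones via Markov and a union bound over $k$. The one bookkeeping slip is that the variance/envelope of the centered part is controlled by the $L_\infty$ rate $\delta_{2n}$ (via $\delta_{2n}=o((\log n)^{-1})$), not by the bilinear condition $h^{-d}\delta_{1n}^2=o((\log(n)nh^d)^{-1/2})$ as you state -- the latter only bounds the product of the two error norms and is used exclusively for the bias term -- but since you also invoke $\|g\|_{\P,\infty}=O(\delta_{2n})$ the argument goes through as in the paper.
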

\begin{proof}
	We focus on establishing \eqref{eq:R2k}, and \eqref{eq:R1k} can be established in a similar manner. In addition, as before, we focus on bounding 
	\begin{align*}
	\sum_{i \in I_k}\frac{\xi_i^*}{nh^{d+2}}\left(X_{1i} - x_1\right) \left(\psi(1,W_i,\hat{\eta}(I_k^c))-\psi(1,W_i,\eta)\right) \mathcal{K}_h(X_{1i}-x_1).
	\end{align*}	
	We have 
	\begin{align*}
	\tilde{R}_{n,2,k}^*(x_1) = & (\P_{n,k} - \P_{I_k})\frac{\xi_i^*}{h^{d+2}}\left(X_{1i} - x_1\right) \left(\psi(1,W_i,\hat{\eta}(I_k^c))-\psi(1,W_i,\eta)\right) \mathcal{K}_h(X_{1i}-x_1) \\
	& + \P_{I_k}\frac{\xi_i^*}{h^{d+2}}\left(X_{1i} - x_1\right) \left(\psi(1,W_i,\hat{\eta}(I_k^c))-\psi(1,W_i,\eta)\right) \mathcal{K}_h(X_{1i}-x_1) \\
	\equiv & I_k(x_1) + II_k(x_1). 
	\end{align*}
	
	let
	\begin{align*}
	\mathcal{F}_{n}^{(j)} = \begin{Bmatrix} & (\pi(\cdot),\mu(j,\cdot)):  \sup_{ x_1 \in \overline{\mathcal{X}}_1 }\left\Vert(\mu(j,\cdot) - \mu_0(j,\cdot))\mathcal{K}_h^{1/2}\left(X_1-x_1\right)\right\Vert_{\P,2} \\
	& \times \left\Vert(\pi(\cdot) - \pi_0(\cdot))\mathcal{K}_h^{1/2}\left(X_1-x_1\right)\right\Vert_{\P,2} \leq M\delta_{1N}^2, \\
	& ||\mu(0,X)-\mu_0(0,X)||_{\P,\infty} \leq M\delta_{2n}, \quad ||\pi(X)-\pi_0(X)||_{\P,\infty} \leq M\delta_{2n} \\
	&  \sup_{ x_1 \in \overline{\mathcal{X}}_1 }\left\Vert(\mu(j,\cdot) - \mu_0(j,\cdot))||X_1 - x_1||_2^{1/2}\mathcal{K}_h^{1/2}\left(X_1-x_1\right)\right\Vert_{\P,2} \\
	& \times \left\Vert(\pi(\cdot) - \pi_0(\cdot))||X_1 - x_1||_2^{1/2}\mathcal{K}_h^{1/2}\left(X_1-x_1\right)\right\Vert_{\P,2} \leq M\delta_{3N}^2,
	\end{Bmatrix}, \quad j=0,1,
	\end{align*}
	and $\mathcal{A}_n(M) = \{(\widehat{\mu}(0,\cdot;I_k^c), \widehat{\pi}(\cdot;I_k^c))\in \mathcal{F}_{n}^{(0)}\} \cap\{(\widehat{\mu}(1,\cdot;I_k^c), \widehat{\pi}(\cdot;I_k^c))\in \mathcal{F}_{n}^{(1)}\} $. By Assumption \ref{ass:regularity}, for any $\eps>0$, there exists a positive constant $M$, such that $\P(\mathcal{A}_n(M)) \geq 1-\eps.$ Then, on $\mathcal{A}_n(M)$, we have
	\begin{align*}
	& \sup_{k \leq K, x_1 \in \overline{\mathcal{X}}_1}|I_k(x_1)| \\
	\leq & \sup_{k \leq K, x_1 \in \overline{\mathcal{X}}_1 }\biggl|(\P_{n,k} - \P_{I_k})\xi^*\biggl[\frac{D(Y-\widehat{\mu}(1,X;I_k^c))(\pi_0(X) - \widehat{\pi}( X;I_k^c))}{h^{d+2}\widehat{\pi}( X;I_k^c)\pi_0(X)}\biggr](X_1-x_1)\mathcal{K}_h\left(X_1-x_1\right)\biggr| \\
	& + \sup_{k \leq K,x_1 \in \overline{\mathcal{X}}_1 }\biggl|h^{-d-2}(\P_{n,k} - \P_{I_k})\xi^*\biggl[\biggl(1-\frac{D}{\pi_0(X)}\biggr)(\widehat{\mu}(1,X;I_k^c) - \mu_0(1,X)) \biggr] (X_1-x_1)     \mathcal{K}_h\left(X_1-x_1\right)\biggr| \\
	\equiv & \sup_{k \leq K, x_1 \in \overline{\mathcal{X}}_1 } I_{1k}(x_1)+  \sup_{k \leq K, x_1 \in \overline{\mathcal{X}}_1 }I_{2k}(x_1).
	\end{align*}
	Next, we aim to bound $I_{1k}(x_1)$. The upper bound for $I_{2k}(x_1)$ can be derived in the same manner, and thus, is omitted. By construction,
	$(\widehat{\mu}(1,\cdot;I_k^c),\widehat{\pi}(\cdot;I_k^c)) \indep \{W_i\}_{i \in I_k}$. This implies, conditional on $\{W_i,i\in I_k^c\}$, we can treat $(\widehat{\mu}(1,\cdot;I_k^c),\widehat{\pi}(\cdot;I_k^c))$ as fixed functions. We denote, for arbitrary fixed $(\mu(1,X),\pi(X)) \in \mathcal{F}_{n}^{(1)}$,
	$$\mathcal{H}_1(\mu(1,X),\pi(X)) = \biggl\{\xi^*\biggl[\frac{D(Y-\mu(1,X))(\pi_0(X) - \pi(X))}{h^{d+2}\pi(X)\pi_0(X)}\biggr](X_1-x_1)\mathcal{K}_h\left(X_1-x_1\right): x_1 \in \overline{\mathcal{X}}_1 \biggr\}.$$
	Then, $\mathcal{H}_{1}(\mu(1,X),\pi(X))$ has an envelope function
	\begin{align*}
	H_{1i}(\mu(1,X),\pi(X))  \lesssim |\xi^*Y|h^{-d-1}|\pi(X) - \pi_0(X)| \lesssim |\xi_i^*Y_i|h^{-d-1}\delta_{2n}
	\end{align*}
	and
	\begin{align}
	\sup_i H_{1i}(\mu(1,X),\pi(X)) \lesssim \sup_i |\xi^*_iY_i|h^{-d-1}\delta_{2n}.
	\label{eq:upbound}
	\end{align}
	
	Furthermore, for fixed functions $(\mu(1,X),\pi(X))$, we have
	\begin{equation}
	\begin{aligned}
	\sup_Q \log\left( N(\mathcal{H}_1(\mu(1,X),\pi(X)),||\cdot||_{Q,2},\eps||H_{1}(\mu(1,X),\pi(X))||_{Q,2})\right) \lesssim \log(1/\eps)\vee 0.
	\label{eq:entropy}
	\end{aligned}
	\end{equation}
	
	On $\mathcal{A}_n(M)$, we have
	\begin{equation*}
	\begin{aligned}
	& \sup_{x_1 \in \overline{\mathcal{X}}_1,(\mu(1,X),\pi(X)) \in \mathcal{F}_{n}^{(1)} }\E \biggl\Vert\xi^*\biggl[\frac{D(Y-\mu(1,X))(\pi_0(X) - \pi(X))}{h^{d+2}\pi(X)\pi_0(X)}\biggr](X_1-x_1)\mathcal{K}_h\left(X_1-x_1\right)\biggr\Vert_2^2\\
	\lesssim & h^{-2d-4}\sup_{x_1 \in \overline{\mathcal{X}}_1,(\mu(1,X),\pi(X)) \in \mathcal{F}_{n}^{(1)} }\E(\pi_0(X)-\pi(X))^2\Vert X_1-x_1\Vert_2^2\mathcal{K}_h\left(X_1-x_1\right) \\ \lesssim & h^{-d-2}\delta_{2n}^2.
	\end{aligned}
	\end{equation*}
	Therefore,
	\begin{equation}
	\begin{aligned}
	& \sup_{x_1 \in \overline{\mathcal{X}}_1}\P_{I_k} \biggl\{\xi^*\biggl[\frac{D(Y-\widehat{\mu}(1,X;I_k^c))(\pi_0(X) - \widehat{\pi}(X;I_k^c))}{h^d\widehat{\pi}(X;I_k^c)\pi_0(X)}\biggr]\mathcal{K}_h\left(X_1-x_1\right)\biggr\}^2 1\{\mathcal{A}_n(M)\}
	\lesssim \delta_{2n}^2 h^{-d}.
	\label{eq:sigma}
	\end{aligned}
	\end{equation}
	
	Then, \citet[Lemma C.1]{BCFH13} implies that
	\begin{align}
	\P_{I_k}\sup_{ x_1 \in \overline{\mathcal{X}}_1 }II_{1k}(x_1) 1\{\mathcal{A}_n(M)\} = & \E(\sup_{ x_1 \in \overline{\mathcal{X}}_1 }II_{1k}(x_1)|W_i,i\in I^c)1\{\mathcal{A}_n(M)\} \notag \\
	\lesssim & \sqrt{\frac{h^{-d-2}\delta_{2n}^2 \log(n)}{n}} + \frac{\Vert \sup_i |\xi^*_iY_i| \Vert_{P,q}\delta_{2n} \log(n)}{n h^{d+1}} \notag \\
	\lesssim & \sqrt{\frac{h^{-d-2}\delta_{2n}^2 \log(n)}{n}} + \frac{\delta_{2n} \log(n)n^{1/q}}{n h^{d+1}},\label{eq:II1}
	\end{align}
	where the second inequality holds by the fact that $\xi_i^*$ has sub-exponential tails. Hence, for an arbitrary $\eps_0>0$, as $n \rightarrow \infty$,
	\begin{align*}
	& \P(\sup_{ x_1 \in \overline{\mathcal{X}}_1 }I_{1k}(x_1) \geq \eps_0 (\log(n)nh^{d+2})^{-1/2}) \\
	\leq & \eps +  \P( \sup_{ x_1 \in \overline{\mathcal{X}}_1 }I_{1k}(x_1) 1\{\mathcal{A}_n(M)\} \geq \eps_0 (\log(n)nh^{d+2})^{-1/2}) \\
	\leq & \eps+\E\P_{I_k}(\sup_{ x_1 \in \overline{\mathcal{X}}_1 }I_{1k}(x_1)1\{\mathcal{A}_n(M)\} \geq \eps_0 (\log(n)nh^{d+2})^{-1/2}) \\
	\leq & \eps+\E\left\{\left[\frac{\P_{I_k}\sup_{ x_1 \in \overline{\mathcal{X}}_1 }I_{1k}(x_1)(\log(n)nh^d)^{1/2}}{\eps_0 }\right]1\{\mathcal{A}_n(M)\}\right\} \\
	\leq & \eps + C\left[\delta_{2n} \log(n) + \frac{\delta_{2n} \log^{3/2}(n)n^{1/q}}{(nh^{d})^{1/2}}\right]/\eps_0 \\
	\leq & 2\eps,
	\end{align*}
	where the first inequality is due to the union bound inequality, the second inequality is due to the law of iterated expectation and the fact that $\mathcal{A}_n$ belongs to the sigma field generated by $W_i,i\in I_k^c$, the third inequality is due to the Markov inequality, the fourth inequality is due to \eqref{eq:II1}, and the last inequality holds by Assumption \ref{ass:regularity}. Therefore,
	\begin{align}
	\sup_{k \leq K, x_1 \in \overline{\mathcal{X}}_1 }I_{1k}(x_1) = o_p((\log(n)nh^{d+2})^{-1/2}).
	\label{eq:II}
	\end{align}
	Similarly, we can show that
	\begin{align}
	\sup_{k \leq K, x_1 \in \overline{\mathcal{X}}_1 }I_{2k}(x_1) = o_p((\log(n)nh^{d+2})^{-1/2}).\label{eq:II_2}
	\end{align}
	\eqref{eq:II} and \eqref{eq:II_2} imply that
	$\sup_{k \leq K, x_1 \in \overline{\mathcal{X}}_1}|I_k(x_1)| = o_p((\log(n)nh^{d+2})^{-1/2}).$

	For $II_k(x_1)$, On $\mathcal{A}_n(M)$, we have
	\begin{align}
	& \sup_{k\leq K, x_1 \in \overline{\mathcal{X}}_1 }||II_k(x_1)||_2 \notag \\
	\leq & \sup_{k\leq K x_1 \in \overline{\mathcal{X}}_1 } \left\Vert\P_{I_k}\xi^* h^{-d-2}\left[\psi(1,W,\hat{\eta}(I_k^c)) - \psi(1,W,\eta_0) \right](X_1-x_1)\mathcal{K}_h\left(X_1-x_1\right)\right\Vert_2 \nonumber\\
	\leq & \sup_{k\leq K, x_1 \in \overline{\mathcal{X}}_1,\eta \in \mathcal{A}_n(M) } \left\Vert \P_{I_k}\xi^* h^{-d-2}\left[\psi(1,W,\eta) - \psi(1,W,\eta_0) \right](X_1-x_1)\mathcal{K}_h\left(X_1-x_1\right)\right\Vert_2 \nonumber \\
	\leq & \sup_{x_1 \in \overline{\mathcal{X}}_1,\eta \in \mathcal{A}_n(M) }\biggl\Vert \E\biggl[\frac{(\mu(1,X)-\mu_0(1,X))(\pi(X) - \pi_0(X))}{h^{d+2}\pi(X)}\biggr]\mathcal{K}_h(X_1-x_1)\left(X_1-x_1\right)\biggr\Vert_2 \nonumber\\
	\leq & \sup_{x_1 \in \overline{\mathcal{X}}_1,\eta \in \mathcal{A}_n(M) }h^{-d-2}\left\Vert (\hat\mu(j,X; I_k^c)-\mu_0(j,X))||X_1 - x_1||_2^{1/2}\mathcal{K}^{1/2}_h\left(X_1 - x_1\right)\right\Vert_{\P_{I_k},2}  \nonumber \\
	& \times \left\Vert (\hat\pi(X; I_k^c)-\pi_0(X))||X_1 - x_1||_2^{1/2}\mcalK_h^{1/2}\left(X_1-x_1\right)\right\Vert_{\P_{I_k},2}\nonumber \\
	\lesssim & h^{-d-2}\delta_{3n}^2.
	\label{eq:3}
	\end{align}
	Because $\eps$ is arbitrary, we have
	$$\sup_{k\leq K, x_1 \in \overline{\mathcal{X}}_1}||II_k(x_1)||_2 = O_p(h^{-d-2}\delta_{3n}^2) = o_p( (\log(n)nh^{d+2})^{-1/2}).$$
	This leads to \eqref{eq:R2k}. 
\end{proof}

\section{Monte Carlo simulations}\label{sec:sim}
In this section, we investigate the finite sample properties of the proposed high-dimensional CATE estimators using Monte Carlo experiments. The goal is to evaluate estimation accuracy and the validity of the proposed inference procedures. In accordance with the theoretical setup of the paper, we do not assume the knowledge of $\pi_0$ or $\mu_0$, but rather use variable selection techniques (specifically, lasso) to estimate these functions. We generate an initial pool of covariates whose size $p$ is comparable with, or even exceeds, the sample size $N$. Nevertheless, only a small subset of variables has significant effects. We consider two designs: (1) the data generating process (DGP) is strictly sparse (i.e., the number of covariates that actually enter the DGP is fixed and small), and (2) the DGP is approximately sparse in the sense that all $p$ covariates are relevant to some extent but only a few are truly important. Throughout this section we repeat the simulations 5000 times. For the cross-fitting estimator, we set $K=4$ as suggested by \cite{CC18}.



\subsection{Data-generating process}\label{ssec:dgp}

\noindent {\bf{DGP 1 (Strict sparsity):}}

This design is based on LOW. We specify the following linear model for the potential outcomes:
\begin{equation}\label{MC sim: Y(1)}
Y(1) = 10+\sum_{k=1}^{p}\beta_k X_k+\epsilon,
\end{equation}
and we set $Y(0)=0$. The covariates $\mathbf{X}=(X_1,X_2,\dots, X_p)'$ are drawn from the $N(0,I_p)$ distribution, where $I_p$ is the $p$-dimensional identity matrix. The error $\epsilon$ follows standard normal distribution and is independent of $\mathbf{X}$. We consider $p=100, 200, 500$, and sample sizes $N=500, 1000$. We set $\beta_i=1$, for $i=1,2,\dots, p_1$, and 0 for $i>p_1$. Therefore, only the first $p_1$ covariates are actually present in the conditional mean function. In our simulations we set $p_1=4$. 

The treatment status is determined by
\begin{equation}\label{MC sim: D}
D = \mathbf{l} \left \{ \Lambda \left ( \sum_{k=1}^{p}\gamma_k X_k \right ) > U \right \},
\end{equation}
where $U\sim \text{unif }(0,1)$, independent of $(\mathbf{X}, \epsilon)$, and $\Lambda(\cdot)$ is the logistic link function. The observed outcome is then $Y=DY(1)$. We set $\gamma_i=0.5$, for $i=1,2,\dots, p_1$, and 0 for $i>p_1$. We use the same value of $p_1$ as in the conditional mean function. 
The estimation target is $CATE(x_1)=E[Y(1)|X_1=x_1]$, given by 
\begin{equation}
CATE(x_1) = 10 + x_1.
\end{equation}

\noindent {\bf{DGP 2 (Approximate sparsity):}}

The strict sparsity assumption in DGP~1 may be too strong in applications. Given the functional forms specified above, approximate sparsity means that there are many small but nonzero coefficients $\beta_k$ and $\gamma_k$.  In DGP 2 we mimic this situation using a dwindling coefficients setup as in \cite{BCFH13}. Similarly to DGP 1, we define 
\begin{equation}\label{eq20}
\begin{split}
&Y(1)= \sum_{k=1}^{p}\beta_k X_k +\epsilon,\\
&D=\mathbf{l} \left \{ \Lambda \left ( \sum_{k=1}^{p}\gamma_k X_k \right ) > U \right \}, 
\end{split}
\end{equation}
where $\epsilon \sim N(0, 1)$, $U\sim \text{unif }(0,1)$, and  $\epsilon$ and $U$ are independent. We also set $Y(0)=0$. Nevertheless, the distribution of $X$ and the coefficients are generated differently. Specifically, we draw $X_1$ from the $N(0,1)$ distribution independently of the other covariates. The rest of the covariates are generated as $X_{i} \sim N(0, \Sigma)$ with $\Sigma_{kj} = (0.5)^{|j-k|}$, for $i=2, 3, \dots, p$. We consider $p=100, 200$ and $500$ and sample sizes $N = 500, 1000$. 

The model coefficients are specified in the following way. Let $\gamma_k = c_{d}\theta_{0,k}$, $\beta_k=c_{y}\theta_{0,k} $, where $\theta_{0}$ is a $p \times 1$ vector such that $\theta_{0,k} = (1/k)^{2}$ for $k = 1, ..., p$,  $c_{d}$ and $c_{y}$ are scalars that control the strength of the relationship between the controls, the outcome, and the treatment variable. We use several different combinations of $c_{d}$ and $c_{y}$ , setting $c_{d} = \sqrt{\frac{(\pi^{2}/3)R^{2}_{d}}{(1-R^{2}_{d})\theta'_{0}\Sigma\theta_{0}}}$ and $c_{y} = \sqrt{\frac{R^2_{y}}{(1-R^{2}_{d})\theta'_{0}\Sigma\theta_{0}}}$ for $R^{2}_{d} = R^{2}_{y} \in \{0.1, 0.5\}$.\footnote{The formulas for $c_{d}$ and $c_{y}$ ensure that the regressions under (\ref{eq20}) attain the given (pseudo) R$^2$ values. A higher value of R$^2$ means that the regressors are more informative about $D$ and $Y$.} 
The estimation target is again $CATE(x_1)=E[Y(1)|X_1=x_1]$, given by 
\begin{equation}
CATE(x_1) = \beta_1 x_1.
\end{equation}

The computation procedure is the following: we first estimate $\mu_0$ and $\pi_0$ using the
post lasso method of \cite{BCH14a}. After this step, $\hat
\beta$ and $\hat \gamma$ are sparse as some (most) coefficients are
estimated as exact zeros. Thus, this step also serves as the model selection
procedure, which is the advantage of a lasso type approach. In the second step, we substitute the estimated values of $\mu_0(j,X_i)$ and $\pi_0(X_i)$ into the score function $\psi$ and run kernel regressions of $\psi(W_i,\hat\eta)$ on $X_{1i}$ at various evaluation points $x_1$. In particular, we use a grid of 201 equally spaced points over the interval $\mathcal{X}_1=[-1,1]$, and estimate CATE at these points. We use the Gaussian kernel throughout.

Before we proceed, we briefly discuss bandwidth choice in practice. To obtaining our theoretical results, we require undersmoothing to eliminate bias asymptotically.  When $d=1$ as in the simulations, we suggest setting $h_N=\hat h \times N ^{1/5} \times N^{-2/7}$, where $\hat h=1.06\cdot \hat{\sigma}_{x_1} N^{-1/5}$ and $\hat{\sigma}_{x_1}$ is the estimated standard deviation of $X_1$. The formula for $\hat h$ corresponds to the rule-of-thumb bandwidth with a Gaussian kernel suggested by \cite{Silverman1986}.\footnote{When $d=2$ or $3$, we suggest setting for $j=1,\ldots,d$, $h_{jN}=\hat h_j \times N ^{1/(4+d)} \times N^{-2/(4+3d)}$ and $\hat h_j=1.06\cdot \hat{\sigma}_{x_{1j}} N^{-1/(4+d)}$ and $\hat{\sigma}_{x_{1j}}$ is the estimator of the standard deviation of the $j$-th element of $X_1$.} The bandwidth selection is done with the entire sample even for the cross-fitting method. Also the same selection method is employed in the empirical application.


We examine the coverage probability of the (2-sided) uniform confidence band for ${\text{CATE}(x_1)}$ over the grid described above. The nominal coverage probabilities ($1-\alpha$) are the standard 99\%, 95\% and 90\%. We compute the empirical coverage (EMP), the mean critical value (Mcri), and the standard deviation of critical value (Sdcri).\footnote{``Critical value'' refers to the statistic $\widehat{C}^{\text{2-sided}}_\alpha$, whose computation is discussed in Section \ref{sec: unif inference}.} We also compute the coverage probabilities of the confidence band based on the critical values computed by the Gumbel approximation (using the full sample) as in LOW. In addition, we report various statistics describing the properties of the estimates for $x_1 \in \{-1,-0.5,0,0.5,1\}$. These are the bias (BIAS), standard deviation (SD), the average estimated standard error of $\widehat{\text{CATE}}(x_1)$ (ASE), and the root mean squared error (RMSE). 

\subsection{Simulation Results}

\begin{table}[!ht]
	\centering
	\caption{DGP 1, high-dimensional CATE estimation}
	\label{T1}
	\renewcommand\arraystretch{0.6}
	\begin{tabular}{ccc|ccc|cccc}
		\hline
		& & & \multicolumn{3}{c|}{cross-fitting} & \multicolumn{3}{c}{full sample} & \\ \hline
		p & N & Conf. level & EMP & Mcri & Sdcri & EMP & Mcri & Sdcri & Gumbel \\ \hline
		\multirow{6}{*}{p=100} & \multirow{3}{*}{N=500} & 99\% & 0.984 & 3.291 & 0.093 & 0.980 & 3.292 & 0.104 & 1.000 \\
		& & 95\% & 0.947 & 2.755 & 0.097 & 0.944 & 2.753 & 0.106 & 0.998 \\
		& & 90\% & 0.880 & 2.475 & 0.113 & 0.874 & 2.476 & 0.120 & 0.996 \\ \cline{2-10}
		& \multirow{3}{*}{N=1000} & 99\% & 0.988 & 3.294 & 0.082 & 0.984 & 3.296 & 0.084 & 1.000 \\
		& & 95\% & 0.948 & 2.759 & 0.097 & 0.947 & 2.756 & 0.102 & 0.998 \\
		& & 90\% & 0.903 & 2.478 & 0.106 & 0.907 & 2.480 & 0.114 & 0.992 \\ \hline
		\multirow{6}{*}{p=200} & \multirow{3}{*}{N=500} & 99\% & 0.981 & 3.331 & 0.092 & 0.977 & 3.335 & 0.095 & 0.998 \\
		& & 95\% & 0.939 & 2.762 & 0.106 & 0.935 & 2.764 & 0.108 & 0.998 \\
		& & 90\% & 0.875 & 2.481 & 0.115 & 0.867 & 2.485 & 0.118 & 0.996 \\ \cline{2-10}
		& \multirow{3}{*}{N=1000} & 99\% & 0.991 & 3.286 & 0.077 & 0.994 & 3.289 & 0.081 & 1.000 \\
		& & 95\% & 0.953 & 2.747 & 0.090 & 0.956 & 2.749 & 0.093 & 0.998 \\
		& & 90\% & 0.895 & 2.475 & 0.101 & 0.890 & 2.474 & 0.103 & 0.994 \\ \hline
		\multirow{6}{*}{p=500} & \multirow{3}{*}{N=500} & 99\% & 0.979 & 3.415 & 0.095 &0.975 & 3.417 & 0.098 & 0.998 \\
		& & 95\% & 0.944 & 2.762 & 0.110 &0.939 & 2.766 & 0.114 & 0.998 \\
		& & 90\% & 0.886 & 2.481 & 0.124 &0.882 & 2.483 & 0.126 & 0.996 \\ \cline{2-10}
		& \multirow{3}{*}{N=1000} & 99\% & 0.986 & 3.296 & 0.077 & 0.983 & 3.294 & 0.079 & 1.000 \\
		& & 95\% & 0.945 & 2.752 & 0.091 & 0.942 & 2.748 & 0.096 & 1.000 \\
		& & 90\% & 0.893 & 2.476 & 0.101 & 0.889 & 2.479 & 0.103 & 0.998 \\ \hline
	\end{tabular}
	\begin{tablenotes}
		\setlength{\baselineskip}{7pt}
		\scriptsize
		\item \textit{Notes:} The nominal coverage probabilities that we consider are 99\%,
		95\% and 90\%. We compute the empirical coverage (``EMP"), the
		mean critical value (``Mcri"), and the standard deviation of critical value (``Sdcri").
		We also compute the coverage probabilities of
		the confidence band based on the critical values computed by the
		Gumbel approximation (``Gumbel").
	\end{tablenotes}
\end{table}

\begin{table}[!ht]
	\centering
	\caption{DGP1, high-dimensional CATE estimation}
	\renewcommand\arraystretch{0.6}
	\label{T2}
	\begin{tabular}{ccc|cccccccc}
		\hline
		& & & \multicolumn{8}{c}{HD-DR} \\ \cline{4-11}
		& & & \multicolumn{4}{c|}{cross-fitting} & \multicolumn{4}{c}{full sample} \\ \hline
		& & At x= & BIAS & SD & ASE & \multicolumn{1}{c|}{RMSE} & BIAS & SD & ASE & RMSE \\ \hline
		\multirow{10}{*}{p=100} & \multirow{5}{*}{N=500} & -1 & 0.018 & 0.258 & 0.251 & \multicolumn{1}{c|}{0.260} & 0.020 & 0.269 & 0.259 & 0.271 \\
		& & -0.5 & 0.008 & 0.214 & 0.205 & \multicolumn{1}{c|}{0.215} & 0.009 & 0.223 & 0.209 & 0.223 \\
		& & 0 & 0.007 & 0.201 & 0.182 & \multicolumn{1}{c|}{0.202} & 0.011 & 0.210 & 0.187 & 0.212 \\
		& & 0.5 & 0.010 & 0.207 & 0.184 & \multicolumn{1}{c|}{0.208} & 0.011 & 0.215 & 0.192 & 0.216 \\
		& & 1 & -0.019 & 0.237 & 0.215 & \multicolumn{1}{c|}{0.239} & -0.022 & 0.247 & 0.223 & 0.249 \\ \cline{2-11}
		& \multirow{5}{*}{N=1000} & -1 & 0.009 & 0.186 & 0.175 & \multicolumn{1}{c|}{0.188} & 0.011 & 0.192 & 0.181 & 0.195 \\
		& & -0.5 & 0.010 & 0.152 & 0.138 & \multicolumn{1}{c|}{0.153} & 0.012 & 0.157 & 0.145 & 0.159 \\
		& & 0 & -0.006 & 0.146 & 0.125 & \multicolumn{1}{c|}{0.146} & -0.008 & 0.150 & 0.131 & 0.151 \\
		& & 0.5 & 0.005 & 0.148 & 0.132 & \multicolumn{1}{c|}{0.148} & 0.010 & 0.157 & 0.136 & 0.158 \\
		& & 1 & -0.011 & 0.158 & 0.149 & \multicolumn{1}{c|}{0.160} & -0.013 & 0.169 & 0.161 & 0.171 \\ \hline
		\multirow{10}{*}{p=200} & \multirow{5}{*}{N=500} & -1 & 0.020 & 0.271 & 0.258 & \multicolumn{1}{c|}{0.273} & 0.022 & 0.294 & 0.262 & 0.296 \\
		& & -0.5 & -0.012 & 0.219 & 0.207 & \multicolumn{1}{c|}{0.220} & -0.011 & 0.226 & 0.214 & 0.228 \\
		& & 0 & 0.009 & 0.201 & 0.185 & \multicolumn{1}{c|}{0.202} & 0.011 & 0.210 & 0.193 & 0.211 \\
		& & 0.5 & 0.011 & 0.212 & 0.188 & \multicolumn{1}{c|}{0.214} & 0.013 & 0.217 & 0.195 & 0.219 \\
		& & 1 & -0.018 & 0.220 & 0.207 & \multicolumn{1}{c|}{0.221} & -0.021 & 0.229 & 0.216 & 0.231 \\ \cline{2-11}
		& \multirow{5}{*}{N=1000} & -1 & 0.011 & 0.183 & 0.165 & \multicolumn{1}{c|}{0.185} & 0.013 & 0.194 & 0.172 & 0.197 \\
		& & -0.5 & 0.008 & 0.148 & 0.128 & \multicolumn{1}{c|}{0.149} & 0.010 & 0.156 & 0.136 & 0.157 \\
		& & 0 & 0.004 & 0.130 & 0.118 & \multicolumn{1}{c|}{0.130} & 0.005 & 0.134 & 0.122 & 0.134 \\
		& & 0.5 & -0.003 & 0.131 & 0.115 & \multicolumn{1}{c|}{0.131} & -0.005 & 0.139 & 0.127 & 0.139 \\
		& & 1 & 0.013 & 0.159 & 0.131 & \multicolumn{1}{c|}{0.161} & 0.017 & 0.164 & 0.137 & 0.167 \\ \hline
		\multirow{10}{*}{p=500} & \multirow{5}{*}{N=500} & -1 & 0.028 & 0.260 & 0.243 & \multicolumn{1}{c|}{0.262} & 0.035 & 0.281 & 0.262 & 0.282 \\
		& & -0.5 & 0.006 & 0.233 & 0.209 & \multicolumn{1}{c|}{0.234} & 0.008 & 0.239 & 0.218 & 0.240 \\
		& & 0 & 0.004 & 0.196 & 0.184 & \multicolumn{1}{c|}{0.196} & 0.007 & 0.208 & 0.189 & 0.208 \\
		& & 0.5 & -0.012 & 0.207 & 0.185 & \multicolumn{1}{c|}{0.207} & -0.016 & 0.219 & 0.191 & 0.219 \\
		& & 1 & 0.021 & 0.229 & 0.208 & \multicolumn{1}{c|}{0.230} & 0.025 & 0.238 & 0.211 & 0.239 \\ \cline{2-11}
		& \multirow{5}{*}{N=1000} & -1 & -0.014 & 0.184 & 0.165 & \multicolumn{1}{c|}{0.186} & -0.019 & 0.191 & 0.174 & 0.192 \\
		& & -0.5 & 0.004 & 0.142 & 0.127 & \multicolumn{1}{c|}{0.143} & 0.006 & 0.151 & 0.139 & 0.152 \\
		& & 0 & 0.003 & 0.140 & 0.123 & \multicolumn{1}{c|}{0.141} & 0.004 & 0.146 & 0.129 & 0.147 \\
		& & 0.5 & -0.008 & 0.134 & 0.114 & \multicolumn{1}{c|}{0.135} & -0.012 & 0.142 & 0.117 & 0.143 \\
		& & 1 & 0.019 & 0.167 & 0.134 & \multicolumn{1}{c|}{0.169} & 0.021 & 0.172 & 0.141 & 0.175 \\ \hline
	\end{tabular}
	\begin{tablenotes}
		\setlength{\baselineskip}{7pt}
		\scriptsize
		\item \textit{Notes:}
		In both tables,  we estimate CATEF($x_{1}$) for $x_{1} \in \{-1,-0.5,0,0.5,1\}$
		and compute the mean bias (``BIAS"), standard deviation (``SD"), the average of standard error for CATEF($x_{1}$) (``ASE"), and the root mean squared error (``RMSE").
	\end{tablenotes}
\end{table}

Tables \ref{T1}-\ref{T2} show the simulation results for DGP 1 (strict sparsity).
In particular, Table \ref{T1} shows that the proposed uniform confidence band has very good finite sample coverage properties in this setting. For example, for $p=200$ and $N=500$, the uniform confidence band for the cross-fitting method covers $\tau_0$ over the $[-1,1]$ interval with  98.1\%, 93.9\% and 87.5\% coverage rates given the nominal probability of 99\%, 95\% and 90\%, respectively. When the sample size increases to $N=1000$, the corresponding empirical probabilities further improve to 99.1\%, 95.3\% and 89.5\%, respectively. The coverage rates are similarly accurate in all other cases, including for the full-sample estimator. Thus, our approach is capable of providing sound inference for the unknown $\tau_0$. 

The importance of conducting uniform inference properly is apparent from the average size of the critical values. For example, the pointwise critical value for $1-\alpha = 95\%$ is 1.96, while the critical value for the corresponding uniform confidence band is around 2.75. Hence, using pointwise confidence bands to draw inferences about the global properties of the function $\tau_0$ can be misleading. The Gumbel approximation, on the other hand, is shown to be conservative. 

Table \ref{T2} displays various statistics describing finite sample estimation accuracy. Both variants of the estimator exhibit small bias, and as the sample size increases, RMSE improves.\footnote{Our results show that the lasso method has good performance in selecting the important variables in the $\mu_0$ and $\pi_0$ functions. These results are available upon request. We also did simulations for kernel estimator instead of local linear estimators, and the kernel estimator results are presented in an earlier version of the paper available upon request.} The estimated standard errors show a small downward bias. Comparing the two variants, the differences between the RMSE values are small, but it is always the cross-fitting approach that comes out slightly better in each setting.  

\begin{table}[!ht]
	\centering
	\caption{ DGP 2, $R^2= 0.1$, HD CATE estimation}
	\label{T3}
	\renewcommand\arraystretch{0.6}
	\begin{tabular}{ccc|ccc|cccc}
		\hline
		& & & \multicolumn{3}{c|}{cross-fitting} & \multicolumn{3}{c}{full sample} & \\ \hline
		p & n & Conf.level & EMP & Mcri & Sdcri & EMP & Mcri & Sdcri & Gumbel \\ \hline
		\multirow{6}{*}{p=100} & \multirow{3}{*}{N=500} & 99\% & 0.984 & 3.290 & 0.089 & 0.982 & 3.293 & 0.090 & 1.000 \\
		& & 95\% & 0.942 & 2.754 & 0.103 & 0.936 & 2.751 & 0.106 & 1.000 \\
		& & 90\% & 0.886 & 2.473 & 0.117 & 0.879 & 2.475 & 0.117 & 0.996 \\ \cline{2-10}
		& \multirow{3}{*}{N=1000} & 99\% & 0.991 & 3.296 & 0.091 & 0.987 & 3.295 & 0.091 & 1.000 \\
		& & 95\% & 0.940 & 2.756 & 0.107 & 0.937 & 2.756 & 0.109 & 1.000 \\
		& & 90\% & 0.892 & 2.479 & 0.116 & 0.889 & 2.481 & 0.118 & 0.998 \\ \hline
		\multirow{6}{*}{p=200} & \multirow{3}{*}{N=500} & 99\% & 0.981 & 3.335 & 0.084 & 0.979 & 3.299 & 0.086 & 1.000 \\
		& & 95\% & 0.937 & 2.762 & 0.102 & 0.929 & 2.759 & 0.104 & 1.000 \\
		& & 90\% & 0.873 & 2.484 & 0.113 & 0.869 & 2.483 & 0.114 & 0.992 \\ \cline{2-10}
		& \multirow{3}{*}{N=1000} & 99\% & 0.987 & 3.291 & 0.077 & 0.985 & 3.292 & 0.082 & 1.000 \\
		& & 95\% &0.942  & 2.756 & 0.092 & 0.939 & 2.753 & 0.097 & 0.998 \\
		& & 90\% & 0.891 & 2.478 & 0.103 & 0.889 & 2.479 & 0.105 & 0.996 \\ \hline
		\multirow{6}{*}{p=500} & \multirow{3}{*}{N=500} & 99\% & 0.979 & 3.341 & 0.085 & 0.975 & 3.335 & 0.088 & 1.000 \\
		& & 95\% & 0.939 & 2.761 & 0.103 & 0.934 & 2.762 & 0.106 & 1.000 \\
		& & 90\% & 0.879 & 2.485 & 0.113 & 0.874 & 2.486 & 0.115 & 0.992 \\ \cline{2-10}
		& \multirow{3}{*}{N=1000} & 99\% & 0.986 & 3.296 & 0.096 & 0.983 & 3.297 & 0.096 & 1.000 \\
		& & 95\% & 0.941 & 2.756 & 0.113 & 0.936 & 2.758 & 0.116 & 1.000 \\
		& & 90\% & 0.886 & 2.486 & 0.124 & 0.879 & 2.485 & 0.127 & 0.996 \\ \hline
	\end{tabular}
	\begin{tablenotes}
		\scriptsize
		\setlength{\baselineskip}{7pt}
		\item \textit{Notes:} See notes to Table \ref{T1}.
	\end{tablenotes}
\end{table}
\begin{table}[!ht]
	\centering
	\renewcommand\arraystretch{0.6}
	\caption{DGP 2, $R^2= 0.1$, HD CATE estimation}
	\label{T4}
	\begin{tabular}{ccc|cccccccc}
		\hline
		& & & \multicolumn{8}{c}{HD-DR} \\ \cline{4-11}
		& & & \multicolumn{4}{c|}{cross-fitting} & \multicolumn{4}{c}{full sample} \\ \hline
		& & At x= & BIAS & SD & ASE & \multicolumn{1}{c|}{RMSE} & BIAS & SD & ASE & RMSE \\ \hline
		\multirow{10}{*}{p=100} & \multirow{5}{*}{N=500} & -1 & 0.025 & 0.186 & 0.169 & \multicolumn{1}{c|}{0.188} & 0.037 & 0.192 & 0.175 & 0.199 \\
		& & -0.5 & 0.014 & 0.143 & 0.128 & \multicolumn{1}{c|}{0.146} & 0.017 & 0.161 & 0.135 & 0.163 \\
		& & 0 & -0.005 & 0.109 & 0.102 & \multicolumn{1}{c|}{0.109} & -0.008 & 0.114 & 0.108 & 0.115 \\
		& & 0.5 & -0.012 & 0.112 & 0.109 & \multicolumn{1}{c|}{0.113} & -0.016 & 0.115 & 0.114 & 0.116 \\
		& & 1 & 0.016 & 0.123 & 0.107 & \multicolumn{1}{c|}{0.125} & 0.019 & 0.135 & 0.117 & 0.136 \\ \cline{2-11}
		& \multirow{5}{*}{N=1000} & -1 & 0.022 & 0.122 & 0.109 & \multicolumn{1}{c|}{0.124} & 0.026 & 0.132 & 0.113 & 0.134 \\
		& & -0.5 & 0.018 & 0.095 & 0.090 & \multicolumn{1}{c|}{0.097} & 0.021 & 0.103 & 0.094 & 0.107 \\
		& & 0 & 0.006 & 0.090 & 0.075 & \multicolumn{1}{c|}{0.091} & 0.007 & 0.093 & 0.081 & 0.095 \\
		& & 0.5 & -0.014 & 0.108 & 0.074 & \multicolumn{1}{c|}{0.109} & -0.018 & 0.112 & 0.080 & 0.114 \\
		& & 1 & 0.009 & 0.116 & 0.084 & \multicolumn{1}{c|}{0.118} & 0.012 & 0.121 & 0.091 & 0.126 \\ \hline
		\multirow{10}{*}{p=200} & \multirow{5}{*}{N=500} & -1 & 0.036 & 0.146 & 0.120 & \multicolumn{1}{c|}{0.149} & 0.041 & 0.154 & 0.122 & 0.157 \\
		& & -0.5 & 0.018 & 0.122 & 0.091 & \multicolumn{1}{c|}{0.125} & 0.021 & 0.130 & 0.095 & 0.132 \\
		& & 0 & 0.008 & 0.116 & 0.099 & \multicolumn{1}{c|}{0.118} & 0.009 & 0.119 & 0.080 & 0.121 \\
		& & 0.5 & -0.015 & 0.114 & 0.093 & \multicolumn{1}{c|}{0.116} & -0.017 & 0.116 & 0.082 & 0.118 \\
		& & 1 & 0.037 & 0.126 & 0.095 & \multicolumn{1}{c|}{0.129} & 0.043 & 0.130 & 0.091 & 0.132 \\ \cline{2-11}
		& \multirow{5}{*}{N=1000} & -1 & -0.030 & 0.123 & 0.108 & \multicolumn{1}{c|}{0.126} & -0.036 & 0.127 & 0.113 & 0.129 \\
		& & -0.5 & 0.013 & 0.104 & 0.089 & \multicolumn{1}{c|}{0.105} & 0.014 & 0.107 & 0.093 & 0.108 \\
		& & 0 & 0.004 & 0.090 & 0.076 & \multicolumn{1}{c|}{0.090} & 0.006 & 0.085 & 0.081 & 0.085 \\
		& & 0.5 & -0.012 & 0.105 & 0.084 & \multicolumn{1}{c|}{0.106} & -0.014 & 0.109 & 0.081 & 0.110 \\
		& & 1 & -0.024 & 0.113 & 0.088 & \multicolumn{1}{c|}{0.115} & -0.033 & 0.117 & 0.090 & 0.118 \\ \hline
		\multirow{10}{*}{p=500} & \multirow{5}{*}{N=500} & -1 & -0.041 & 0.195 & 0.177 & \multicolumn{1}{c|}{0.199} & -0.044 & 0.206 & 0.188 & 0.209 \\
		& & -0.5 & -0.023 & 0.145 & 0.123 & \multicolumn{1}{c|}{0.147} & -0.025 & 0.150 & 0.129 & 0.153 \\
		& & 0 & -0.009 & 0.119 & 0.108 & \multicolumn{1}{c|}{0.120} & -0.011 & 0.128 & 0.114 & 0.129 \\
		& & 0.5 & 0.020 & 0.126 & 0.115 & \multicolumn{1}{c|}{0.128} & 0.026 & 0.137 & 0.122 & 0.139 \\
		& & 1 & 0.035 & 0.131 & 0.102 & \multicolumn{1}{c|}{0.135} & 0.040 & 0.139 & 0.117 & 0.144 \\ \cline{2-11}
		& \multirow{5}{*}{N=1000} & -1 & 0.028 & 0.211 & 0.190 & \multicolumn{1}{c|}{0.215} & 0.030 & 0.221 & 0.202 & 0.224 \\
		& & -0.5 & 0.018 & 0.178 & 0.143 & \multicolumn{1}{c|}{0.180} & 0.020 & 0.184 & 0.151 & 0.187 \\
		& & 0 & -0.007 & 0.118 & 0.122 & \multicolumn{1}{c|}{0.120} & -0.008 & 0.124 & 0.126 & 0.125 \\
		& & 0.5 & -0.016 & 0.116 & 0.115 & \multicolumn{1}{c|}{0.119} & -0.021 & 0.121 & 0.120 & 0.125 \\
		& & 1 & 0.030 & 0.126 & 0.130 & \multicolumn{1}{c|}{0.129} & 0.038 & 0.131 & 0.134 & 0.136 \\ \hline
	\end{tabular}
	\begin{tablenotes}
		\setlength{\baselineskip}{7pt}
		\item \textit{Note.} See note to Table \ref{T2}.
	\end{tablenotes}
\end{table}

\begin{table}[!ht]
	\centering
	\caption{ DGP 2, $R^2= 0.5$, HD CATE estimation}
	\renewcommand\arraystretch{0.6}
	\label{T5}
	\begin{tabular}{ccc|ccc|cccc}
		\hline
		& & & \multicolumn{3}{c|}{cross-fitting} & \multicolumn{3}{c}{full sample} & \\ \hline
		p & N & Conf.level & EMP & Mcri & Sdcri & EMP & Mcri & Sdcri & Gumbel \\ \hline
		\multirow{6}{*}{p=100} & \multirow{3}{*}{N=500} & 99\% & 0.983 & 3.289 & 0.088 & 0.979 & 3.289 & 0.091 & 1.000 \\
		& & 95\% & 0.930 & 2.746 & 0.103 & 0.924 & 2.747 & 0.107 & 0.998 \\
		& & 90\% & 0.876 & 2.471 & 0.112 & 0.872 & 2.473 & 0.117 & 0.998 \\ \cline{2-10}
		& \multirow{3}{*}{N=1000} & 99\% & 0.987 & 3.315 & 0.105 & 0.983 & 3.314 & 0.107 & 1.000 \\
		& & 95\% & 0.941 & 2.782 & 0.122 & 0.938 & 2.781 & 0.129 & 1.000 \\
		& & 90\% & 0.880 & 2.505 & 0.134 & 0.876 & 2.504 & 0.140 & 0.984 \\ \hline
		\multirow{6}{*}{p=200} & \multirow{3}{*}{N=500} & 99\% & 0.970 & 3.297 & 0.095 & 0.964 & 3.292 & 0.097 & 0.998 \\
		& & 95\% & 0.926 & 2.754 & 0.111 & 0.920 & 2.753 & 0.112 & 0.994 \\
		& & 90\% & 0.869 & 2.482 & 0.120 & 0.861 & 2.478 & 0.122 & 0.990 \\ \cline{2-10}
		& \multirow{3}{*}{N=1000} & 99\% & 0.982 & 3.331 & 0.100 & 0.979 & 3.330 & 0.103 & 1.000 \\
		& & 95\% & 0.939 & 2.767 & 0.118 & 0.935 & 2.768 & 0.123 & 0.996 \\
		& & 90\% & 0.875 & 2.490 & 0.132 & 0.869 & 2.492 & 0.135 & 0.984 \\ \hline
		\multirow{6}{*}{p=500} & \multirow{3}{*}{N=500} & 99\% & 0.968 & 3.298 & 0.099 & 0.964 & 3.297 & 0.101 & 0.998 \\
		& & 95\% & 0.923 & 2.759 & 0.116 & 0.921 & 2.756 & 0.118 & 0.996 \\
		& & 90\% & 0.865 & 2.477 & 0.128 & 0.860 & 2.479 & 0.130 & 0.992 \\ \cline{2-10}
		& \multirow{3}{*}{N=1000} & 99\% & 0.980 & 3.304 & 0.102 & 0.972 & 3.305 & 0.103 & 1.000 \\
		& & 95\% & 0.936 & 2.771 & 0.117 & 0.929 & 2.770 & 0.121 & 0.998 \\
		& & 90\% & 0.872 & 2.496 & 0.131 & 0.866 & 2.497 & 0.133 & 0.994 \\ \hline
	\end{tabular}
	\begin{tablenotes}
		\setlength{\baselineskip}{7pt}
		\item \textit{Note.} See note to Table \ref{T1}.
	\end{tablenotes}
	
\end{table}

\begin{table}[!ht]
	\centering
	\caption{ DGP 2, $R^2= 0.5$, HD CATE estimation}
	\label{T6}
	\renewcommand\arraystretch{0.6}
	\begin{tabular}{ccc|cccccccc}
		\hline
		& & & \multicolumn{8}{c}{HD-DR} \\ \cline{4-11}
		& & & \multicolumn{4}{c|}{cross-fitting} & \multicolumn{4}{c}{full sample} \\ \hline
		& & At x= & BIAS & SD & ASE & \multicolumn{1}{c|}{RMSE} & BIAS & SD & ASE & RMSE \\ \hline
		\multirow{10}{*}{p=100} & \multirow{5}{*}{N=500} & -1 & 0.035 & 0.323 & 0.309 & \multicolumn{1}{c|}{0.326} & 0.040 & 0.329 & 0.316 & 0.331 \\
		& & -0.5 & 0.022 & 0.177 & 0.160 & \multicolumn{1}{c|}{0.179} & 0.025 & 0.183 & 0.164 & 0.185 \\
		& & 0 & -0.005 & 0.120 & 0.107 & \multicolumn{1}{c|}{0.121} & -0.007 & 0.124 & 0.112 & 0.126 \\
		& & 0.5 & 0.036 & 0.118 & 0.102 & \multicolumn{1}{c|}{0.119} & 0.039 & 0.122 & 0.108 & 0.124 \\
		& & 1 & 0.048 & 0.146 & 0.136 & \multicolumn{1}{c|}{0.148} & 0.051 & 0.151 & 0.136 & 0.153 \\ \cline{2-11}
		& \multirow{5}{*}{N=1000} & -1 & -0.029 & 0.230 & 0.199 & \multicolumn{1}{c|}{0.233} & -0.032 & 0.239 & 0.205 & 0.242 \\
		& & -0.5 & 0.019 & 0.137 & 0.117 & \multicolumn{1}{c|}{0.139} & 0.023 & 0.142 & 0.125 & 0.144 \\
		& & 0 & 0.005 & 0.089 & 0.075 & \multicolumn{1}{c|}{0.089} & 0.006 & 0.091 & 0.080 & 0.092 \\
		& & 0.5 & -0.021 & 0.093 & 0.077 & \multicolumn{1}{c|}{0.094} & -0.028 & 0.104 & 0.082 & 0.106 \\
		& & 1 & -0.023 & 0.111 & 0.089 & \multicolumn{1}{c|}{0.113} & -0.027 & 0.115 & 0.092 & 0.118 \\ \hline
		\multirow{10}{*}{p=200} & \multirow{5}{*}{N=500} & -1 & 0.047 & 0.326 & 0.262 & \multicolumn{1}{c|}{0.329} & 0.051 & 0.338 & 0.271 & 0.341 \\
		& & -0.5 & 0.032 & 0.191 & 0.164 & \multicolumn{1}{c|}{0.194} & 0.034 & 0.199 & 0.178 & 0.202 \\
		& & 0 & -0.009 & 0.117 & 0.104 & \multicolumn{1}{c|}{0.118} & -0.011 & 0.123 & 0.112 & 0.125 \\
		& & 0.5 & 0.060 & 0.125 & 0.122 & \multicolumn{1}{c|}{0.127} & 0.065 & 0.133 & 0.130 & 0.134 \\
		& & 1 & 0.066 & 0.134 & 0.119 & \multicolumn{1}{c|}{0.137} & 0.071 & 0.139 & 0.124 & 0.141 \\ \cline{2-11}
		& \multirow{5}{*}{N=1000} & -1 & 0.027 & 0.222 & 0.195 & \multicolumn{1}{c|}{0.225} & 0.030 & 0.234 & 0.201 & 0.237 \\
		& & -0.5 & 0.023 & 0.140 & 0.121 & \multicolumn{1}{c|}{0.142} & 0.026 & 0.147 & 0.127 & 0.150 \\
		& & 0 & -0.005 & 0.090 & 0.082 & \multicolumn{1}{c|}{0.091} & -0.007 & 0.096 & 0.090 & 0.098 \\
		& & 0.5 & -0.031 & 0.098 & 0.079 & \multicolumn{1}{c|}{0.099} & -0.035 & 0.101 & 0.082 & 0.103 \\
		& & 1 & -0.035 & 0.126 & 0.095 & \multicolumn{1}{c|}{0.129} & -0.039 & 0.132 & 0.099 & 0.135 \\ \hline
		\multirow{10}{*}{p=500} & \multirow{5}{*}{N=500} & -1 & -0.053 & 0.329 & 0.281 & \multicolumn{1}{c|}{0.332} & -0.057 & 0.341 & 0.288 & 0.344 \\
		& & -0.5 & -0.037 & 0.201 & 0.181 & \multicolumn{1}{c|}{0.204} & -0.040 & 0.210 & 0.185 & 0.213 \\
		& & 0 & -0.009 & 0.115 & 0.111 & \multicolumn{1}{c|}{0.116}  & -0.013 & 0.123 & 0.113 & 0.125 \\
		& & 0.5 & 0.048 & 0.138 & 0.124 & \multicolumn{1}{c|}{0.141}& 0.052 & 0.142 & 0.130 & 0.144 \\
		& & 1 & 0.085 & 0.146 & 0.112 & \multicolumn{1}{c|}{0.149} & 0.096 & 0.154 & 0.119 & 0.157 \\ \cline{2-11}
		& \multirow{5}{*}{N=1000} & -1 & -0.036 & 0.232 & 0.198 & \multicolumn{1}{c|}{0.235} & -0.040 & 0.240 & 0.206 & 0.243 \\
		& & -0.5 & -0.032 & 0.149 & 0.122 & \multicolumn{1}{c|}{0.151} & -0.037 & 0.160 & 0.131 & 0.162 \\
		& & 0 & 0.011 & 0.094 & 0.089 & \multicolumn{1}{c|}{0.095} & 0.014 & 0.099 & 0.093 & 0.100 \\
		& & 0.5 & 0.037 & 0.103 & 0.091 & \multicolumn{1}{c|}{0.104} & 0.042 & 0.106 & 0.086 & 0.108 \\
		& & 1 & 0.034 & 0.132 & 0.096 & \multicolumn{1}{c|}{0.135} & 0.036 & 0.137 & 0.098 & 0.140 \\ \hline
	\end{tabular}
	\begin{tablenotes}
		\setlength{\baselineskip}{7pt}
		\item \textit{Note.} See note to Table \ref{T2}.
	\end{tablenotes}
\end{table}

Tables \ref{T3}-\ref{T6} show the simulation results for DGP 2 (approximate sparsity),
for $R^2=$0.1 and 0.5 which controls the overall explanatory power of the covariates. The overall performance of the proposed estimators is satisfactory both in terms of empirical coverage and estimation accuracy. For example, in Table \ref{T3} ($R_d^2=R_y^2=0.1$), for $p=200$, $N=500$, the empirical coverage rate of the uniform confidence band for the cross-fitting estimator is 98.1\%, 93.7\% and 87.3\% given the respective nominal coverage rates 99\%, 95\% and 90\%. As sample size increases to $n=1000$, the corresponding figures improve to 98.7\%, 94.2\% and 89.1\%, respectively. In most cases, the difference between the nominal and the empirical coverage rate is somewhat larger than under strict sparsity, but inference is still very precise. 

Turning to Table \ref{T4} (estimation accuracy for $R^2=0.1$), one observes some of the same patterns as in the case of DGP~1. Specifically, the RMSE decreases with the sample size, and is slightly but consistently smaller for the cross-fitting estimator than for the full-sample estimator. The  estimated standard error is also slightly downward biased. Compared with DGP~1, the estimators have somewhat larger bias. 

Finally, Tables \ref{T5} and \ref{T6} show the results for DGP~2 with $R^2=0.5$. Generally speaking, less sparsity lead to slightly worse finite sample performance both in terms of coverage rates and estimation accuracy. While inference is still reasonably accurate, the RMSE of both estimators increase substantially, in some cases by as much as 50 percent or more. This is likely due to the fact that variable selection, i.e., obtaining a parsimonious approximation to the DGP, is more difficult when the DGP is less sparse. In case of $R^2=0.5$, the model coefficients are larger and shrink to zero slower compared to the case of $R^2=0.1$. Thus, it is hard to single out a handful of variables as the most important ones, and there can be substantial small sample variation in the set of regressors selected by the lasso. (The $R^2$ associated with DGP~1 is greater than 0.5, which shows that it is not the high $R^2$ itself that is the root of the problem but the increased difficulty of model selection in finite samples.) This finding is consistent with the theory presented in Section \ref{sec:asy}.  



\subsection{Simulation Results II: Comparison with LOW}\label{sec:sim2}
In this section, we compare the finite-sample performance of our method with the one proposed by LOW. Specifically, we consider the following three designs: 1) the exact simulation DGP setting of LOW. In this case we wish to show the robustness of our method in relatively low dimensions and non-sparse models. 2)  the exact simulation DGP setting of LOW, but with sparsity features,  and 3) our DGPs as shown in Section \ref{ssec:dgp}, but we now consider cases with p = 10, 30, as in LOW\footnote{To make a fair comparison with LOW, we use the value $p=10, 30$. Our paper proposes a machine learning first stage estimator, which can handle high-dimensional covariates. For instance, our method can handle cases in which $p$ is comparable or even greater than the sample size $n$, but it would not be feasible for LOW.}. We first show the simulation results for design 1. The DGP of LOW is
\begin{equation}\label{eqlow}
\begin{split}
&Y(1)= 10+ \sum_{k=1}^{p} \frac{1}{\sqrt p} X_k +\epsilon, \quad Y(0)=0, \\
&D=\mathbf{l} \left \{ \Lambda \left ( \sum_{k=p/2}^{p} \frac{1}{\sqrt{ p/2}}  X_k \right ) > U \right \}, 
\end{split}
\end{equation}
where the X's, $\epsilon$ and $U$ have the same probabilistic behaviors as our DGP described in Section \ref{ssec:dgp}.  And the $CATE(x_1)=10+x_1/\sqrt p$. Notice in this setting the model is non-sparse: for $p=10$, the coefficients for regression model and propensity score are 0.316 and 0.447, for $p=30$, the coefficients for regression model and propensity score are 0.183 and 0.258, respectively.

LOW reports the simulation results of four scenarios: 1) true propensity score and regression models; 2) true propensity score model but false regression model; 3) false propensity score model but true regression model, and 4) false propensity score and regression models. In this simulation study, we consider all the variables in the LOW model \eqref{eqlow} with no a priori knowledge of which variables are in the true model. And we compare our results with the aforementioned scenario 1 of LOW: true propensity score and regression models, with $\mu=\alpha_0+\sum_{k=1}^p\alpha_k X_k$ and  $\pi=\Lambda\left ( \beta_0+\sum_{k=1}^p\beta_k X_k \right )$. Notice this would be the best case scenario for LOW. Tables \ref{T7} shows the simulation results. From the table we see that HDCATE performs quite closely to LOW. This is due to the fact that the lasso method selects most influential variables. 

\begin{table}[H]
	\centering
	\caption{Simulation results: LOW's DGP, non-sparse model}
	\label{T7}
	\renewcommand\arraystretch{0.6}
	\begin{tabular}{ccc|ccc|ccc|cccc}
		\hline
		& & & \multicolumn{3}{c|}{cross-fitting} & \multicolumn{3}{c|}{full sample}& \multicolumn{3}{c}{LOW-Best} & \\ \hline
		p & N & Conf. level & EMP & Mcri & Sdcri & EMP & Mcri & Sdcri & EMP & Mcri & Sdcri & Gumbel \\ \hline
		\multirow{6}{*}{p=10} & \multirow{3}{*}{N=500} & 99\% & 0.980 & 3.291 & 0.091 & 0.975 & 3.293 & 0.102 & 0.986 & 3.290 & 0.091 & 0.999 \\
		& & 95\% & 0.935 & 2.752 & 0.108 & 0.930 & 2.753 & 0.111 & 0.939 & 2.750 & 0.107 & 0.999 \\
		& & 90\% & 0.885 & 2.477 & 0.120 & 0.883 & 2.479 & 0.123 & 0.887 & 2.474 & 0.118 & 0.995 \\ \cline{2-13}
		& \multirow{3}{*}{N=2000} & 99\% & 0.983 & 3.296 & 0.087 & 0.977 & 3.298 & 0.088 & 0.988 & 3.299 & 0.090 & 1.000 \\ 
		& & 95\% & 0.939 & 2.760 & 0.100 & 0.933 & 2.761 & 0.103 & 0.941 & 2.761 & 0.106 & 1.000 \\ 
		& & 90\% & 0.891 & 2.480 & 0.109 & 0.889 & 2.483 & 0.112 & 0.880 & 2.486 & 0.117 & 0.997 \\  \hline
		\multirow{6}{*}{p=30} & \multirow{3}{*}{N=500} & 99\% & 0.977 & 3.252 & 0.098 & 0.972 & 3.254 & 0.099 & 0.993 & 3.249 & 0.096 & 1.000 \\ 
		& & 95\% & 0.939 & 2.760 & 0.115 & 0.935 & 2.763 & 0.117 & 0.960 & 2.754  & 0.113 & 0.999 \\ 
		& & 90\% & 0.880 & 2.476 & 0.118 & 0.877 & 2.479 & 0.121 & 0.915  & 2.478  & 0.124  & 0.997 \\  \cline{2-13}
		& \multirow{3}{*}{N=2000} & 99\% & 0.984 & 3.286 & 0.077 & 0.980 & 3.289 & 0.081 & 0.990 &  3.296 & 0.085  & 1.000  \\ 
		& & 95\% & 0.945 & 2.747 & 0.090 & 0.942 & 2.749 & 0.093 & 0.950 & 2.757 & 0.100 & 1.000 \\ 
		& & 90\% & 0.891 & 2.475 & 0.101 & 0.885 & 2.477 & 0.103 & 0.893 & 2.481 & 0.110 & 0.998 \\ \hline
		
	\end{tabular}
	\begin{tablenotes}
		\setlength{\baselineskip}{7pt}
		\scriptsize
		\item \textit{Notes:} The nominal coverage probabilities that we consider are 99\%,
		95\% and 90\%. We report the simulation results of HDCATE for both cross-fitting and full-sample estimators. LOW-Best is the ``true propensity score model, true regression model" of LOW. We compute the empirical coverage (``EMP"), the
		mean critical value (``Mcri"), and the standard deviation of critical value (``Sdcri").
		We also compute the coverage probabilities of
		the confidence band based on the critical values computed by the
		Gumbel approximation (``Gumbel").
	\end{tablenotes}
\end{table}

For design 2, we use the same DGP as in LOW, but now introduce sparsity in a random setting, so that the truly important variables are not known a priori. Specifically, we set the probability of each $X_k$, $k=1, \dots , p$ to be the true variable being 0.6 and 0.5 for the regression model and the propensity model, respectively. We report the simulation results for our method and LOW (using the same model specifications in Table \ref{T7}) in Table \ref{T8}. It is shown that our method performs quite robust in this random sparsity case. When $p=10$, LOW does not perform well due to model misspecifications. Overall, our method provides a more robust solution to the inference of CATE even in relatively low dimensions with unknown sparsity structure in LOW model setting.

\begin{table}[!ht]
	\centering
	\caption{Simulation results: LOW's DGP, random sparse model}
	\label{T8}
	\renewcommand\arraystretch{0.6}
	\begin{tabular}{ccc|ccc|ccc|cccc}
		\hline
		& & & \multicolumn{3}{c|}{cross-fitting} & \multicolumn{3}{c|}{full sample}& \multicolumn{3}{c}{LOW} & \\ \hline
		p & N & Conf. level & EMP & Mcri & Sdcri & EMP & Mcri & Sdcri & EMP & Mcri & Sdcri & Gumbel \\ \hline
		\multirow{6}{*}{p=10} & \multirow{3}{*}{N=500} & 99\% & 0.973 &  3.289  &0.091 & 0.970 & 3.291 & 0.098 & 0.734  & 3.288   &  0.084 &  0.998   \\
		& & 95\% & 0.936 &  2.748  & 0.107 & 0.930 & 2.751 & 0.113 &  0.504 & 2.748 & 0.100 &0.996    \\
		& & 90\% & 0.876 &  2.472  & 0.117 & 0.871 & 2.474 & 0.119 &   0.352 & 2.471 &  0.109 &  0.982   \\ \cline{2-13}
		& \multirow{3}{*}{N=2000} & 99\% & 0.975 & 3.291 & 0.094 & 0.971 & 3.295 & 0.098 &  0.136 & 3.312  & 0.096 & 0.992    \\ 
		& & 95\% & 0.936 & 2.751 & 0.111 & 0.932 & 2.758 & 0.115 &  0.036 &  2.775 &0.113 &  0.904  \\ 
		& & 90\% & 0.882 & 2.475 & 0.122 & 0.877 & 2.479 & 0.125 &  0.016  &  2.502 & 0.124 &  0.786  \\  \hline
		\multirow{6}{*}{p=30} & \multirow{3}{*}{N=500} & 99\% & 0.968 & 3.293 & 0.102 & 0.964 & 3.298 & 0.104 & 0.988   & 3.301   & 0.113  &0.998 \\ 
		& & 95\% & 0.930 & 2.760 & 0.115 & 0.926 & 2.755 & 0.118 & 0.960 & 2.762  & 0.132   & 0.998  \\ 
		& & 90\% & 0.871 & 2.476 & 0.118 & 0.867 & 2.480 & 0.124 & 0.938  & 2.488 & 0.145   &0.996   \\  \cline{2-13}
		& \multirow{3}{*}{N=2000} & 99\% & 0.972 & 3.286 & 0.082 & 0.969 & 3.289 & 0.085 & 0.985 &  3.298 & 0.087 & 1.000  \\ 
		& & 95\% & 0.934 & 2.751 & 0.092 & 0.929 & 2.753 & 0.097 &  0.956&  2.759 & 0.103 & 1.000 \\ 
		& & 90\% & 0.878 & 2.480 & 0.103 & 0.873 & 2.483 & 0.107 &  0.884 &  2.484  & 0.113 &  0.998 \\ \hline
		
	\end{tabular}
	\begin{tablenotes}
		\setlength{\baselineskip}{7pt}
		\scriptsize
		\item \textit{Notes:} The nominal coverage probabilities that we consider are 99\%,
		95\% and 90\%. We report the simulation results of HDCATE for both cross-fitting and full-sample estimators. LOW is the double robust method of LOW using model specification of Table \ref{T7}. We compute the empirical coverage (``EMP"), the
		mean critical value (``Mcri"), and the standard deviation of critical value (``Sdcri").
		We also compute the coverage probabilities of
		the confidence band based on the critical values computed by the
		Gumbel approximation (``Gumbel").
	\end{tablenotes}
\end{table}

\begin{table}[!ht]
	\centering
	\caption{Simulation results: DGP 1}
	\label{T9}
	\renewcommand\arraystretch{0.6}
	\begin{tabular}{ccc|ccc|ccc|cccc}
		\hline
		& & & \multicolumn{3}{c|}{cross-fitting} & \multicolumn{3}{c|}{full sample}& \multicolumn{3}{c}{LOW} & \\ \hline
		p & N & Conf. level & EMP & Mcri & Sdcri & EMP & Mcri & Sdcri & EMP & Mcri & Sdcri & Gumbel \\ \hline
		\multirow{6}{*}{p=10} & \multirow{3}{*}{N=500} & 99\% & 0.982 & 3.293 & 0.081 & 0.978 & 3.285 & 0.086 & 0.284 &  3.283 &  0.078  &0.996  \\
		& & 95\% & 0.939 & 2.741 & 0.101 & 0.935 & 2.746 & 0.099 & 0.122  & 2.741 & 0.091 & 0.980  \\
		& & 90\% & 0.881 & 2.457 & 0.108 & 0.878 & 2.463 & 0.102 &  0.060 & 2.465&  0.100  & 0.956  \\ \cline{2-13}
		& \multirow{3}{*}{N=2000} & 99\% & 0.989 & 3.285 & 0.084 & 0.985 & 3.291 & 0.081 & 0.014 & 3.298  & 0.082  & 0.974  \\ 
		& & 95\% & 0.945 & 2.763 & 0.102 & 0.942 & 2.755 & 0.106 & 0.010  & 2.760 & 0.096 &0.830 \\ 
		& & 90\% & 0.893 & 2.482 & 0.104 & 0.889 & 2.488 & 0.109 & 0.003 & 2.485 & 0.106  &0.714 \\  \hline
		\multirow{6}{*}{p=30} & \multirow{3}{*}{N=500} & 99\% & 0.979 & 3.272 & 0.088 & 0.976 & 3.278 & 0.090 &  0.194  &3.290 & 0.085 &  1.000  \\ 
		& & 95\% & 0.936 & 2.752 & 0.109 & 0.930 & 2.760 & 0.112 &0.086 &  2.750  &0.101  &  0.988 \\ 
		& & 90\% & 0.870 & 2.470 & 0.109 & 0.867 & 2.479 & 0.116 & 0.044  & 2.474  & 0.111  &0.954   \\  \cline{2-13}
		& \multirow{3}{*}{N=2000} & 99\% & 0.985 & 3.291 & 0.080 & 0.980 & 3.286 & 0.083 & 0.002 &  3.299&  0.085  & 0.948   \\ 
		& & 95\% & 0.943 & 2.752 & 0.098 & 0.940 & 2.743 & 0.095 &0.000 & 2.760 & 0.100 & 0.766 \\ 
		& & 90\% & 0.892 & 2.479 & 0.106 & 0.887 & 2.480 & 0.109 &  0.000 &  2.485  &0.110 & 0.668  \\ \hline
		
	\end{tabular}
	\begin{tablenotes}
		\setlength{\baselineskip}{7pt}
		\scriptsize
		\item \textit{Notes:} The nominal coverage probabilities that we consider are 99\%,
		95\% and 90\%. We report the simulation results of HDCATE for both cross-fitting and full-sample estimators. LOW is the double robust method of LOW using model specification of Table \ref{T7}. We compute the empirical coverage (``EMP"), the
		mean critical value (``Mcri"), and the standard deviation of critical value (``Sdcri").
		We also compute the coverage probabilities of
		the confidence band based on the critical values computed by the
		Gumbel approximation (``Gumbel").
	\end{tablenotes}
\end{table}

\begin{table}[!ht]
	\centering
	\caption{Simulation results: DGP 2, $R^2=0.1$}
	\label{T10}
	\renewcommand\arraystretch{0.6}
	\begin{tabular}{ccc|ccc|ccc|cccc}
		\hline
		& & & \multicolumn{3}{c|}{cross-fitting} & \multicolumn{3}{c|}{full sample}& \multicolumn{3}{c}{LOW} & \\ \hline
		p & N & Conf. level & EMP & Mcri & Sdcri & EMP & Mcri & Sdcri & EMP & Mcri & Sdcri & Gumbel \\ \hline
		\multirow{6}{*}{p=10} & \multirow{3}{*}{N=500} & 99\% & 0.985 & 3.295 & 0.093 & 0.982 & 3.297 & 0.104 & 0.974 & 3.291& 0.087 & 0.996 \\
		& & 95\% & 0.941 & 2.748 & 0.103 & 0.940 & 2.750 & 0.108 & 0.926 & 2.751 & 0.102 & 0.996  \\
		& & 90\% & 0.885 & 2.471 & 0.116 & 0.883 & 2.473 & 0.119 & 0.872  & 2.475  & 0.112  & 0.996  \\ \cline{2-13}
		& \multirow{3}{*}{N=2000} & 99\% & 0.987 & 3.290 & 0.093 & 0.983 & 3.294 & 0.095 & 0.984  &3.305  & 0.090  & 1.000  \\ 
		& & 95\% & 0.943 & 2.756 & 0.104 & 0.941 & 2.763 & 0.108 & 0.912 & 2.768 & 0.106 &0.998  \\ 
		& & 90\% & 0.891 & 2.488 & 0.112 & 0.887 & 2.485 & 0.115 & 0.856 &  2.494  &0.117 &0.990   \\  \hline
		\multirow{6}{*}{p=30} & \multirow{3}{*}{N=500} & 99\% & 0.983 & 3.273 & 0.094 & 0.980 & 3.279 & 0.092 & 0.962  & 3.288  &0.077 & 1.000 \\ 
		& & 95\% & 0.942 & 2.757 & 0.111 & 0.939 & 2.761 & 0.113 & 0.890  & 2.747  & 0.091 &1.000  \\ 
		& & 90\% & 0.880 & 2.479 & 0.107 & 0.877 & 2.475 & 0.110 &0.782  & 2.471  & 0.101 & 0.994   \\  \cline{2-13}
		& \multirow{3}{*}{N=2000} & 99\% & 0.988 & 3.292 & 0.082 & 0.984 & 3.287 & 0.086 &  0.754 &  3.303 & 0.087  & 1.000  \\ 
		& & 95\% & 0.945 & 2.751 & 0.096 & 0.941 & 2.755 & 0.098 & 0.494  & 2.766  & 0.103 & 0.998  \\ 
		& & 90\% & 0.894 & 2.480 & 0.107 & 0.891 & 2.485 & 0.109 &  0.352  & 2.492 & 0.113 & 0.984  \\ \hline
		
	\end{tabular}
	\begin{tablenotes}
		\setlength{\baselineskip}{7pt}
		\scriptsize
		\item \textit{Notes:} The nominal coverage probabilities that we consider are 99\%,
		95\% and 90\%. We report the simulation results of HDCATE for both cross-fitting and full-sample estimators. LOW is the double robust method of LOW using model specification of Table \ref{T7}. We compute the empirical coverage (``EMP"), the
		mean critical value (``Mcri"), and the standard deviation of critical value (``Sdcri").
		We also compute the coverage probabilities of
		the confidence band based on the critical values computed by the
		Gumbel approximation (``Gumbel").
	\end{tablenotes}
\end{table}

Tables \ref{T9} and \ref{T10} show the simulation results for DGPs 1 and 2 in Section \ref{ssec:dgp}, respectively. Our method is robust to different types of sparsity (strict and approximate) and performs well in the relatively low-dimensional models with $p=10,30$. LOW has serious size distortions in both cases.

\bibliographystyle{chicago}
\bibliography{cate_cf}

\begin{thebibliography}{}

\bibitem[\protect\citeauthoryear{Abrevaya, Hsu, and Lieli}{Abrevaya
  et~al.}{2015}]{AHL15}
Abrevaya, J., Y.-C. Hsu, and R.~P. Lieli (2015).
\newblock Estimating conditional average treatment effects.
\newblock {\em Journal of Business \& Economic Statistics\/}~{\em 33\/}(4),
  485--505.

\bibitem[\protect\citeauthoryear{Ai and Chen}{Ai and Chen}{2003}]{AC03}
Ai, C. and X.~Chen (2003).
\newblock {Efficient estimation of models with conditional moment restrictions
  containing unknown functions}.
\newblock {\em Econometrica\/}~{\em 71\/}(6), 1795--1843.

\bibitem[\protect\citeauthoryear{Almond and Currie}{Almond and
  Currie}{2011}]{AC11}
Almond, D. and J.~Currie (2011).
\newblock Killing me softly: The fetal origins hypothesis.
\newblock {\em Journal of Economic Perspectives\/}~{\em 25\/}(3), 153--172.

\bibitem[\protect\citeauthoryear{Begun, Hall, Huang, and Wellner}{Begun
  et~al.}{1983}]{B83}
Begun, J.~M., W.~Hall, W.-M. Huang, and J.~A. Wellner (1983).
\newblock Information and asymptotic efficiency in parametric-nonparametric
  models.
\newblock {\em The Annals of Statistics\/}~{\em 11\/}(2), 432--452.

\bibitem[\protect\citeauthoryear{Belloni, Chernozhukov, Fern{\'a}ndez-Val, and
  Hansen}{Belloni et~al.}{2017}]{BCFH13}
Belloni, A., V.~Chernozhukov, I.~Fern{\'a}ndez-Val, and C.~Hansen (2017).
\newblock Program evaluation with high-dimensional data.
\newblock {\em Econometrica\/}~{\em 85\/}(1), 233--298.

\bibitem[\protect\citeauthoryear{Belloni, Chernozhukov, and Hansen}{Belloni
  et~al.}{2014a}]{BCH14b}
Belloni, A., V.~Chernozhukov, and C.~Hansen (2014a).
\newblock High-dimensional methods and inference on structural and treatment
  effects.
\newblock {\em Journal of Economic Perspectives\/}~{\em 28\/}(2), 29--50.

\bibitem[\protect\citeauthoryear{Belloni, Chernozhukov, and Hansen}{Belloni
  et~al.}{2014b}]{BCH14a}
Belloni, A., V.~Chernozhukov, and C.~Hansen (2014b).
\newblock Inference on treatment effects after selection among high-dimensional
  controls.
\newblock {\em The Review of Economic Studies\/}~{\em 81\/}(2), 608--650.

\bibitem[\protect\citeauthoryear{Bickel, Klaassen, Ritov, and Wellner}{Bickel
  et~al.}{1993}]{B93}
Bickel, P., C.~Klaassen, Y.~Ritov, and J.~Wellner (1993).
\newblock {\em Efficient and Adaptive Estimation for Semiparametric Models}.
\newblock Springer-Verlag New York.

\bibitem[\protect\citeauthoryear{Black, Devereux, and Salvanes}{Black
  et~al.}{2007}]{BDS07}
Black, S., P.~Devereux, and K.~Salvanes (2007).
\newblock From the cradle to the labor market? the effect of birth weight on
  adult outcomes.
\newblock {\em The Quarterly Journal of Economics\/}~{\em 122\/}(1), 409--439.

\bibitem[\protect\citeauthoryear{Calonico, Cattaneo, and Farrell}{Calonico
  et~al.}{2018}]{c18}
Calonico, S., M.~D. Cattaneo, and M.~H. Farrell (2018).
\newblock On the effect of bias estimation on coverage accuracy in
  nonparametric inference.
\newblock {\em Journal of the American Statistical Association\/}~{\em
  113\/}(522), 767--779.

\bibitem[\protect\citeauthoryear{Chernozhukov, Chetverikov, Demirer, Duflo,
  Hansen, and Newey}{Chernozhukov et~al.}{2017}]{CC17}
Chernozhukov, V., D.~Chetverikov, M.~Demirer, E.~Duflo, C.~Hansen, and W.~Newey
  (2017).
\newblock Double/debiased/neyman machine learning of treatment effects.
\newblock {\em American Economic Review Papers and Proceedings\/}~{\em
  107\/}(5), 261--65.

\bibitem[\protect\citeauthoryear{Chernozhukov, Chetverikov, Demirer, Duflo,
  Hansen, Newey, and Robins}{Chernozhukov et~al.}{2018}]{CC18}
Chernozhukov, V., D.~Chetverikov, M.~Demirer, E.~Duflo, C.~Hansen, W.~Newey,
  and J.~Robins (2018).
\newblock Double/debiased machine learning for treatment and structural
  parameters.
\newblock {\em The Econometrics Journal\/}~{\em 21\/}(1), C1--C68.

\bibitem[\protect\citeauthoryear{Chernozhukov, Chetverikov, and
  Kato}{Chernozhukov et~al.}{2014a}]{CCK14-anti}
Chernozhukov, V., D.~Chetverikov, and K.~Kato (2014a).
\newblock Anti-concentration and honest, adaptive confidence bands.
\newblock {\em The Annals of Statistics\/}~{\em 42\/}(5), 1787--1818.

\bibitem[\protect\citeauthoryear{Chernozhukov, Chetverikov, and
  Kato}{Chernozhukov et~al.}{2014b}]{CCK14}
Chernozhukov, V., D.~Chetverikov, and K.~Kato (2014b).
\newblock Gaussian approximation of suprema of empirical processes.
\newblock {\em The Annals of Statistics\/}~{\em 42\/}(4), 1564--1597.

\bibitem[\protect\citeauthoryear{Chernozhukov and Semenova}{Chernozhukov and
  Semenova}{2019}]{CS19}
Chernozhukov, V. and V.~Semenova (2019).
\newblock Simultaneous inference for best linear predictor of the conditional
  average treatment effect and other structural functions.
\newblock {\em Working paper, Department of Economics, MIT\/}.

\bibitem[\protect\citeauthoryear{Fan}{Fan}{1992}]{f92}
Fan, J. (1992).
\newblock Design-adaptive nonparametric regression.
\newblock {\em Journal of the American statistical Association\/}~{\em
  87\/}(420), 998--1004.

\bibitem[\protect\citeauthoryear{Fan}{Fan}{1993}]{f93}
Fan, J. (1993).
\newblock Local linear regression smoothers and their minimax efficiencies.
\newblock {\em The Annals of Statistics\/}~{\em 21\/}(1), 196--216.

\bibitem[\protect\citeauthoryear{Fan and Gijbels}{Fan and Gijbels}{1992}]{fg92}
Fan, J. and I.~Gijbels (1992).
\newblock Variable bandwidth and local linear regression smoothers.
\newblock {\em The Annals of Statistics\/}~{\em 20\/}(4), 2008--2036.

\bibitem[\protect\citeauthoryear{Farrell}{Farrell}{2015}]{F15}
Farrell, M.~H. (2015).
\newblock Robust inference on average treatment effects with possibly more
  covariates than observations.
\newblock {\em Journal of Econometrics\/}~{\em 189\/}(1), 1--23.

\bibitem[\protect\citeauthoryear{Farrell, Liang, and Misra}{Farrell
  et~al.}{2018}]{f18}
Farrell, M.~H., T.~Liang, and S.~Misra (2018).
\newblock Deep neural networks for estimation and inference.
\newblock {\em arXiv preprint arXiv:1809.09953\/}.

\bibitem[\protect\citeauthoryear{Firpo}{Firpo}{2007}]{f07}
Firpo, S. (2007).
\newblock Efficient semiparametric estimation of quantile treatment effects.
\newblock {\em Econometrica\/}~{\em 75\/}(1), 259--276.

\bibitem[\protect\citeauthoryear{Hahn}{Hahn}{1998}]{H98}
Hahn, J. (1998).
\newblock On the role of the propensity score in efficient semiparametric
  estimation of average treatment effects.
\newblock {\em Econometrica\/}~{\em 66\/}(2), 315--331.

\bibitem[\protect\citeauthoryear{Hirano, Imbens, and Ridder}{Hirano
  et~al.}{2003}]{HIR03}
Hirano, K., G.~W. Imbens, and G.~Ridder (2003).
\newblock Efficient estimation of average treatment effects using the estimated
  propensity score.
\newblock {\em Econometrica\/}~{\em 71\/}(4), 1161--1189.

\bibitem[\protect\citeauthoryear{Imbens and Wooldridge}{Imbens and
  Wooldridge}{2009}]{IW09}
Imbens, G.~W. and J.~M. Wooldridge (2009).
\newblock Recent developments in the econometrics of program evaluation.
\newblock {\em Journal of economic literature\/}~{\em 47\/}(1), 5--86.

\bibitem[\protect\citeauthoryear{Kennedy, Ma, McHugh, and Small}{Kennedy
  et~al.}{2017}]{K17}
Kennedy, E.~H., Z.~Ma, M.~D. McHugh, and D.~S. Small (2017).
\newblock Non-parametric methods for doubly robust estimation of continuous
  treatment effects.
\newblock {\em Journal of the Royal Statistical Society: Series B (Statistical
  Methodology)\/}~{\em 79\/}(4), 1229--1245.

\bibitem[\protect\citeauthoryear{Kramer}{Kramer}{1987}]{K87}
Kramer, M.~S. (1987).
\newblock Intrauterine growth and gestational duration determinants.
\newblock {\em Pediatrics\/}~{\em 80\/}(4), 502--511.

\bibitem[\protect\citeauthoryear{Lechner}{Lechner}{2019}]{L19}
Lechner, M. (2019).
\newblock Modified causal forests for estimating heterogeneous causal effects.
\newblock {\em arXiv preprint arXiv:1812.09487\/}.

\bibitem[\protect\citeauthoryear{Lee, Okui, and Whang}{Lee
  et~al.}{2017}]{LOW16}
Lee, S., R.~Okui, and Y.-J. Whang (2017).
\newblock Doubly robust uniform confidence band for the conditional average
  treatment effect function.
\newblock {\em Journal of Applied Econometrics\/}~{\em 32\/}(7), 1207--1225.

\bibitem[\protect\citeauthoryear{Lee}{Lee}{2018}]{lee18}
Lee, Y.-Y. (2018).
\newblock Partial mean processes with generated regressors: Continuous
  treatment effects and nonseparable models.
\newblock {\em arXiv preprint arXiv: 1811.00157\/}.

\bibitem[\protect\citeauthoryear{Li and Racine}{Li and Racine}{2007}]{lr07}
Li, Q. and J.~S. Racine (2007).
\newblock {\em Nonparametric Econometrics: Theory and Practice}.
\newblock Princeton University Press.

\bibitem[\protect\citeauthoryear{Luedtke and van~der Laan}{Luedtke and van~der
  Laan}{2016a}]{Luedtke2016b}
Luedtke, A.~R. and M.~J. van~der Laan ({2016}a).
\newblock Statistical inference for the mean outcome under a possibly
  non-unique optimal treatment strategy.
\newblock {\em The Annals of Statistics\/}~{\em 44\/}(2), 713.

\bibitem[\protect\citeauthoryear{Luedtke and van~der Laan}{Luedtke and van~der
  Laan}{2016b}]{Luedtke2016a}
Luedtke, A.~R. and M.~J. van~der Laan ({2016}b).
\newblock Super-learning of an optimal dynamic treatment rule.
\newblock {\em The International Journal of Biostatistics\/}~{\em 12\/}(1),
  305--332.

\bibitem[\protect\citeauthoryear{Newey}{Newey}{1994a}]{N94}
Newey, W.~K. (1994a).
\newblock The asymptotic variance of semiparametric estimators.
\newblock {\em Econometrica\/}~{\em 62\/}(6), 1349--1382.

\bibitem[\protect\citeauthoryear{Newey}{Newey}{1994b}]{n94b}
Newey, W.~K. (1994b).
\newblock Kernel estimation of partial means and a general variance estimator.
\newblock {\em Econometric Theory\/}~{\em 10\/}(2), 1--21.

\bibitem[\protect\citeauthoryear{Nie and Wager}{Nie and Wager}{2017}]{NW17}
Nie, X. and S.~Wager (2017).
\newblock Quasi-oracle estimation of heterogeneous treatment effects.
\newblock {\em arXiv preprint arXiv: 1712.04912\/}.

\bibitem[\protect\citeauthoryear{Pfanzagl}{Pfanzagl}{1990}]{P90}
Pfanzagl, J. (1990).
\newblock {\em Estimation in Semiparametric Models}.
\newblock Springer.

\bibitem[\protect\citeauthoryear{Robins}{Robins}{2004}]{R04}
Robins, J.~M. (2004).
\newblock Optimal structural nested models for optimal sequential decisions.
\newblock In {\em Proceedings of the second Seattle Symposium in
  Biostatistics}, pp.\  189--326. Springer.

\bibitem[\protect\citeauthoryear{Robins, Li, Mukherjee, Tchetgen, and van~der
  Vaart}{Robins et~al.}{2017}]{R17}
Robins, J.~M., L.~Li, R.~Mukherjee, E.~T. Tchetgen, and A.~van~der Vaart
  (2017).
\newblock Minimax estimation of a functional on a structured high-dimensional
  model.
\newblock {\em The Annals of Statistics\/}~{\em 45\/}(5), 1951--1987.

\bibitem[\protect\citeauthoryear{Robins and Rotnitzky}{Robins and
  Rotnitzky}{1995}]{RR95}
Robins, J.~M. and A.~Rotnitzky (1995).
\newblock Semiparametric efficiency in multivariate regression models with
  missing data.
\newblock {\em Journal of the American Statistical Association\/}~{\em
  90\/}(429), 122--129.

\bibitem[\protect\citeauthoryear{Rosenbaum and Rubin}{Rosenbaum and
  Rubin}{1983}]{RR83}
Rosenbaum, P.~R. and D.~B. Rubin (1983).
\newblock The central role of the propensity score in observational studies for
  causal effects.
\newblock {\em Biometrika\/}~{\em 70\/}(1), 41--55.

\bibitem[\protect\citeauthoryear{Silverman}{Silverman}{1986}]{Silverman1986}
Silverman, B.~W. (1986).
\newblock {\em Density Estimation for Statistics and Data Analysis}.
\newblock Chapman and Hall.

\bibitem[\protect\citeauthoryear{Su, Ura, and Zhang}{Su et~al.}{2019}]{suz19}
Su, L., T.~Ura, and Y.~Zhang (2019).
\newblock Non-separable models with high-dimensional data.
\newblock {\em Journal of Econometrics\/}~{\em 212\/}(2), 646--677.

\bibitem[\protect\citeauthoryear{Tsiatis}{Tsiatis}{2007}]{t07}
Tsiatis, A. (2007).
\newblock {\em Semiparametric Theory and Missing Data}.
\newblock Springer Science \& Business Media.

\bibitem[\protect\citeauthoryear{{van der Laan} and Robins}{{van der Laan} and
  Robins}{2003}]{VR03}
{van der Laan}, M. and J.~M. Robins (2003).
\newblock {\em Unified Methods for Censored Longitudinal Data and Causality}.
\newblock Springer Science \& Business Media.

\bibitem[\protect\citeauthoryear{{van der Laan} and Rose}{{van der Laan} and
  Rose}{2011}]{VR11}
{van der Laan}, M. and S.~Rose (2011).
\newblock {\em Targeted learning: causal inference for observational and
  experimental data}.
\newblock Springer Science \& Business Media.

\bibitem[\protect\citeauthoryear{{van der Laan} and Rubin}{{van der Laan} and
  Rubin}{2006}]{V06}
{van der Laan}, M. and D.~Rubin (2006).
\newblock Targeted maximum likelihood learning.
\newblock {\em The International Journal of Biostatistics\/}~{\em 2\/}(1).

\bibitem[\protect\citeauthoryear{van~der Laan}{van~der Laan}{2013}]{L13}
van~der Laan, M.~J. (2013).
\newblock Targeted learning of an optimal dynamic treatment, and statistical
  inference for its mean outcome.
\newblock {\em U.C. Berkeley Division of Biostatistics Working Paper Series.
  Working Paper 317.\/}.

\bibitem[\protect\citeauthoryear{van~der Vaart}{van~der Vaart}{2000}]{V00}
van~der Vaart, A.~W. (2000).
\newblock {\em Asymptotic Statistics}, Volume~3.
\newblock Cambridge University Press.

\bibitem[\protect\citeauthoryear{van~der Vaart and Wellner}{van~der Vaart and
  Wellner}{1996}]{VW96}
van~der Vaart, A.~W. and J.~A. Wellner (1996).
\newblock {\em Weak Convergence and Empirical Processes: With Applications to
  Statistics}.
\newblock Springer.

\bibitem[\protect\citeauthoryear{Wager and Athey}{Wager and Athey}{2018}]{WA18}
Wager, S. and S.~Athey (2018).
\newblock Estimation and inference of heterogeneous treatment effects using
  random forests.
\newblock {\em Journal of the American Statistical Association\/}~{\em
  113\/}(523), 1228--1242.

\bibitem[\protect\citeauthoryear{Zimmert and Lechner}{Zimmert and
  Lechner}{2019}]{ZL19}
Zimmert, M. and M.~Lechner (2019).
\newblock Nonparametric estimation of causal heterogeneity under
  high-dimensional confounding.
\newblock {\em arXiv preprint arXiv:1908.08779\/}.

\end{thebibliography}

\end{document}